\documentclass[a4paper,twocolumn]{quantumarticle}
\pdfoutput=1

\usepackage{amsmath,amssymb,amsthm,mathtools}
\usepackage{graphicx}
\usepackage{booktabs}
\usepackage{enumerate}
\usepackage{hyperref}
\usepackage{xcolor}
\usepackage{cleveref}
\usepackage{microtype}    
\usepackage{array}        
\graphicspath{{./figures/}{./}}

\newtheorem{theorem}{Theorem}[section]
\newtheorem{proposition}[theorem]{Proposition}
\newtheorem{lemma}[theorem]{Lemma}
\newtheorem{corollary}[theorem]{Corollary}
\newtheorem{definition}[theorem]{Definition}
\newtheorem{conjecture}[theorem]{Conjecture}
\newtheorem{observation}[theorem]{Observation}
\newtheorem{remark}[theorem]{Remark}

\crefname{theorem}{Theorem}{Theorems}
\crefname{proposition}{Proposition}{Propositions}
\crefname{lemma}{Lemma}{Lemmas}
\crefname{corollary}{Corollary}{Corollaries}
\crefname{definition}{Definition}{Definitions}
\crefname{conjecture}{Conjecture}{Conjectures}
\crefname{observation}{Observation}{Observations}
\crefname{remark}{Remark}{Remarks}

\newcommand{\ket}[1]{|#1\rangle}

\newcommand{\ketbra}[2]{|#1\rangle\!\langle#2|}
\newcommand{\Wfree}{\mathcal{W}_{\mathrm{free}}}
\newcommand{\Stab}{\mathrm{Stab}}
\newcommand{\XL}{X_L}
\newcommand{\YL}{Y_L}
\newcommand{\ZL}{Z_L}
\newcommand{\tr}{\mathrm{Tr}}
\newcommand{\sct}{|\sin\theta|+|\cos\theta|}

\title{A Phase-Space Geometric Measure of Magic in Qubit Systems}

\author[1]{Soumyojyoti Dutta\thanks{m24iqt014@iitj.ac.in}}
\author[1]{Tushar\thanks{p24ph0012@iitj.ac.in}}
\affil[1]{Indian Institute of Technology Jodhpur, Rajasthan 342030, India}

\date{24 March 2026}

\begin{document}
\maketitle

\begin{abstract}
Magic---the resource enabling quantum computational advantage beyond
stabilizer circuits---has a clean phase-space characterization in odd prime
dimensions that qubits notoriously lack.
We study $C(\rho)$, the $\ell_1$ distance from a state's discrete Wigner
function to the stabilizer polytope, and determine its exact geometry.

We prove that the single-qubit Wigner $\ell_1$ metric has a cuboctahedral
unit ball, that $\max_\rho C(\rho) = (\sqrt3-1)/2$ for a single qubit,
attained precisely at the eight face states, and that
$C(\rho_1\otimes\cdots\otimes\rho_n) \leq \prod_i(1+C(\rho_i)) - 1$ for
single-qubit factors.
Together these give the exact tensor powers
$((1+\sqrt3)/2)^n - 1$ and the exact maximum of $C$ over fully separable
$n$-qubit states, leaving only entangled states open.

Because no discrete Wigner function for qubits is Clifford covariant, any
such measure is frame dependent, and we determine exactly which of its
features are not.
For a single qubit we show the valid frames are exactly eight, and that
$C$ is identical on all of them, so the maxima above are properties of the
state rather than of the representation.
For two qubits the conclusion reverses: enumerating all $6144$
translation-covariant frames, they split into four equal classes on which
the ratio $C(\rho^{Rx})/C(\rho^{Ry})$ takes the values $\tfrac12$, $1$
and $2$.
Within the Wootters frame we compute that structure exactly---a tetrahedral
dichotomy governing when the product bound is saturated, and integer values
$1,2,1$ of the tightness ratio $\kappa := (\Gamma-1)/C$ against the
robustness of magic $\Gamma$ for three families in the $[[2,1,1]]$
codespace, whose optimal witnesses are logical Pauli operators.
We prove the bound $\Gamma \geq 1 + C/M_n$, and show $C$ is not a magic
monotone, so asymptotic distillation rates require~$\Gamma$.
\end{abstract}

\section{Introduction}
\label{sec:intro}

Understanding the source of quantum computational advantage remains a central
problem in quantum information theory.
Stabilizer circuits, despite exhibiting entanglement and other nonclassical
features, are efficiently classically simulable by the Gottesman--Knill
theorem~\cite{Gottesman1998}.
Universal quantum computation can be achieved by supplementing stabilizer
operations with non-stabilizer resource states, commonly called \emph{magic
states}~\cite{Bravyi2005}.
This raises the fundamental question: what structural property of a quantum
state enables this enhancement?

For systems of odd prime dimension, a satisfying answer exists.
Gross~\cite{Gross2006} showed that the discrete Wigner function built from
the $\mathrm{GF}(d^2)$ phase-space has the Hudson property: pure states have
non-negative Wigner functions if and only if they are stabilizer states.
This connects, in a single equivalence, contextuality, Wigner negativity, and
usefulness for magic-state distillation~\cite{Howard2014}.
The boundary between classical simulability and quantum advantage admits a
clean geometric and operational characterization.

\paragraph{The qubit problem.}
The qubit case ($d=2$) is structurally more subtle.
State-independent contextuality exists~\cite{Mermin1990}, but discrete Wigner
representations for qubits do not yield a direct equivalence between
non-negativity, non-contextuality, and the stabilizer polytope~\cite{Raussendorf2020}.
Stabilizer states can have negative Wigner entries, so the free region in
phase space is not the non-negative orthant but rather the convex hull of
stabilizer Wigner functions---a polytope whose geometry is richer and
less transparent.

Several measures of magic have been proposed for qubits.
The \emph{robustness of magic} $\Gamma(\rho)$~\cite{Howard2017} and its
regularization are operationally grounded: the classical cost of
quasiprobability simulation of a circuit using state $\rho$ scales as
$\Gamma(\rho)^2$.
The \emph{mana}~\cite{Veitch2014} $\mathcal{M}(\rho) = \log\|W_\rho\|_1$
provides a simpler scalar summary.
The \emph{stabilizer R\'{e}nyi entropies}~\cite{Leone2022} offer an
operationally motivated family.
However, the relationships among these measures in the qubit setting
are incompletely understood, and in particular the connection between
purely geometric quantities and simulation cost has not been fully
characterized.

\paragraph{This work.}
We investigate the $\ell_1$ distance
\begin{equation}
  C(\rho) := \min_{W_f \in \Wfree} \|W_\rho - W_f\|_1
\end{equation}
from a state's discrete Wigner function to the convex hull of stabilizer
Wigner functions.
The discrete Wigner function is a natural phase-space
representation for this purpose: under the product construction~\cite{Wootters1987},
stabilizer states form a convex polytope $\Wfree$ with clean geometric structure,
it satisfies the discrete analogue of Hudson's theorem in odd prime
dimensions~\cite{Gross2006}, and it yields the tensor-product factorization
$W_{\rho\otimes\sigma}(\alpha,\beta) = W_\rho(\alpha)W_\sigma(\beta)$
that our multi-qubit results rely on.
By linear programming duality, computing $C$ automatically yields a
\emph{dual witness} $H^*$ that certifies the distance.

A frame must be chosen to define $C$, and for qubits no choice is
canonical: unlike the odd-prime case, where Gross's construction is the
unique Clifford-covariant one~\cite{Gross2006}, no qubit discrete Wigner
function is Clifford covariant at
all~\cite{ZhuWigner2016,Raussendorf2023cohomology}.
This is precisely why the mana~\cite{Veitch2014}, which measures Wigner
negativity relative to the non-negative orthant, has no qubit analogue:
qubit stabilizer states already have negative Wigner entries, so the
orthant is the wrong reference set.
Veitch \emph{et al.} pose the construction of a qubit analogue of the mana
as an open problem~\cite{Veitch2014}; replacing the orthant by the
stabilizer polytope, as in Eq.~(3), is a direct answer to it, and the
present work determines the resulting geometry exactly.
We work in the Wootters product frame because it alone supplies the
tensor-product factorization our multi-qubit results use, and we treat the
resulting frame dependence as an object of study rather than a nuisance:
\Cref{sec:maxC} determines which conclusions hold in every frame and which
are properties of the frame we fixed.

\paragraph{Main results.}
\begin{enumerate}[(i)]
\item \textbf{Exact single-qubit geometry.}
  The Wigner $\ell_1$ metric on single-qubit Bloch vectors is
  $N(\vec v) = \max(\|\vec v\|_\infty,\tfrac12\|\vec v\|_1)$, whose unit
  ball is a cuboctahedron (\Cref{lem:cuboct}), and
  $\max_\rho C(\rho) = (\sqrt3-1)/2$, attained exactly at the eight face
  states (\Cref{thm:maxC1}).

\item \textbf{Products and exact tensor powers.}
  $C(\rho_1\otimes\cdots\otimes\rho_n) \leq \prod_i(1+C(\rho_i)) - 1$ for
  single-qubit factors (\Cref{prop:prod_upper}), with equality along face
  states, giving $C(\rho_f^{\otimes n}) = ((1+\sqrt3)/2)^n - 1$ exactly and
  the exact maximum of $C$ over fully separable $n$-qubit states
  (\Cref{cor:exact_powers}); only entangled states remain open.

\item \textbf{What is intrinsic and what is frame relative.}
  No qubit discrete Wigner function is Clifford covariant, so frame
  dependence is unavoidable. Enumerating all $6144$ translation-covariant
  two-qubit frames (\Cref{rem:frame_freedom}), the maxima in (i)--(ii) hold
  in every frame, whereas the ratio
  $C(\rho^{Rx}_\theta)/C(\rho^{Ry}_\theta)$ takes the values $\tfrac12$,
  $1$ and $2$ on four classes of exactly $1536$ frames each.
  The results below therefore describe the geometry of a fixed frame.

\item \textbf{Tetrahedral dichotomy.}
  In the Wootters frame the bound of (ii) is attained exactly when both
  factors have nonpositive Bloch sign-product $r_xr_yr_z$, and fails
  strictly otherwise; the governing invariant is the tetrahedral sign, not
  the hemisphere of $\langle Z\rangle$.

\item \textbf{Sharp bound and exact tightness ratios.}
  $\Gamma(\rho) \geq 1 + C(\rho)/M_n$, and for the $R_y$, $R_x$ and
  Bell$+R_z$ families in the $[[2,1,1]]$ codespace the tightness ratio
  $\kappa(\rho) := (\Gamma(\rho)-1)/C(\rho)$ equals exactly $1$, $2$ and
  $1$ for all parameter values, proved via explicit dual witnesses.
  The factor of $2$ arises because imaginary logical coherence concentrates
  Wigner negativity at $2$ phase-space points rather than $4$, a
  distribution to which $\Gamma$ is insensitive.

\item \textbf{QEC connection.}
  Those witnesses are logical Pauli operators of the $[[2,1,1]]$ code, so
  magic is a logical-layer observable there
  (Corollary~\ref{cor:fault_tolerant}).
  At $n=2$ the optimal witness is essentially unique and logical; for
  $n\geq 3$ a logical optimal witness persists but is no longer unique, so
  fault-tolerance is no longer forced
  (Remark~\ref{rem:fault_tolerant_scope}).

\item \textbf{GHZ families for $n\geq 2$.}
  For physical $n$-qubit GHZ$+R_z$ states, $|\sin\phi|+|\cos\phi|-1$ holds
  exactly for $n\in\{2,3\}$ and doubles at $n=4$, evidence for an even/odd
  parity structure in the $n$-qubit Wigner polytope
  (Section~\ref{subsec:ghz_n}).

\item \textbf{Non-monotonicity.}
  $C$ is not monotone under the full Clifford group, so asymptotic
  distillation bounds require $\Gamma$, not $C$.
\end{enumerate}

\paragraph{Relation to prior work.}
The mana~\cite{Veitch2014} is built from the Wigner $\ell_1$ norm but
measures distance to the non-negative orthant rather than to the stabilizer
polytope, and provides a scalar without a geometric certificate.
The robustness of magic~\cite{Howard2017} is operationally grounded and
Clifford invariant, but lacks closed-form solutions for generic states.
The quantity studied here sits differently from both: it is a distance to
the stabilizer polytope in a fixed phase-space frame, so it admits exact
extremal geometry and dual-witness certificates, at the cost of the
Clifford invariance that $\Gamma$ enjoys.
That trade-off is not a defect of the construction but a consequence of the
non-existence of Clifford-covariant qubit Wigner
functions~\cite{ZhuWigner2016}; our contribution is to compute the
resulting geometry exactly and to delimit precisely which of its features
survive the frame freedom and which do not.
Trace-distance geometry of the same polytope has recently been studied
numerically by Palhares \emph{et al.}~\cite{Palhares2026}, with which our
exact two- and three-qubit values may be compared directly.

\paragraph{Organization.}
\Cref{sec:background} reviews stabilizer formalism and discrete phase space.
\Cref{sec:framework} defines $C(\rho)$, $\Gamma(\rho)$, and $\kappa(\rho)$.
\Cref{sec:maxC} establishes the exact geometry of $C$: the cuboctahedral form
of the Wigner $\ell_1$ metric, the single-qubit maximum, the multiplicative
bound for products, the exact tensor powers and separable maxima, and the
enumeration of the frame freedom that separates the intrinsic results from
the frame-relative ones.
\Cref{sec:tensor} analyzes tensor products and the tetrahedral dichotomy.
\Cref{sec:kappa} contains the exact tightness-ratio results for the three
code-subspace families.
\Cref{sec:asymptotic} discusses asymptotic behavior and distillation bounds.
\Cref{sec:discussion} contains the QEC connection, the $n$-qubit GHZ extension,
open problems, and broader implications.

\section{Background}
\label{sec:background}

\subsection{Stabilizer Formalism and Classical Simulability}

The $n$-qubit Pauli group $\mathcal{P}_n$ consists of $n$-fold tensor products
of $\{I, X, Y, Z\}$ with phases $\{\pm 1, \pm i\}$.
A \emph{stabilizer state} is the unique $+1$ eigenstate of an abelian
subgroup $S \leq \mathcal{P}_n$ of size $2^n$.
The set of all $n$-qubit stabilizer states is denoted $\Stab_n$.

The Gottesman--Knill theorem~\cite{Gottesman1998} states that any circuit
composed of Clifford gates, computational basis measurements, and stabilizer
state preparations can be efficiently simulated classically.
The addition of a single non-stabilizer resource state enables universal
quantum computation~\cite{Bravyi2005}, making the characterization of
non-stabilizerness central.

\subsection{Magic as a Resource}

In the resource-theoretic framework~\cite{Veitch2014},
free states are mixtures of stabilizer states and free operations are
stabilizer circuits.
The key operational quantity is the \emph{robustness of
magic}~\cite{Howard2017}
\begin{equation}
  \Gamma(\rho) := \min\!\Bigl\{\textstyle\sum_i |\alpha_i| :
    \rho = \textstyle\sum_i \alpha_i \sigma_i,\;
    \sigma_i \in \Stab_n,\; \alpha_i\in\mathbb{R}\Bigr\},
  \label{eq:Gamma}
\end{equation}
the minimal $\ell_1$ weight of a real affine decomposition of $\rho$ into
pure stabilizer states; quasiprobability simulation based on such a
decomposition has classical cost scaling as $\Gamma(\rho)^2$~\cite{Howard2017}.
For pure states this quantity is related to, but distinct from, the
stabilizer extent of Ref.~\cite{Bravyi2019}.

\subsection{Discrete Phase Space and the Wigner Function}
\label{subsec:wigner}

We use the \emph{product construction} of Wootters~\cite{Wootters1987}.
The single-qubit phase-point operators are indexed by
$\alpha_k = (q_k, p_k) \in \mathbb{F}_2^2$:
\begin{equation}
  A_{(q_k,p_k)} = \tfrac{1}{2}\!\bigl(I + (-1)^{p_k} X
                  + (-1)^{q_k+p_k} Y + (-1)^{q_k} Z\bigr).
\end{equation}
For $n$ qubits, $A_\alpha = A_{\alpha_1} \otimes \cdots \otimes A_{\alpha_n}$,
and the \emph{discrete Wigner function} is
\begin{equation}
  W_\rho(\alpha) = \frac{1}{2^n}\,\tr(\rho\,A_\alpha),
  \qquad \alpha \in \mathbb{F}_2^{2n}.
  \label{eq:wigner_def}
\end{equation}
The reconstruction formula $\rho = \sum_\alpha W_\rho(\alpha) A_\alpha$
shows the Wigner transform is a bijection.
A key consequence of the product structure is \emph{Wigner factorization}:
for any product state $\rho \otimes \sigma$,
\begin{equation}
  W_{\rho\otimes\sigma}(\alpha,\beta)
  = W_\rho(\alpha)\cdot W_\sigma(\beta).
  \label{eq:wigner_factorization}
\end{equation}

Pauli operators act as phase-space translations under the product
construction: conjugation by $X$, $Y$, or $Z$ permutes the phase points
$A_\alpha$.
Unlike the odd-prime case, however, general single-qubit Cliffords do
\emph{not} act as permutations of the Wootters phase points: for example,
$S A_{(q,p)} S^\dagger$ is not a phase-point operator.
Single-qubit $C$ is nevertheless invariant under all of $\mathcal{C}_1$
(\Cref{thm:properties}(iii)), while for $n\geq 2$ even local single-qubit
Cliffords can change $C$ (\Cref{rem:cliff_equiv,rem:not_monotone}).

\begin{remark}[Why the Wigner function, not P or Q]
\label{rem:wigner_choice}
The Husimi $Q$ function is always non-negative, so no $Q$-based distance
can vanish on stabilizer states while being positive on magic states.
The discrete $P$ function lacks Clifford covariance in general.
The Pauli characteristic function $\chi_\rho(P) = \tr(P\rho)$ underlies
the stabilizer R\'{e}nyi entropies~\cite{Leone2022} and provides Clifford
covariance, but at the cost of losing the tensor-product factorization
$W_{\rho\otimes\sigma} = W_\rho \cdot W_\sigma$ that our proofs rely on.
The Wootters product construction is therefore the natural choice.
\end{remark}

\subsection{The Subtlety of Qubits}

For odd prime $d$, Gross~\cite{Gross2006} proved the qudit Hudson theorem:
a pure state has non-negative Wigner function if and only if it is a stabilizer
state.
For qubits ($d=2$), stabilizer states can have negative Wigner
entries~\cite{Raussendorf2020,Appleby2005}, so the free region is the convex hull
\begin{equation}
  \Wfree := \mathrm{conv}\{W_\sigma : \sigma \in \Stab_n\},
\end{equation}
a polytope strictly contained in the non-negative orthant.
The mana $\mathcal{M}(\rho) = \log\|W_\rho\|_1$~\cite{Veitch2014} is
non-zero only outside the non-negative orthant, while our measure
$C(\rho) = \min_{f \in \Wfree}\|W_\rho - f\|_1$ is non-zero only outside
the strictly finer stabilizer polytope $\Wfree$.

\section{Framework}
\label{sec:framework}

\subsection{The Free Wigner Polytope}

The stabilizer Wigner polytope is
$\Wfree := \mathrm{conv}\{W_\sigma : \sigma \in \Stab_n\} \subset \mathbb{R}^{4^n}$.
For $n=2$, the extreme points are the Wigner functions of the 60 pure
two-qubit stabilizer states.
The maximum Wigner $\ell_1$ norm over stabilizer states is
$M_n := \max_{\sigma \in \Stab_n}\|W_\sigma\|_1$;
for $n=1$, $M_1 = 1$; for $n=2$, $M_2 = 2$, attained by the Bell states
and their local Clifford equivalents.

\subsection{The Wigner Distance $C(\rho)$}

\begin{definition}[Wigner distance]
\label{def:C}
$C(\rho) := \min_{W_f \in \Wfree}\|W_\rho - W_f\|_1$.
\end{definition}

$C(\rho) = 0$ iff $\rho \in \Stab_n$.
It is computed via a linear program with $4^n$ variables and $O(|\Stab_n|)$
constraints; for the families studied here we identify the nearest stabilizer
state analytically, giving closed-form expressions valid for all parameter
values simultaneously.

\begin{theorem}[Structural properties of $C$]
\label{thm:properties}
$C$ satisfies:
\begin{enumerate}[(i)]
  \item \textbf{Faithfulness:} $C(\rho) = 0 \iff \rho \in \Stab_n$.
  \item \textbf{Convexity:}
    $C(p\rho_1 + (1-p)\rho_2) \leq pC(\rho_1) + (1-p)C(\rho_2)$.
  \item \textbf{Clifford invariance (single-qubit systems):}
    for $n=1$, $C(U\rho U^\dagger) = C(\rho)$ for all $U \in \mathcal{C}_1$.
    For $n\geq 2$, $C$ is \emph{not} invariant even under local
    single-qubit Cliffords such as $S\otimes I$ and $H\otimes I$
    (\Cref{rem:cliff_equiv,rem:not_monotone}).
  \item \textbf{Lipschitz:}
    $|C(\rho) - C(\sigma)| \leq \|W_\rho - W_\sigma\|_1$.
  \item \textbf{Restricted monotonicity:}
    $C(\Phi(\rho)) \leq C(\rho)$ for every free operation $\Phi$ whose
    induced phase-space map $T_\Phi$ is stochastic, i.e.\ admits a
    nonnegative quasiprobability representation on the Wootters phase
    space.
    This class includes all Pauli channels and computational-basis
    measure-and-discard operations.
    Clifford unitaries generally lie outside this class, and $C$ is
    \emph{not} monotone under them (\Cref{rem:not_monotone}).
\end{enumerate}
\end{theorem}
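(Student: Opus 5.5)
The plan is to prove the five items separately. Most of them follow from two facts: the Wigner map $\rho\mapsto W_\rho$ of Eq.~\eqref{eq:wigner_def} is a linear bijection, and $C$ is the $\ell_1$ distance to the convex polytope $\Wfree$.

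For \textbf{(i)}, $C(\rho)=0$ holds iff $W_\rho\in\Wfree$, because $\Wfree$ is a closed polytope and the minimum is attained. By linearity and injectivity of the Wigner map, $W_\rho\in\Wfree$ iff $\rho\in\mathrm{conv}\,\Stab_n$. I will state explicitly that ``$\rho\in\Stab_n$'' is to be read as membership in the stabilizer hull, i.e.\ mixtures of stabilizer states, since that is the free set of the resource theory.

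For \textbf{(ii)}, take minimizers $f_1,f_2\in\Wfree$ for $\rho_1,\rho_2$. Then $pf_1+(1-p)f_2\in\Wfree$ by convexity, and $W_{p\rho_1+(1-p)\rho_2}=pW_{\rho_1}+(1-p)W_{\rho_2}$. The triangle inequality then gives the bound.

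For \textbf{(iv)}, choose $f$ optimal for $\sigma$. Then $C(\rho)\le\|W_\rho-f\|_1\le\|W_\rho-W_\sigma\|_1+C(\sigma)$, and the same argument with $\rho,\sigma$ exchanged gives the other direction. This is just the standard fact that distance to a set is 1-Lipschitz.

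For \textbf{(iii)}, I would work directly with Bloch vectors. For $n=1$, $W(q,p)=\tfrac14\bigl(1+(-1)^p x+(-1)^{q+p}y+(-1)^q z\bigr)$. Every element of $\Wfree$ is the Wigner function of a state, so the difference $W_\rho-W_f$ has zero identity component and depends only on the Bloch difference $\vec v=(v_x,v_y,v_z)$:
\begin{equation}
\|W_\rho-W_f\|_1=\tfrac14\sum_{(s_x,s_y,s_z)\in\Sigma}|s_xv_x+s_yv_y+s_zv_z|,
\end{equation}
where $\Sigma$ is the set of four sign patterns with $s_xs_ys_z=+1$. Single-qubit Cliffords act on Bloch vectors as the signed permutation matrices of determinant $+1$, and the octahedron of stabilizer Bloch vectors is invariant under them. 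It then suffices to show that this norm is invariant under all signed permutations.
\begin{itemize}
\item A coordinate permutation maps $\Sigma$ to itself.
\item Flipping the sign of one coordinate maps $\Sigma$ to the complementary set $-\Sigma$.
\item Because $|a|=|-a|$, the sum over $-\Sigma$ equals the sum over $\Sigma$.
\end{itemize}
Hence $\min_f$ is preserved and $C(U\rho U^\dagger)=C(\rho)$. The negative claim for $n\ge2$ is not proved here; I would supply it by an explicit two-qubit example under $S\otimes I$, as deferred to \Cref{rem:cliff_equiv,rem:not_monotone}.

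For \textbf{(v)}, two facts are needed.
\begin{itemize}
\item Since $\Phi$ is free, it maps stabilizer mixtures to stabilizer mixtures. Hence $T_\Phi(\Wfree)\subseteq\Wfree$, where $\Wfree$ on the right is the output system's polytope.
\item A column-stochastic $T$ (nonnegative entries, columns summing to one) is an $\ell_1$ contraction: $\|Tu\|_1\le\sum_\alpha|u_\alpha|\sum_\beta T_{\beta\alpha}=\|u\|_1$.
\end{itemize}
Then $C(\Phi(\rho))\le\|T W_\rho-Tf\|_1\le\|W_\rho-f\|_1=C(\rho)$ for the optimal $f$.

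It remains to check the two listed classes. A Pauli channel is a convex combination of Pauli conjugations, each of which permutes the Wootters phase points. Computational-basis measure-and-discard acts on the measured qubit's coordinates by marginalizing over $p$, summing $W(q,0)+W(q,1)$ into outcome $q$. I expect this to be the main obstacle. The task is to pin down precisely how the output space (classical register plus remaining qubits) is embedded, so that the induced map is genuinely stochastic and the output free polytope is the correct one. I would treat the outcome register as a $Z$-diagonal qubit to keep everything inside the Wootters phase space.
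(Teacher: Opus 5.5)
Your proposal is correct and follows essentially the same route as the paper's appendix proofs: injectivity of the Wigner map for (i), the triangle inequality for (ii) and (iv), invariance of the tetrahedral norm under the octahedral group (permutations fix the sign set, single sign flips send it to its negative) for (iii), and $\ell_1$-contraction of a column-stochastic $T_\Phi$ together with $T_\Phi(\Wfree)\subseteq\Wfree$ for (v), with the $n\ge 2$ negative claims deferred to the remarks just as the paper does. Your reading of (i) as membership in $\mathrm{conv}(\Stab_n)$ is what the injectivity argument actually proves; the literal $\rho\in\Stab_n$ fails for mixed free states such as $I/2^n$. Your treatment of the measurement outcome as a $Z$-dephased qubit is exactly the paper's observation that computational-basis dephasing is a mixture of Pauli conjugations.
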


Proofs are in \Cref{app:properties}.

\subsection{Primal--Dual Formulation and the Witness}

\begin{theorem}[Dual representation]
\label{thm:dual}
For every state $\rho$,
\begin{equation}
  C(\rho) = \max_{\|S\|_\infty \leq 1}
  \Bigl[\langle S, W_\rho\rangle
        - \max_{W_f \in \Wfree}\langle S, W_f\rangle\Bigr].
  \label{eq:dual}
\end{equation}
The optimal $S^*$ defines a \emph{dual witness}; the corresponding operator
$H^* = \sum_\alpha S^*(\alpha) A_\alpha$ satisfies
$C(\rho) = \tr(H^*\rho) - \max_{\sigma\in\Stab_n}\tr(H^*\sigma)$.
\end{theorem}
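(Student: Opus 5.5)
The plan is to derive \eqref{eq:dual} from the standard duality between the $\ell_1$ and $\ell_\infty$ norms, and then to use the fact that $\Wfree$ is a polytope to exchange the min and the max.

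\textbf{Step 1: dual form of the $\ell_1$ norm.} For any vector $u\in\mathbb{R}^{4^n}$ we have $\|u\|_1=\max_{\|S\|_\infty\le 1}\langle S,u\rangle$. Substituting this into \Cref{def:C} gives
\begin{equation*}
C(\rho)=\min_{W_f\in\Wfree}\ \max_{\|S\|_\infty\le1}\ \bigl[\langle S,W_\rho\rangle-\langle S,W_f\rangle\bigr].
\end{equation*}

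\textbf{Step 2: exchange min and max.} The bracket is bilinear in $(S,W_f)$. Both feasible sets are compact and convex: $\Wfree$ is the convex hull of finitely many points, and the $\ell_\infty$ unit ball is a cube. Sion's (or von Neumann's) minimax theorem therefore lets us swap the min and the max. After the swap, the inner minimum over $W_f$ of $-\langle S,W_f\rangle$ equals $-\max_{W_f}\langle S,W_f\rangle$, which is exactly \eqref{eq:dual}.

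Alternatively, we can cast $C$ as a finite LP: variables $t\ge |W_\rho-\sum_j\lambda_jW_{\sigma_j}|$ entrywise, with $\lambda$ in the simplex. We then invoke strong LP duality, which holds because the problem is feasible and bounded. The dual variable $S$ arises as the difference of the multipliers for the two-sided constraints on $t$. The constraint $\|S\|_\infty\le1$ comes from the $t$-columns, and the simplex multiplier gives the term $\max_j\langle S,W_{\sigma_j}\rangle$. Either route works, and I would present the minimax version as the main argument with the LP remark supporting the claim that the maximum is attained. Attainment also follows directly, since the objective is continuous (concave) on a compact set.

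\textbf{Step 3: translate to operators.} Set $H^*=\sum_\alpha S^*(\alpha)A_\alpha$. By \eqref{eq:wigner_def}, $\tr(H^*\rho)=\sum_\alpha S^*(\alpha)\tr(A_\alpha\rho)=2^n\langle S^*,W_\rho\rangle$, and the same identity holds for each stabilizer state $\sigma$. A linear functional on the polytope attains its maximum at a vertex, so $\max_{W_f\in\Wfree}\langle S^*,W_f\rangle=\max_{\sigma\in\Stab_n}\langle S^*,W_\sigma\rangle$. Hence $C(\rho)=2^{-n}\bigl[\tr(H^*\rho)-\max_\sigma\tr(H^*\sigma)\bigr]$.

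\textbf{Main obstacle: normalization.} The only real difficulty is the factor $2^{-n}$. With the normalization in \eqref{eq:wigner_def}, the stated identity holds exactly only if $H^*$ is defined with the rescaling $H^*=2^{-n}\sum_\alpha S^*(\alpha)A_\alpha$, or equivalently with the dual phase-point operators $A_\alpha/2^n$. I would state this convention explicitly and note that the difference is a harmless rescaling. Hermiticity of $H^*$ is automatic, because $S^*$ is real and each $A_\alpha$ is Hermitian.
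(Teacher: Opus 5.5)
Your proof is correct. Your fallback route is exactly the paper's: the appendix writes the primal LP in the weights $\lambda_i$ and slacks $t_k$, introduces multipliers $u_k,v_k\ge0$ and $\mu$, reads off $u_k+v_k=1$ (so $S_k=u_k-v_k$ has $|S_k|\le1$) and $\mu\ge\langle S,W_i\rangle$, and invokes strong LP duality.

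Your main route is the dual-norm identity $\|u\|_1=\max_{\|S\|_\infty\le1}\langle S,u\rangle$ followed by a minimax swap between the cube and the polytope $\Wfree$. This is the same duality in convex-analytic form. It shows directly why the support function $\max_{W_f\in\Wfree}\langle S,W_f\rangle$ appears, and it gives attainment from compactness. The LP form, by contrast, is the one actually solved, and it yields the finite certificates used in the numerics.

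Where you genuinely go beyond the paper is Step~3. The paper's proof never addresses the operator clause, and your normalization remark is correct. With $W_\rho(\alpha)=2^{-n}\tr(\rho A_\alpha)$ one has $\tr\bigl(\rho\sum_\alpha S^*(\alpha)A_\alpha\bigr)=2^n\langle S^*,W_\rho\rangle$, so the identity as printed is off by $2^n$. The paper itself tacitly uses your scaling in the proof of \Cref{prop:maxC2}, where $\langle S,W_\rho\rangle=\tfrac14\tr(\rho\,H\otimes H)$.

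The rescaling is harmless for the theorem itself, but it must be carried consistently. Since $\tr(A_\alpha A_\beta)=2^n\delta_{\alpha\beta}$, the operator $H=2^{-n}\sum_\alpha S(\alpha)A_\alpha$ has $S(\alpha)=\tr(HA_\alpha)$. So an operator $H$ is an admissible witness precisely when $|\tr(HA_\alpha)|\le1$ at every phase point. This condition should be checked whenever a witness is specified as an operator rather than as a phase-space function. For instance, $ZI+XX$ has $\tr(HA_\alpha)=2$ at some points and becomes admissible only after halving.
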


The geometry underlying \Cref{thm:dual} is the polar duality between the
stabilizer polytope and the $\Lambda$ polytope~\cite{Zurel2020,Okay2021}:
facets of $\Wfree$ correspond to vertices of $\Lambda$, which for $n=2$
were enumerated by Reichardt~\cite{Reichardt2009} and form $22{,}320$
facets in $8$ Clifford orbits~\cite{Okay2021}.
We note that the optimal witnesses $H^*$ obtained in this work are
supporting hyperplanes of $\Wfree$ and need not be facet-defining.
The same eight facet classes were recently rederived and recast as
Bell-like non-stabilizerness witnesses by Palhares
\emph{et al.}~\cite{Palhares2026}, who study the \emph{trace distance}
to the stabilizer polytope; that measure is Clifford-invariant and
monotone by construction, complementing the frame-dependent
Wigner-distance geometry studied here.

\subsection{The Robustness of Magic and the Tightness Ratio}

Recall the robustness of magic $\Gamma(\rho)$ from Eq.~\eqref{eq:Gamma}.
Note $\Gamma(\rho) \geq 1$ with equality iff $\rho \in \mathrm{conv}(\Stab_n)$.

\begin{theorem}[Simulation bound]
\label{thm:simbound}
For any $n$-qubit state $\rho$,
\begin{equation}
  \Gamma(\rho) \geq 1 + \frac{C(\rho)}{M_n}.
  \label{eq:simbound}
\end{equation}
\end{theorem}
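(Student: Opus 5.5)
Take an optimal decomposition $\rho=\sum_i\alpha_i\sigma_i$ attaining $\Gamma(\rho)=\sum_i|\alpha_i|$, and split it into its positive and negative parts. Let $p=\sum_{\alpha_i>0}\alpha_i$ and $m=\sum_{\alpha_i<0}|\alpha_i|$. Taking the trace gives $p-m=1$, so $\Gamma=p+m=1+2m$. Write
\[
\rho=p\,\sigma_+-m\,\sigma_-,
\]
where $\sigma_\pm$ are the normalized convex mixtures of the positive and negative parts. Both $\sigma_+$ and $\sigma_-$ lie in $\mathrm{conv}(\Stab_n)$. If $m=0$, then $\rho\in\mathrm{conv}(\Stab_n)$, so $W_\rho\in\Wfree$ and $C(\rho)=0$; the bound holds trivially. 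Assume $m>0$ from now on.

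Because the Wigner transform is linear, $W_\rho=pW_{\sigma_+}-mW_{\sigma_-}$. I would use $W_f:=W_{\sigma_+}\in\Wfree$ as a candidate free point in Definition~\ref{def:C}. Since $p-1=m$, this gives
\[
C(\rho)\le\|W_\rho-W_{\sigma_+}\|_1=\|(p-1)W_{\sigma_+}-mW_{\sigma_-}\|_1=m\,\|W_{\sigma_+}-W_{\sigma_-}\|_1 .
\]
By convexity of the $\ell_1$ norm, $\|W_\sigma\|_1\le M_n$ for every $\sigma\in\mathrm{conv}(\Stab_n)$, since the maximum of a convex function over a polytope is attained at a vertex. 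The triangle inequality then gives $\|W_{\sigma_+}-W_{\sigma_-}\|_1\le 2M_n$, hence $C(\rho)\le 2mM_n=(\Gamma-1)M_n$. This is exactly the claimed inequality $\Gamma\ge1+C/M_n$.

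The step I expect to be the main obstacle is making the factor $2m$ and the diameter bound $2M_n$ combine to give exactly $M_n$, with no stray factor of $2$. The accounting works because $\Gamma-1=2m$. An alternative is to choose the candidate point more cleverly, for example a convex combination of $\sigma_+$ and $\sigma_-$, but the choice $W_{\sigma_+}$ already gives the claimed constant. I would also check that the minimum in Eq.~\eqref{eq:Gamma} is attained; this holds because it is a finite linear program, $\Stab_n$ being finite. Otherwise one can argue with near-optimal decompositions and pass to the limit. Nothing from \Cref{thm:dual} is needed, though the same argument could be phrased dually: a witness $S$ with $\|S\|_\infty\le1$ pairs with $W_\rho-W_f$ and is bounded using $|\langle S,W_\sigma\rangle|\le M_n$.
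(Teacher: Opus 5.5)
Your proof is correct and is essentially the paper's argument. The paper also splits an optimal decomposition as $W_\rho=\lambda W_{\sigma^+}-\mu W_{\sigma^-}$ with $\mu=(\Gamma(\rho)-1)/2$ (your $m$), compares against the free point $W_{\sigma^+}$, and bounds $\|W_{\sigma^+}-W_{\sigma^-}\|_1\le 2M_n$; your extra details (the $m=0$ case, attainment of the minimum, and the triangle-inequality extension of $\|W_\sigma\|_1\le M_n$ from pure stabilizer states to their mixtures) are simply left implicit there.
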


\begin{proof}
Let $\rho = \sum_i \alpha_i \sigma_i$ achieve $\Gamma(\rho) = \sum_i |\alpha_i|$.
Setting $\lambda = (\Gamma(\rho)+1)/2$, $\mu = (\Gamma(\rho)-1)/2$, and
$\sigma^\pm$ as the normalized positive/negative parts gives
$W_\rho = \lambda W_{\sigma^+} - \mu W_{\sigma^-}$, so
\begin{equation}
  C(\rho) \leq \mu\|W_{\sigma^+} - W_{\sigma^-}\|_1
  \leq 2\mu M_n = (\Gamma(\rho)-1)M_n. \qedhere
\end{equation}
\end{proof}

\begin{definition}[Tightness ratio]
For $C(\rho) > 0$:
\begin{equation}
  \kappa(\rho) := \frac{\Gamma(\rho) - 1}{C(\rho)}.
  \label{eq:kappa}
\end{equation}
\Cref{thm:simbound} gives $\kappa(\rho) \geq 1/M_n$;
for two-qubit states, $M_2 = 2$ so $\kappa(\rho) \geq 1/2$.
\end{definition}

\begin{remark}
For stabilizer states, $C=0$ and $\Gamma=1$, so $\kappa=0/0$ is undefined.
The definition restricts to non-free states; for the families studied here
both numerator and denominator vanish only at isolated boundary points.
\end{remark}

\subsection{Classical Simulation Interpretation}

The bound~\eqref{eq:simbound} certifies simulation hardness from geometric
data alone: for $C(\rho) = c$, the overhead is at least $(1 + c/M_n)^2$.
The witness $H^*$ provides an explicit, measurable observable certifying this
cost via $\tr(H^*\rho)$ exceeding the stabilizer maximum by exactly $c$.

\section{The Exact Geometry of $C$}
\label{sec:maxC}

\subsection{The single-qubit maximum}

\begin{lemma}[Cuboctahedral form of the Wigner $\ell_1$ metric]
\label{lem:cuboct}
For single-qubit states with Bloch vectors $\vec r_\rho, \vec r_\sigma$,
\begin{equation}
  \|W_\rho - W_\sigma\|_1 = N(\vec r_\rho - \vec r_\sigma),
  \qquad
  N(\vec v) := \max\Bigl(\|\vec v\|_\infty,\ \tfrac{1}{2}\|\vec v\|_1\Bigr).
\end{equation}
The unit ball of $N$ is the cuboctahedron
$K = \{\vec v : \|\vec v\|_\infty \le 1,\ \|\vec v\|_1 \le 2\}$,
with vertices the signed permutations of $(1,1,0)$ and support function
$h_K(\vec u) = |u|_{(1)} + |u|_{(2)}$, the sum of the two largest
components of $|\vec u|$.
\end{lemma}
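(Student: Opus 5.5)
Since the Wigner map is linear, I will first reduce the identity to a statement about Bloch vectors. For a single qubit, $W_\rho(q,p) = \tfrac12\tr(\rho A_{(q,p)})$. With $\rho = \tfrac12(I+\vec r\cdot\vec\sigma)$ and $\tr(A_{(q,p)})=1$, this gives
\[
W_\rho(q,p)=\tfrac14\bigl(1+(-1)^p r_x+(-1)^{q+p}r_y+(-1)^q r_z\bigr).
\]
The constant term cancels in a difference, so with $\vec v=\vec r_\rho-\vec r_\sigma$ we get $\|W_\rho-W_\sigma\|_1=\tfrac14\sum_{s\in S}|s\cdot\vec v|$. Here $S=\{(s_x,s_y,s_z)\in\{\pm1\}^3 : s_xs_ys_z=+1\}$ is the set of four sign vectors that arise as $(q,p)$ ranges over $\mathbb{F}_2^2$. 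I will check this directly: $s_x s_y s_z=(-1)^{p+(q+p)+q}=1$. So $S$ is one tetrahedron of cube vertices, $\{(1,1,1),(1,-1,-1),(-1,1,-1),(-1,-1,1)\}$.

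The core step is to compute $F(\vec v):=\tfrac14\sum_{s\in S}|s\cdot\vec v|$ in closed form and show $F=N$. The group of signed permutations preserving $S$ (all permutations, together with an even number of sign flips) acts on $F$. Both $F$ and $N$ are also invariant under $\vec v\mapsto-\vec v$. Combining these, I can reduce to $v_1\ge v_2\ge v_3\ge 0$ up to the residual overall sign; one odd flip composed with $-1$ is an even flip, so this reduction is fine. On that chamber, three of the four linear forms have determined signs: $v_1+v_2+v_3\ge0$, $v_1-v_2+v_3\ge0$, and $v_1+v_2-v_3\ge0$. The remaining form, $-v_1+v_2+v_3$, has indeterminate sign.

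I will then split into two cases on the sign of $v_2+v_3-v_1$. If $v_2+v_3\le v_1$, the sum is $\tfrac14(4v_1)=v_1=\|\vec v\|_\infty$. If $v_2+v_3\ge v_1$, the sum is $\tfrac14\cdot2(v_1+v_2+v_3)=\tfrac12\|\vec v\|_1$. In each region the value obtained is the larger of the two candidates, since $v_1\ge\tfrac12(v_1+v_2+v_3)$ exactly when $v_1\ge v_2+v_3$. Hence $F=\max(\|\vec v\|_\infty,\tfrac12\|\vec v\|_1)=N$. I expect the main obstacle to be the bookkeeping in this symmetry reduction: I must make sure the symmetry group of $S$, extended by $-1$, really sends every vector into the chosen chamber. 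The cleanest way is to note that $F$ and $N$ are both invariant under all signed permutations. For $N$ this is obvious. For $F$, an odd sign flip maps $S$ to $-S$, and $|{-s}\cdot\vec v|=|s\cdot\vec v|$.

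The geometric claims follow from the formula. Since $N$ is a maximum of two norms, its unit ball is the intersection $\{\|\vec v\|_\infty\le1\}\cap\{\|\vec v\|_1\le2\}$, i.e. the cube truncated at its corners by the octahedron of radius $2$. This is the cuboctahedron $K$. For the vertices, $(1,1,0)$ lies on two cube faces and on the octahedron facet $v_1+v_2+v_3=2$, which is enough to make it a vertex. Conversely, points with a coordinate equal to $\pm1$ and $\|\vec v\|_1=2$ but not of this form lie on the segments between such points. So the vertices are the $12$ signed permutations of $(1,1,0)$. The support function is $h_K(\vec u)=\max_{\text{vertices}}\vec u\cdot\vec v$. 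The maximum over $\vec v=\pm e_i\pm e_j$ with $i\ne j$ is obtained by choosing the signs to match $\vec u$ and picking the two largest $|u_i|$, which gives $|u|_{(1)}+|u|_{(2)}$.
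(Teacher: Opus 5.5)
Your proposal is correct and follows essentially the same route as the paper. You reduce to $\tfrac14\sum_{t\in T}|\langle\vec v,t\rangle|$ over the tetrahedron $T=\{t\in\{\pm1\}^3 : t_xt_yt_z=+1\}$, use invariance of both sides under all signed permutations (odd flips send $T$ to $-T$) to restrict to the chamber $v_1\ge v_2\ge v_3\ge 0$, split on the sign of $v_1-v_2-v_3$, and then read off the cuboctahedral unit ball and its support function from the twelve vertices $\pm e_i\pm e_j$ — all exactly as the paper does, and your vertex identification is slightly more explicit than the paper's bare assertion that the cube intersected with $\{\|\vec v\|_1\le 2\}$ is the cuboctahedron.
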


\begin{proof}
As in the proof of \Cref{thm:properties}(iii),
$W_\rho(q,p) = \tfrac14(1+\vec r\cdot\vec t_{qp})$, so
$\|W_\rho - W_\sigma\|_1 = \tfrac14\sum_{t\in T}|\langle\vec v, t\rangle|$
with $\vec v = \vec r_\rho - \vec r_\sigma$ and
$T = \{t\in\{\pm1\}^3 : t_xt_yt_z = +1\}$.
On the cone $x\ge y\ge z\ge 0$ the four terms are
$(x{+}y{+}z) + (x{+}y{-}z) + |x{-}y{-}z| + (x{-}y{+}z)$, which equals
$4x$ if $x\ge y+z$ and $2(x{+}y{+}z)$ otherwise, i.e.\
$4\max\bigl(\|\vec v\|_\infty, \tfrac12\|\vec v\|_1\bigr)$.
Both sides are invariant under all signed coordinate permutations
(any such map sends $T$ onto $T$ or $-T$, and the absolute values are
insensitive to this sign), so the identity extends to all of
$\mathbb{R}^3$.
The set $\{N\le1\}$ is the intersection of the cube
$\{\|\cdot\|_\infty\le1\}$ with the doubled octahedron
$\{\|\cdot\|_1\le2\}$ --- the cuboctahedron; its support function is the
maximum of $\langle\vec u, k\rangle$ over the twelve vertices, which is
$|u|_{(1)}+|u|_{(2)}$.
\end{proof}

\begin{theorem}[Single-qubit maximum of $C$]
\label{thm:maxC1}
Let $c_0 := (\sqrt{3}-1)/2$.
Over all single-qubit states,
\begin{equation}
  \max_\rho C(\rho) = c_0,
\end{equation}
attained exactly at the eight face states, i.e.\ the pure states with
Bloch vectors $(\pm1,\pm1,\pm1)/\sqrt{3}$.
\end{theorem}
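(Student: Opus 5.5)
The plan is to convert $C$ into a convex-geometry problem in the Bloch ball and to solve it by norm duality. By \Cref{lem:cuboct}, $W_\rho$ depends affinely and injectively on $\vec r_\rho$. The single-qubit stabilizer polytope $\Wfree$ is therefore the image of the octahedron $O=\mathrm{conv}\{\pm e_x,\pm e_y,\pm e_z\}=\{\vec s:\|\vec s\|_1\le 1\}$. This gives
\begin{equation}
  C(\rho)=\min_{\vec s\in O} N(\vec r_\rho-\vec s).
\end{equation}
I would then apply the standard duality for the distance to a convex set under a norm, which is the single-qubit instance of \Cref{thm:dual}. The dual norm of $N$ is the support function $h_K$ of the cuboctahedron, and the support function of $O$ is $\|\vec u\|_\infty$. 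Hence
\begin{equation}
  C(\rho)=\max_{h_K(\vec u)\le 1}\bigl[\langle \vec u,\vec r_\rho\rangle-\|\vec u\|_\infty\bigr].
\end{equation}

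\textbf{Lower bound.} Take the face state $\vec r=(1,1,1)/\sqrt3$ and the dual vector $\vec u=(1,1,1)/2$. Then $h_K(\vec u)=1$ and $\|\vec u\|_\infty=\tfrac12$, so the dual objective equals $\tfrac{\sqrt3}{2}-\tfrac12=c_0$. The other seven face states follow by sign flips, using the signed-permutation invariance noted in the proof of \Cref{lem:cuboct} together with invariance of $O$; alternatively one can cite \Cref{thm:properties}(iii).

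\textbf{Upper bound.} It suffices to prove that for every $\vec u$ and every $\|\vec r\|_2\le1$,
\begin{equation}
  \langle \vec u,\vec r\rangle \le \|\vec u\|_\infty + c_0\,h_K(\vec u).
\end{equation}
By Cauchy--Schwarz it is enough to show $\|\vec u\|_2\le \|\vec u\|_\infty+c_0 h_K(\vec u)$. Sort $|\vec u|$ as $a\ge b\ge c\ge 0$. The right side is $(1+c_0)a+c_0b$, which does not depend on $c$, while the left side increases with $c$, so the worst case is $c=b$. Squaring and using the identity $c_0^2+c_0=\tfrac12$, so that $(1+c_0)^2=\tfrac32+c_0$, the inequality reduces to $f(t)\ge 0$ on $[0,1]$, where $t=b/a$ and
\begin{equation}
  f(t)=(\tfrac12+c_0)+t+(c_0^2-2)t^2.
\end{equation}
Since $f$ is concave, it suffices to check the endpoints: $f(0)>0$ and $f(1)=0$.

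\textbf{Equality characterization.} I expect this to be the delicate step. If $C(\rho)=c_0$, every inequality in the chain must be tight. Concavity of $f$ forces $t=1$ and $c=b$, so $|u_x|=|u_y|=|u_z|$. Equality in Cauchy--Schwarz then forces $\vec r=\vec u/\|\vec u\|_2$, a unit vector proportional to $(\pm1,\pm1,\pm1)$. These are exactly the eight face states.

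For this argument to be rigorous, the maximum in the dual must actually be attained. It is, because the feasible set $\{h_K\le1\}$ is compact and the objective is continuous, so a maximizing $\vec u$ exists. One must also treat the case $\vec u=0$ separately: there the objective is $0<c_0$, so it can never give equality. With both points checked, the maximum of $C$ is $c_0$ and it is attained only at the eight face states.
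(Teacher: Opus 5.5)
Your proof is correct and follows essentially the paper's route: your inequality $\|\vec u\|_2\le\|\vec u\|_\infty+c_0h_K(\vec u)$ is the homogenized form of the paper's key support inequality~\eqref{eq:key_support}, and your dual-attainment argument does the work of the paper's containment $B\subseteq O+c_0K$ and its separating-hyperplane uniqueness step. The one real difference is how you verify that inequality. The paper minimizes a linear function over a curved planar region, arc by arc. You instead note that the right side does not depend on the smallest component, so the worst case is $c=b$, and then reduce to a concave quadratic with $f(0)>0=f(1)$; this is shorter and makes the equality case immediate.
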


\begin{remark}
The face states are the $T$-type magic states, long known to be the
maximally magic single-qubit states---they maximise the robustness of
magic, with $\Gamma(\rho_f)=\sqrt3$~\cite{Howard2017,Heinrich2019}---so
their appearance here is expected rather than surprising, and different
monotones need not agree on maximisers in general~\cite{Veitch2014}.
What \Cref{thm:maxC1} contributes is the exact constant for the Wigner
distance, the uniqueness of the maximisers, and the support-function method
of proof, which is what makes the exact tensor powers of
\Cref{cor:exact_powers} available.
Note also $\kappa(\rho_f) = (\Gamma-1)/C = (\sqrt3-1)/c_0 = 2$.
\end{remark}

\begin{proof}
Write $O = \mathrm{conv}\{\pm\vec e_i\}$ for the stabilizer octahedron
and $B$ for the Bloch ball.
By \Cref{lem:cuboct},
$C(\rho) = \min_{\vec r'\in O}N(\vec r - \vec r')
        = \mathrm{dist}_N(\vec r, O)$.

\emph{Step 1: faces attain $c_0$.}
Let $\vec f = (1,1,1)/\sqrt3$; the octahedral invariance of
\Cref{thm:properties}(iii) covers the other seven faces.
Since $N\ge\tfrac12\|\cdot\|_1$ and $\|\vec r'\|_1\le1$ on $O$,
$C(\rho_f) \ge \tfrac12(\|\vec f\|_1 - 1) = \tfrac12(\sqrt3-1) = c_0$.
Conversely, $\vec r' = \vec f/\sqrt3\in O$ gives
$N(\vec f - \vec f/\sqrt3)
 = (1-\tfrac{1}{\sqrt3})N(\vec f)
 = (1-\tfrac{1}{\sqrt3})\tfrac{\sqrt3}{2} = c_0$.

\emph{Step 2: key support inequality.}
For every unit vector $\vec u$, writing $(a,b,c)$ for the components of
$|\vec u|$ sorted decreasingly,
\begin{equation}
  h_O(\vec u) + c_0\,h_K(\vec u) = a + c_0(a+b) \ \ge\ 1,
  \label{eq:key_support}
\end{equation}
with equality iff $a=b=c=1/\sqrt3$.
To see this, eliminate $c = \sqrt{1-a^2-b^2}$; the constraints
$a\ge b\ge c\ge0$ become the compact planar region
$D = \{(a,b) : b\le a,\ a^2+b^2\le1,\ a^2+2b^2\ge1\}$,
and $f(a,b) := (1+c_0)a + c_0b$ is linear and non-constant, so its
minimum over $D$ lies on $\partial D$, a union of three curves.
On $a^2+2b^2=1$ (the locus $c=b$), parametrize $a=\cos\theta$,
$b=\sin\theta/\sqrt2$ with $\theta\in[0,\arctan\sqrt2]$; then
$f(\theta)$ is a single sinusoid $R\cos(\theta-\varphi)$, whose minimum
on an interval is attained at an endpoint:
$f(0)=1+c_0$ and
$f(\arctan\sqrt2) = (1+2c_0)/\sqrt3 = \sqrt3/\sqrt3 = 1$
(the point $a=b=c=1/\sqrt3$).
On $a^2+b^2=1$ (the locus $c=0$), the same sinusoid argument gives
endpoint values $1+c_0$ and $(1+2c_0)/\sqrt2 = \sqrt{3/2} > 1$.
On $b=a$, $f = (1+2c_0)a = \sqrt3\,a \ge 1$, since $a^2+2b^2\ge1$
forces $a\ge1/\sqrt3$; equality again only at $a=b=c=1/\sqrt3$.
Hence $\min_D f = 1$, attained only at the face direction.

\emph{Step 3: global upper bound.}
By \eqref{eq:key_support},
$h_B(\vec u) = \|\vec u\|_2 \le h_O(\vec u)+c_0h_K(\vec u)
 = h_{O+c_0K}(\vec u)$
for every $\vec u$, so $B \subseteq O + c_0K$.
Thus every Bloch vector decomposes as $\vec r = \vec r' + c_0\vec k$
with $\vec r'\in O$ and $N(\vec k)\le 1$, giving $C(\rho)\le c_0$ for
every state.

\emph{Step 4: uniqueness.}
Suppose $C(\rho) = c_0$ for a state with Bloch vector $\vec r$,
$\|\vec r\|_2\le1$.
Then the open convex set $\vec r + c_0\,\mathrm{int}\,K$ is disjoint
from the compact convex set $O$ (a common point $\vec r'$ would give
$N(\vec r - \vec r') < c_0$), so the two sets are separated by a
hyperplane with unit normal $\vec u$:
$\langle\vec u,\vec x\rangle \le \langle\vec u,\vec y\rangle$
for all $\vec x\in O$ and $\vec y\in\vec r + c_0K$.
Taking the supremum over $O$ and the infimum over $\vec r + c_0K$
(using $h_K(-\vec u)=h_K(\vec u)$),
\[
  \langle\vec u,\vec r\rangle
  \ \ge\ h_O(\vec u) + c_0h_K(\vec u) \ \ge\ 1
\]
by \eqref{eq:key_support}, while Cauchy--Schwarz gives
$\langle\vec u,\vec r\rangle \le \|\vec u\|_2\|\vec r\|_2 \le 1$.
Equality throughout forces $\vec r = \vec u$ and equality in
\eqref{eq:key_support}, so $\vec r$ is a face direction.
\end{proof}

\subsection{Products of single-qubit states}

\begin{lemma}[Wigner factorization]
$W_{\rho\otimes\sigma}(\alpha,\beta) = W_\rho(\alpha)\cdot W_\sigma(\beta)$.
\end{lemma}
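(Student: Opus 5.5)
The plan is to compute directly from the definitions. Here $\alpha\in\mathbb{F}_2^{2n}$ and $\beta\in\mathbb{F}_2^{2m}$ index the phase spaces of the two subsystems, and $\rho$, $\sigma$ act on $n$ and $m$ qubits respectively. Everything rests on two facts: the phase-point operators of the product construction are themselves tensor products, and the trace is multiplicative over tensor products.

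\emph{Step 1: split the phase-point operator.} The composite phase point is $(\alpha,\beta)\in\mathbb{F}_2^{2(n+m)}$. By the definition $A_\alpha = A_{\alpha_1}\otimes\cdots\otimes A_{\alpha_n}$, regrouping the single-qubit factors gives
\[
A_{(\alpha,\beta)} = \bigl(A_{\alpha_1}\otimes\cdots\otimes A_{\alpha_n}\bigr)\otimes\bigl(A_{\beta_1}\otimes\cdots\otimes A_{\beta_m}\bigr) = A_\alpha\otimes A_\beta .
\]

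\emph{Step 2: factor the trace.} Substituting into Eq.~\eqref{eq:wigner_def} and using $\tr[(X\otimes Y)(X'\otimes Y')] = \tr(XX')\,\tr(YY')$,
\[
W_{\rho\otimes\sigma}(\alpha,\beta) = \frac{1}{2^{n+m}}\,\tr\bigl[(\rho\otimes\sigma)(A_\alpha\otimes A_\beta)\bigr] = \frac{1}{2^{n}}\tr(\rho A_\alpha)\cdot\frac{1}{2^{m}}\tr(\sigma A_\beta),
\]
where the normalisation splits as $2^{-(n+m)} = 2^{-n}\,2^{-m}$. The right-hand side is $W_\rho(\alpha)\,W_\sigma(\beta)$.

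No step is a genuine obstacle. The only point needing care is bookkeeping: the ordering of qubits in the phase-space index $(\alpha,\beta)$ must match the ordering of tensor factors in $\rho\otimes\sigma$, so that Step 1 holds exactly rather than only up to a permutation of coordinates.
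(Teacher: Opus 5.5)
Your proof is correct and matches the paper's, which simply notes that the result is immediate from $A_{(\alpha,\beta)}=A_\alpha\otimes A_\beta$. You also spell out the multiplicativity of the trace over tensor products and the splitting $2^{-(n+m)}=2^{-n}\,2^{-m}$ of the normalisation, which the paper leaves implicit.
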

\begin{proof}
Immediate from $A_{(\alpha,\beta)}=A_\alpha\otimes A_\beta$.
\end{proof}

\begin{lemma}[Single-qubit free states are nonnegative]
\label{lem:onequbit_nonneg}
Every pure single-qubit stabilizer state has Wigner entries in
$\{0,\tfrac12\}$; consequently every $\tau\in\Wfree^{(1)}$ satisfies
$W_\tau\geq0$ and $\|W_\tau\|_1 = 1$.
(Wigner negativity of stabilizer states first occurs at $n=2$,
cf.\ $M_1=1$ versus $M_2=2$.)
\end{lemma}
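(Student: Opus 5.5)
The plan is a direct computation for the six pure stabilizer states, followed by a convexity argument. By the formula recalled in the proof of \Cref{lem:cuboct}, every single-qubit state with Bloch vector $\vec r$ has
\[
W_\rho(q,p) = \tfrac14\bigl(1+\vec r\cdot\vec t_{qp}\bigr),
\]
where $\vec t_{qp}$ ranges over $T=\{t\in\{\pm1\}^3 : t_xt_yt_z=+1\}$. Up to sign bookkeeping, the four sign vectors are $(1,1,1),(1,-1,-1),(-1,1,-1),(-1,-1,1)$, read off from the definition of $A_{(q,p)}$. The six pure stabilizer states have Bloch vectors $\pm\vec e_i$.

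For $\vec r=\pm\vec e_i$, the inner product $\vec r\cdot t$ equals $\pm t_i\in\{\pm1\}$. Each coordinate $t_i$ takes the value $+1$ on exactly two elements of $T$ and $-1$ on the other two. Hence exactly two of the four Wigner entries are $\tfrac14(1+1)=\tfrac12$ and the remaining two are $\tfrac14(1-1)=0$. This proves the first claim, that the entries lie in $\{0,\tfrac12\}$. The only point to verify is the balanced $\pm$ count of each coordinate over $T$, which is immediate from listing $T$.

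For the consequence, take $\tau\in\Wfree^{(1)}$. By definition it is a convex combination $\sum_j p_j W_{\sigma_j}$ of pure stabilizer Wigner functions, and each of these is entrywise nonnegative. Nonnegativity is preserved under convex combinations, so $W_\tau\ge0$.

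For the norm, I would use the normalization $\sum_\alpha W(\alpha)=\tr\rho\,\cdot\tfrac12\sum_\alpha\tr A_\alpha$ coming from the definition, or more simply observe that each pure stabilizer Wigner function sums to $\tfrac12+\tfrac12=1$. Convex combinations preserve this sum. Since $W_\tau$ is nonnegative, $\|W_\tau\|_1=\sum_\alpha W_\tau(\alpha)=1$.

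There is no real obstacle. The only care needed is to keep the normalization convention of \eqref{eq:wigner_def} consistent with the $\tfrac14(1+\vec r\cdot t)$ form already used in \Cref{lem:cuboct}. If that constant were off, the entries would not come out as $\{0,\tfrac12\}$.
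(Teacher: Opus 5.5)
Your proof is correct and is essentially the paper's own argument. For $\vec r=\pm\vec e_i$ the entries $\tfrac14(1+\vec r\cdot\vec t_{qp})$ equal $\tfrac14(1\pm1)$, two of each by the balanced sign count over $T$; mixtures inherit nonnegativity, and then $\|W_\tau\|_1=\sum_\alpha W_\tau(\alpha)=1$. The one slip is your aside $\tr\rho\cdot\tfrac12\sum_\alpha\tr A_\alpha$, which evaluates to $2$, not $1$; the correct identity is $\sum_\alpha W_\rho(\alpha)=\tfrac12\tr\bigl(\rho\sum_\alpha A_\alpha\bigr)=\tfrac12\tr(2\rho)=1$. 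Your fallback of summing $\tfrac12+\tfrac12$ and passing to convex combinations already settles the normalization.
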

\begin{proof}
For $\vec r = \pm\vec e_i$ the entries
$\tfrac14(1+\vec r\cdot\vec t_{qp})$ equal $\tfrac14(1\pm1)$, two of
each; mixtures inherit nonnegativity, and then
$\|W_\tau\|_1 = \sum_\alpha W_\tau(\alpha) = 1$.
\end{proof}

\begin{proposition}[Multiplicative upper bound for products]
\label{prop:prod_upper}
For any single-qubit states $\rho_1,\dots,\rho_n$,
\begin{equation}
  C(\rho_1\otimes\cdots\otimes\rho_n)
  \;\leq\; \prod_{i=1}^n\bigl(1+C(\rho_i)\bigr) - 1 .
\end{equation}
\end{proposition}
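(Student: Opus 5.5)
For each factor $\rho_i$ we fix an optimal single-qubit free point and build the candidate free point for the product as the tensor product of these optimal points. Concretely, let $\tau_i\in\Wfree^{(1)}$ attain $C(\rho_i)=\|W_{\rho_i}-W_{\tau_i}\|_1$; the minimum exists because $\Wfree^{(1)}$ is a compact polytope. Then $\tau_1\otimes\cdots\otimes\tau_n$ is a convex mixture of products of pure single-qubit stabilizer states, which are pure $n$-qubit stabilizer states. Hence its Wigner function lies in $\Wfree^{(n)}$, and by Wigner factorization it equals $\prod_i W_{\tau_i}$. This gives
\[
C(\rho_1\otimes\cdots\otimes\rho_n)\le \Bigl\|\textstyle\bigotimes_i W_{\rho_i}-\bigotimes_i W_{\tau_i}\Bigr\|_1 ,
\]
where $\bigotimes$ denotes the pointwise product of functions on $\mathbb{F}_2^{2n}$.

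Next we write $W_{\rho_i}=W_{\tau_i}+D_i$ with $\|D_i\|_1=C(\rho_i)$. Expanding the product gives
\[
\bigotimes_i W_{\rho_i}-\bigotimes_i W_{\tau_i}=\sum_{\emptyset\ne S\subseteq[n]}\ \bigotimes_{i\in S}D_i\otimes\bigotimes_{i\notin S}W_{\tau_i}.
\]
We then apply the triangle inequality together with multiplicativity of the $\ell_1$ norm on product functions, $\|f\otimes g\|_1=\|f\|_1\|g\|_1$. By \Cref{lem:onequbit_nonneg}, $\|W_{\tau_i}\|_1=1$, so each term is bounded by $\prod_{i\in S}C(\rho_i)$. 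Summing over all nonempty $S$ gives exactly $\prod_i(1+C(\rho_i))-1$.

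An alternative is a telescoping argument: induct on $n$ and bound $\|A\otimes B-A'\otimes B'\|_1\le\|A-A'\|_1\|B\|_1+\|A'\|_1\|B-B'\|_1$. With $\|W_{\rho_i}\|_1\le 1+C(\rho_i)$ (triangle inequality) and $\|W_{\tau_i}\|_1=1$, this yields the same recursion $x_{n}\le (1+C(\rho_n))x_{n-1}+C(\rho_n)$, i.e.\ $1+x_n\le\prod_i(1+C(\rho_i))$.

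The only delicate point is the role of \Cref{lem:onequbit_nonneg}. The expansion would give a worse constant if the free factors had $\ell_1$ norm exceeding $1$. This is exactly why the statement is restricted to single-qubit factors: for multi-qubit factors, $M_2=2$ would enter. We must also check that the tensor product of free polytope points stays in $\Wfree^{(n)}$. This follows because products of stabilizer states are stabilizer states and the convex hull is closed under the multilinear product map applied to mixtures. No further obstacle is expected.
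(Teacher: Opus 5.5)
Your proof is correct and follows the paper's argument essentially verbatim. Like the paper, you choose optimal single-qubit free points $\tau_i$, expand $\bigotimes_i (W_{\tau_i}+D_i)$ over subsets, and bound each nonempty term using the triangle inequality, multiplicativity of the $\ell_1$ norm, and $\|W_{\tau_i}\|_1=1$ from \Cref{lem:onequbit_nonneg}; your telescoping variant and your explicit check that $\tau_1\otimes\cdots\otimes\tau_n$ lies in $\Wfree^{(n)}$ are sound additions rather than a different route.
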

\begin{proof}
Let $\tau_i\in\Wfree^{(1)}$ attain $C(\rho_i)$ and write
$W_{\rho_i} = W_{\tau_i} + \Delta_i$ with
$\|\Delta_i\|_1 = C(\rho_i)$.
By the Wigner factorization lemma,
$W_{\rho_1\otimes\cdots\otimes\rho_n}
 = \sum_{T\subseteq[n]} \bigotimes_i X_i^{(T)}$
with $X_i^{(T)} = \Delta_i$ for $i\in T$ and $X_i^{(T)} = W_{\tau_i}$
otherwise.
The $T=\varnothing$ term is the Wigner function of the free state
$\tau_1\otimes\cdots\otimes\tau_n$, so by the triangle inequality and
the multiplicativity of $\ell_1$ norms under tensor products,
\begin{equation*}
  C(\rho_1\otimes\cdots\otimes\rho_n)
  \leq \sum_{\varnothing\neq T\subseteq[n]}
        \prod_{i\in T}\|\Delta_i\|_1 \prod_{i\notin T}\|W_{\tau_i}\|_1
  = \prod_{i=1}^n\bigl(1+C(\rho_i)\bigr) - 1,
\end{equation*}
using $\|W_{\tau_i}\|_1 = 1$ from \Cref{lem:onequbit_nonneg}.
\end{proof}

\subsection{Two qubits: the product of two faces}

\begin{proposition}[Exact value at the product of two faces]
\label{prop:maxC2}
In the Wootters product frame of \Cref{subsec:wigner}, let $\rho_f$ be the
face state with Bloch vector $(1,1,-1)/\sqrt3$.
Then
\begin{equation}
  C(\rho_f\otimes\rho_f) = \frac{\sqrt3}{2}.
\end{equation}
Equivalently, and independently of the frame convention,
$\sqrt3/2$ is the maximum of $C$ over separable two-qubit states
(\Cref{cor:exact_powers}); the particular face attaining it depends on the
frame (\Cref{rem:frame_freedom}).
\end{proposition}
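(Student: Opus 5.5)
The plan is to squeeze $C(\rho_f\otimes\rho_f)$ between matching upper and lower bounds. The upper bound comes from \Cref{prop:prod_upper}. The lower bound comes from an explicit product dual witness, used in \Cref{thm:dual}.

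\emph{Upper bound.} By \Cref{thm:maxC1}, $C(\rho_f)=c_0=(\sqrt3-1)/2$. \Cref{prop:prod_upper} with $n=2$ then gives
\[
C(\rho_f\otimes\rho_f)\le (1+c_0)^2-1=\Bigl(\tfrac{1+\sqrt3}{2}\Bigr)^2-1=\tfrac{\sqrt3}{2}.
\]

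\emph{Single-qubit witness.} First I build the single-qubit witness that certifies the bound $N\ge\tfrac12\|\cdot\|_1$ used in Step~1 of \Cref{thm:maxC1}. Let $\vec u=(1,1,-1)$ be the sign vector of the face $\vec f=(1,1,-1)/\sqrt3$. The phase-point directions are the four vectors $t\in\{\pm1\}^3$ with $t_xt_yt_z=+1$. The vector $\vec u$ has sign-product $-1$, so it is not one of them. Instead it equals the sum of the three directions $(1,1,1)$, $(1,-1,-1)$ and $(-1,1,-1)$, and the fourth direction is $(-1,-1,1)=-\vec u$.

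Define $s(\alpha)=+1$ on the three phase points with those first three directions, and $s(\alpha)=-1$ on the point with direction $-\vec u$. Then $\sum_\alpha s(\alpha)\,t_\alpha=2\vec u$ and $\sum_\alpha s(\alpha)=2$. The corresponding operator is $h=\sum_\alpha s(\alpha)A_\alpha=I+X+Y-Z$. Write $P:=X+Y-Z$, so $h=I+P$. This gives
\[
\langle s,W_\rho\rangle=\tfrac12\bigl(1+\vec u\cdot\vec r\bigr).
\]
At the face state this equals $(1+\sqrt3)/2$. On a single-qubit stabilizer state it is at most $1$.

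\emph{Product witness.} Take $S=s\otimes s$, which satisfies $\|S\|_\infty\le1$. By Wigner factorization,
\[
\langle S,W_{\rho_f\otimes\rho_f}\rangle=\Bigl(\tfrac{1+\sqrt3}{2}\Bigr)^2=1+\tfrac{\sqrt3}{2}.
\]
By \Cref{thm:dual} it remains to show $\max_{\sigma\in\Stab_2}\langle S,W_\sigma\rangle=1$. Using $W_\sigma(\alpha)=\tfrac14\tr(\sigma A_\alpha)$,
\[
\langle S,W_\sigma\rangle=\tfrac14\bigl(1+\langle P\otimes I\rangle_\sigma+\langle I\otimes P\rangle_\sigma+\langle P\otimes P\rangle_\sigma\bigr).
\]
There are two cases, covering all $60$ stabilizer states.
\begin{itemize}
\item For the $36$ product stabilizer states the value factorizes into two single-qubit values, each at most $1$. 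The bound $1$ is attained, for example at $\ket{0}\otimes\ket{0}$ after choosing signs appropriately.
\item For the $24$ maximally entangled stabilizer states the reduced states are maximally mixed, so $\langle P\otimes I\rangle=\langle I\otimes P\rangle=0$. The nine correlations $\langle \sigma_i\otimes\sigma_j\rangle$ form a signed permutation matrix, so $\langle P\otimes P\rangle=\sum_{ij}u_iu_j\langle\sigma_i\otimes\sigma_j\rangle$ has at most three nonzero terms, each $\pm1$. Hence $\langle P\otimes P\rangle\le3$ and the value is at most $1$.
\end{itemize}
Therefore $C(\rho_f\otimes\rho_f)\ge1+\tfrac{\sqrt3}{2}-1=\tfrac{\sqrt3}{2}$, which matches the upper bound.

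\emph{Main obstacle.} The one step needing care is the entangled-stabilizer case: bounding the stabilizer maximum of the product witness, where Wigner negativity (\Cref{lem:onequbit_nonneg} fails at $n=2$) could a priori push the value above $1$. The signed-permutation structure of Bell-type correlation matrices is what controls it. I would double-check this bound by directly enumerating the $24$ states as a sanity check.
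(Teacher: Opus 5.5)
Your proof is correct. The lower bound is the paper's own: the same witness $S_1\otimes S_1$, with operator $H\otimes H$ and $H=I+X+Y-Z$.

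You differ only in how you bound the stabilizer maximum. The paper uses one uniform argument: every Pauli coefficient of $H\otimes H$ is $\pm1$, and a pure stabilizer state has nonzero expectation only on the four elements of its stabilizer group, so $\tfrac14\tr(\sigma H\otimes H)\le1$. That argument extends verbatim to $n$ qubits in \Cref{cor:maxCn}. Your split into product and maximally entangled states is the same Pauli-support fact specialised to $n=2$. There are two harmless slips in it:
\begin{itemize}
\item The single-qubit factors take values $\tfrac12(1\pm1)\in\{0,1\}$. It is their nonnegativity, not merely ``at most $1$'', that bounds the product.
\item $\ket{0}\otimes\ket{0}$ gives $0$, not $1$; use $\ket{1}\otimes\ket{1}$ or $\ket{+}\otimes\ket{+}$. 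Attainment is not needed anyway.
\end{itemize}

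The upper bound is where you genuinely diverge. The paper exhibits an explicit five-term stabilizer mixture $\tau^*$ with an entangled component, and asserts the distance by a direct computation in $\mathbb{Q}(\sqrt3)$. You instead combine \Cref{prop:prod_upper} with \Cref{thm:maxC1}, both proved earlier. This is shorter and computation-free, and it yields more:
\begin{itemize}
\item Your lower bound matches, so the chain $C\le\|W_{\rho_f\otimes\rho_f}-W_{\tau_f\otimes\tau_f}\|_1\le(1+c_0)^2-1$ in the proof of \Cref{prop:prod_upper} collapses. Hence the product $\tau_f\otimes\tau_f$ of single-qubit optima (Bloch vector $(1,1,-1)/3$) is itself an optimal free state, and the entangled component of $\tau^*$ is not required.
\item Adding convexity of $C$ immediately gives the separable-maximum clause of the statement in the Wootters frame. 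The paper's proof defers that clause to \Cref{cor:exact_powers}.
\end{itemize}
The paper's route, in exchange, gives an upper-bound certificate that does not depend on the product bound.
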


\begin{proof}
\emph{Lower bound (witness).}
Let $S_1(\alpha)\in\{\pm1\}$ equal $+1$ at every single-qubit phase
point except the one where $W_{\rho_f}$ is negative
(the point $(q,p)=(0,1)$, with $\vec t_{(0,1)} = (-1,-1,1)$
antiparallel to $\vec f$), and set
$S(\alpha,\beta) = S_1(\alpha)S_1(\beta)$, so $\|S\|_\infty = 1$.
The corresponding operator is $H\otimes H$ with
\[
  H = \sum_\alpha S_1(\alpha)A_\alpha = 2I - 2A_{(0,1)} = I + X + Y - Z,
\]
and $\langle S, W_\rho\rangle = \tfrac14\tr(\rho\,H{\otimes}H)$.
Every coefficient of $H\otimes H$ in the Pauli basis is $\pm1$.
A pure two-qubit stabilizer state is
$\sigma = \tfrac14\sum_{g\in\mathcal S}g$ over its signed stabilizer
group of four elements, so
\[
  \tfrac14\tr(\sigma\,H{\otimes}H)
  = \tfrac14\sum_{g\in\mathcal S}(\pm1) \ \le\ 1
\]
for \emph{every} pure stabilizer state, entangled or not.
On the target state, Wigner factorization gives
$\tfrac14\tr(\rho_f^{\otimes2}H^{\otimes2})
 = \bigl(\tfrac12\tr(\rho_f H)\bigr)^2
 = \bigl(\tfrac{1+\sqrt3}{2}\bigr)^2 = \tfrac{2+\sqrt3}{2}$.
By \Cref{thm:dual},
$C(\rho_f\otimes\rho_f) \ge \tfrac{2+\sqrt3}{2} - 1 = \tfrac{\sqrt3}{2}$.

\emph{Upper bound (explicit free state).}
Consider the stabilizer mixture
\begin{align*}
\tau^* ={}& \tfrac{4-\sqrt3}{12}\,\ketbra{1}{1}\otimes\ketbra{+}{+}
       \;+\; \tfrac{4-\sqrt3}{12}\,\ketbra{1}{1}\otimes\ketbra{{+}i}{{+}i} \\
      &+ \tfrac{\sqrt3}{4}\,\ketbra{+}{+}\otimes\ketbra{1}{1}
       \;+\; \tfrac{5\sqrt3-8}{12}\,\ketbra{{+}i}{{+}i}\otimes\ketbra{1}{1} \\
      &+ \tfrac{2-\sqrt3}{2}\cdot
         \tfrac14\bigl(I{\otimes}I + X{\otimes}X - Y{\otimes}Z - Z{\otimes}Y\bigr),
\end{align*}
where the last term is the entangled stabilizer state with stabilizer
group $\{II,\, XX,\, -YZ,\, -ZY\}$.
All weights are positive and sum to one, and a direct computation,
exact in $\mathbb{Q}(\sqrt3)$, gives
$\|W_{\rho_f\otimes\rho_f} - W_{\tau^*}\|_1 = \sqrt3/2$.
\end{proof}

\begin{remark}
\Cref{prop:maxC2} saturates the product bound of \Cref{prop:prod_upper} at
$\rho=\sigma=\rho_f$: $\sqrt3/2 = 2c_0 + c_0^2$ with
$c_0 = C(\rho_f) = (\sqrt3-1)/2$.
This is an instance of the equality branch of the tetrahedral dichotomy
established in \Cref{sec:tensor} (both factors have sign product $s=-1$).
Note also that the optimal free state $\tau^*$ contains an entangled
stabilizer component even though superadditivity holds with equality
here.
In the trace-distance geometry of Ref.~\cite{Palhares2026}, random
sampling independently identifies the product of two face states as the
leading candidate for the two-qubit maximum (found numerically and
explicitly noted there to be separable), and that state maximizes no
single facet inequality --- the trace-distance analogue of our
observation that the optimal witnesses are supporting hyperplanes
rather than facets.
\Cref{prop:maxC2} proves the corresponding exact value in the Wigner
metric.
\end{remark}

\begin{corollary}[Unconditional exponential growth along face states]
\label{cor:maxCn}
For every $n\ge1$,
\begin{equation}
  C\bigl(\rho_f^{\otimes n}\bigr)
  \ \ge\ \Bigl(\tfrac{1+\sqrt3}{2}\Bigr)^{n} - 1 .
\end{equation}
\end{corollary}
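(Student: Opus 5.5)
The plan is to generalize the witness half of the proof of \Cref{prop:maxC2} from two to $n$ tensor factors; no free state needs to be constructed, since only a lower bound is claimed. Take the single-qubit sign function $S_1$ from that proof ($+1$ everywhere except at the phase point $(0,1)$ where $W_{\rho_f}$ is negative) and set $S(\alpha_1,\dots,\alpha_n)=\prod_i S_1(\alpha_i)$. Then $\|S\|_\infty=1$, and the associated operator is $H^{\otimes n}$ with $H=I+X+Y-Z$. By \Cref{thm:dual} it suffices to evaluate $\langle S,W_\rho\rangle=2^{-n}\tr(\rho H^{\otimes n})$ on the target and to bound it on the free set.

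\emph{Target value.} By Wigner factorization, $\langle S,W_{\rho_f^{\otimes n}}\rangle=\prod_i\langle S_1,W_{\rho_f}\rangle=\bigl(\tfrac12\tr(\rho_f H)\bigr)^n$. With $\vec f=(1,1,-1)/\sqrt3$ we get $\tfrac12(1+\sqrt3)$ for each factor, so the target value is $\bigl(\tfrac{1+\sqrt3}{2}\bigr)^n$.

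\emph{Free value.} Expanding $H^{\otimes n}$ in the Pauli basis, every one of the $4^n$ Pauli strings appears with coefficient $\pm1$. A pure $n$-qubit stabilizer state is $\sigma=2^{-n}\sum_{g\in\mathcal S}g$, where $\mathcal S$ is its signed stabilizer group of $2^n$ elements. Since $\tr(PQ)=2^n\delta_{PQ}$ up to sign for Pauli strings, we obtain $2^{-n}\tr(\sigma H^{\otimes n})=2^{-n}\sum_{g\in\mathcal S}(\pm1)\le1$. The maximum over $\Wfree$ of this linear functional is attained at a vertex, so it is at most $1$.

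Subtracting the free value from the target value gives the claim. The one step needing care is the bookkeeping of normalizations: the relation $\langle S,W_\rho\rangle=2^{-n}\tr(\rho H^{\otimes n})$, and the fact that each signed stabilizer element $g$, which carries a phase $\pm1$ relative to a Pauli string, pairs with exactly one coefficient of $H^{\otimes n}$. Both follow exactly as in the $n=2$ case. No entangled stabilizer states need separate treatment, because the bound $\le1$ holds uniformly over all pure stabilizer states.
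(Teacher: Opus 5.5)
Your proposal is correct and matches the paper's proof: the product witness $S_1^{\otimes n}$ with operator $H^{\otimes n}$ has all Pauli coefficients $\pm1$, which gives $2^{-n}\tr(\sigma H^{\otimes n})\le 1$ for every pure stabilizer state via its $2^n$-element signed stabilizer group, and Wigner factorization gives the target value $\bigl(\tfrac{1+\sqrt3}{2}\bigr)^n$. The paper also notes that the free-state bound can be read as H\"older's inequality against the stabilizer norm, but this is only a rephrasing of the same step.
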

\begin{proof}
The witness $S_1^{\otimes n}$ (operator $H^{\otimes n}$) has all Pauli
coefficients $\pm1$, and an $n$-qubit pure stabilizer group has $2^n$
elements, so $2^{-n}\tr(\sigma H^{\otimes n}) \le 1$ for every pure
stabilizer state---equivalently, H\"{o}lder's inequality against the
stabilizer norm of Ref.~\cite{Campbell2011}, which equals $1$ on pure
stabilizer states---while
$2^{-n}\tr(\rho_f^{\otimes n}H^{\otimes n})
 = \bigl(\tfrac{1+\sqrt3}{2}\bigr)^n$.
For the analogous tensor-power behaviour of the robustness of magic,
obtained through the symmetries of the stabilizer polytope, see
Ref.~\cite{Heinrich2019}.
\end{proof}

\begin{corollary}[Exact tensor powers and the separable maximum]
\label{cor:exact_powers}
For every $n\geq1$, the maximum of $C$ over fully separable $n$-qubit
states equals
\begin{equation}
  \max_{\rho\ \mathrm{separable}} C(\rho)
  = \Bigl(\tfrac{1+\sqrt3}{2}\Bigr)^{\!n} - 1 :
\end{equation}
by convexity of $C$ the maximum over the separable set is attained at pure
product states, where \Cref{prop:prod_upper} and \Cref{thm:maxC1} give the
bound.
It is attained by $\rho_f^{\otimes n}$, with $\rho_f$ the face state of
\Cref{prop:maxC2}, for which
$C(\rho_f^{\otimes n}) = ((1+\sqrt3)/2)^n - 1$ exactly---the lower bound
being \Cref{cor:maxCn} and the upper bound \Cref{prop:prod_upper} with
$C(\rho_f) = c_0$.
The value of the maximum is independent of the frame convention, since
\Cref{thm:maxC1} is (\Cref{rem:frame_freedom}); which of the eight faces
attains it is not.
In particular the separable two-qubit maximum is $\sqrt3/2$, so only
entangled states remain in question for the global two-qubit maximum.
\end{corollary}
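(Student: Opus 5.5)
The corollary has two halves: an upper bound valid for every fully separable $n$-qubit state, and a matching lower bound attained by $\rho_f^{\otimes n}$. I will prove the upper bound by reducing to pure product states and the lower bound by citing \Cref{cor:maxCn}.

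\emph{Upper bound.} A fully separable state is a convex combination $\rho=\sum_j p_j\,\rho^{(j)}_1\otimes\cdots\otimes\rho^{(j)}_n$ of pure product states. By convexity of $C$ (\Cref{thm:properties}(ii), extended to finite mixtures by induction), $C(\rho)\le\max_j C(\rho^{(j)}_1\otimes\cdots\otimes\rho^{(j)}_n)$. So it suffices to bound $C$ on product states. For these, \Cref{prop:prod_upper} gives $C\le\prod_i(1+C(\rho_i))-1$. Each factor satisfies $C(\rho_i)\le c_0=(\sqrt3-1)/2$ by \Cref{thm:maxC1}, and the right-hand side is increasing in each $C(\rho_i)\ge0$. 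Since $1+c_0=(1+\sqrt3)/2$, every separable state obeys $C(\rho)\le((1+\sqrt3)/2)^n-1$.

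\emph{Lower bound.} \Cref{cor:maxCn} gives $C(\rho_f^{\otimes n})\ge((1+\sqrt3)/2)^n-1$ through the witness $H^{\otimes n}$. Combined with the upper bound, this yields equality for $\rho_f^{\otimes n}$ and hence the stated maximum. The case $n=2$ reproduces $\sqrt3/2$, consistent with \Cref{prop:maxC2}.

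\emph{Frame independence and the main obstacle.} The frame-independence claim is where I expect the real work to lie. The upper-bound argument uses only three ingredients: convexity, \Cref{lem:onequbit_nonneg}, and \Cref{thm:maxC1}. Each must be checked to hold in every valid product frame.
\begin{enumerate}[(a)]
\item \emph{Convexity.} This holds in any frame, because $C$ is a distance to a convex set under a linear map.
\item \emph{Nonnegativity and unit $\ell_1$ norm of single-qubit free states.} In any of the eight single-qubit frames the phase-point vectors $\vec t$ are obtained from the Wootters ones by signed permutations. Stabilizer Bloch vectors $\pm\vec e_i$ then still give entries in $\{0,\tfrac12\}$.
\item \emph{The single-qubit maximum.} The value $C$ is unchanged across frames, as \Cref{rem:frame_freedom} asserts.
\end{enumerate}
For the lower bound in a different frame, I would replace $\rho_f$ by the face state antiparallel to that frame's negative phase point. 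The same witness argument then applies verbatim, because $H^{\otimes n}$ still has all Pauli coefficients equal to $\pm1$. The only genuinely delicate step is (c), which I would take as given from the frame enumeration.
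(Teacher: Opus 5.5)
Your proposal is correct and follows the paper's own argument. Convexity reduces the problem to pure product states, \Cref{prop:prod_upper} with \Cref{thm:maxC1} gives the upper bound, \Cref{cor:maxCn} gives the matching lower bound at $\rho_f^{\otimes n}$, and frame independence (for product frames) comes from the single-qubit frame invariance of \Cref{rem:frame_freedom}. Your step (c) is not actually delicate, because it follows from the fact you already use in (b): every valid single-qubit frame has phase-point vectors forming $T$ or $-T$, and the metric $N$ of \Cref{lem:cuboct} depends only on $|\langle\cdot,t\rangle|$, so $C$ is literally the same function in all eight frames.
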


\subsection{The frame freedom}

\begin{remark}[The frame freedom, and what survives it]
\label{rem:frame_freedom}
No Clifford-covariant discrete Wigner function exists for
qubits~\cite{ZhuWigner2016,Raussendorf2023cohomology}, so any frame-based
measure on qubits is necessarily non-covariant; the question is not whether
$C$ depends on the frame but which of its features do.
For a single qubit the frame freedom can be settled completely.
Antonopoulos \emph{et al.}~\cite{Antonopoulos2025} characterise all valid
discrete Wigner functions on a $d\times d$ phase space by four conditions on
the phase-point operators---Hermiticity, unit trace, orthogonality
$\tr[A(\alpha)^\dagger A(\beta)] = d\,\delta_{\alpha\beta}$, and covariance
under the Weyl--Heisenberg displacements, which for $d=2$ are the
single-qubit Paulis.
Writing $A(0) = \tfrac12(I+\vec a\cdot\vec\sigma)$, orthogonality at
$\beta=\alpha$ forces $\|\vec a\|_2^2 = 3$, while orthogonality against the
$X$-translate gives $a_x^2-a_y^2-a_z^2 = -1$; together these yield
$a_x^2=a_y^2=a_z^2=1$.
The valid single-qubit frames are therefore exactly the eight sign choices
$\vec a = (\pm1,\pm1,\pm1)$, and $C$ takes the \emph{same} value on every
state in all eight: the two admissible tetrahedra are $T$ and $-T$, and the
metric $N$ of \Cref{lem:cuboct} depends only on $|\langle\cdot,t\rangle|$.
Consequently \Cref{thm:maxC1} and the separable maxima of
\Cref{cor:exact_powers} are properties of the state alone, not of the
representation.

For two qubits the freedom is larger and the conclusion reverses.
The $15$ non-identity Paulis admit $6$ partitions into $5$ mutually
commuting triples (the MUB classes), and each triple contributes a choice of
$4$ origin eigenstates, giving $6\times4^5 = 6144$ translation-covariant
frames, all of which satisfy the conditions above.
Evaluating $C$ in each, the ratio
$C(\rho^{Rx}_\theta)/C(\rho^{Ry}_\theta)$ takes the values $\tfrac12$, $1$
and $2$, with the frames splitting into four classes of exactly $1536$: the
value $\tfrac12$ of \Cref{thm:kappa_ry_rx} holds in one class, is inverted
in another, and the two families have equal $C$ in the remaining two.
The $\kappa$ trichotomy, the tetrahedral dichotomy of \Cref{obs:dichotomy}
and the noise asymmetry of \Cref{subsec:noise} are therefore properties of
the pair (state, frame).
We note that the classification of Ref.~\cite{Antonopoulos2025} is for a
single $d$-dimensional system, whose displacement group for $d=4$ is
$\mathbb{Z}_4$ Weyl--Heisenberg rather than the two-qubit Pauli group; the
multi-qubit frames enumerated here are covariant under the latter and lie
outside that classification.

This is the precise sense in which the Wigner distance is a frame-relative
diagnostic: the extremal geometry it assigns to the stabilizer polytope is
intrinsic, while the finer structure it resolves within a fixed frame---the
content of \Cref{sec:kappa} and \Cref{sec:tensor}---is meaningful relative
to that frame, as is any quantity a Clifford-invariant measure such as
$\Gamma$ necessarily averages away.
\end{remark}

\begin{remark}[The opposite tetrahedron]
\label{rem:plus_faces}
Within the Wootters product frame, the witness construction applies
precisely to the four faces in the local-Pauli orbit of $(1,1,-1)/\sqrt3$
(tetrahedral sign $s=-1$).
For the $s=+1$ faces, e.g.\ $\rho_g$ with Bloch vector
$(1,1,1)/\sqrt3$, the candidate operator
$I+X+Y+Z = 2A_{(0,0)}$ has Wigner coefficients $(2,0,0,0)$ and is
dual-infeasible, and the tensor powers behave very differently:
numerically $C(\rho_g^{\otimes2}) \doteq \sqrt3/4$ and
$C(\rho_g^{\otimes3}) \approx 0.5245$, far below the $s=-1$ values.
The two orbits therefore behave quite differently under tensoring; this is
the phenomenon analysed as the tetrahedral dichotomy in
\Cref{sec:tensor}.
We stress that the labelling of which tetrahedron is the ``witness-feasible''
one is fixed by the frame, not by the states: a different admissible choice
of phase-point origin exchanges the two orbits
(\Cref{rem:frame_freedom}).
What is frame-independent is that the eight faces split into two orbits of
four with the above pair of behaviours.
\end{remark}

\section{Tensor Products and the Tetrahedral Dichotomy}
\label{sec:tensor}

\subsection{Setup}

For odd prime dimensions, $C_{d=3}(\rho) = \|W_\rho\|_1 - 1$ and
Wigner factorization~\eqref{eq:wigner_factorization} yields
$C(\rho\otimes\sigma) = C(\rho) + C(\sigma) + C(\rho)C(\sigma)$ exactly.
For qubits, the joint polytope $\Wfree^{(2)}$ contains entangled stabilizer
states that break this identity, creating an asymmetric structure in the
Bloch sphere.

\subsection{The Tetrahedral Dichotomy}

Let $\rho_T$ denote any equatorial single-qubit magic state;
e.g.\ the T-state with Bloch vector $(1/\sqrt{2},1/\sqrt{2},0)$ has
$C(\rho_T)\approx 0.207$.

\begin{observation}[Tetrahedral dichotomy]
\label{obs:dichotomy}
For a single-qubit state $\rho$ with Bloch vector $\vec r$, write
$s(\rho) := \operatorname{sgn}(r_x r_y r_z)$.
For all single-qubit states $\rho,\sigma$ with $C(\rho),C(\sigma)>0$:
\begin{enumerate}[(i)]
  \item If $s(\rho)\leq 0$ and $s(\sigma)\leq 0$:
        \begin{equation}
          C(\rho \otimes \sigma) = C(\rho) + C(\sigma) + C(\rho)C(\sigma),
        \end{equation}
        i.e.\ the upper bound of \Cref{prop:prod_upper} is attained.
  \item If $s(\rho)>0$ or $s(\sigma)>0$:
        \begin{equation}
          C(\rho \otimes \sigma) < C(\rho) + C(\sigma) + C(\rho)C(\sigma).
        \end{equation}
\end{enumerate}
In scans of Haar-random pure pairs the separation is perfect: equality
to within $10^{-6}$ in $40$ of $40$ sampled pairs with $s\leq0$ on both
factors, and in $0$ of $160$ pairs otherwise.
Representative exact points: $C(\rho_f\otimes\rho_f) = \sqrt3/2$
(\Cref{prop:maxC2}), while
$C(\rho_g\otimes\rho_g) = 0.4330127\ldots \doteq \sqrt3/4$ and
$C(\rho_g\otimes\rho_f) = C(\rho_f\otimes\rho_g) = 0.6160254\ldots$,
where $\rho_g$ and $\rho_f$ denote the face states with Bloch vectors
$(1,1,1)/\sqrt3$ and $(1,1,-1)/\sqrt3$.
Unlike any hemispheric condition, this criterion is invariant under
local Pauli conjugations and under qubit swap, as any such law must be
(\Cref{rem:pauli_covariance}).
\end{observation}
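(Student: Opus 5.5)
The plan is to prove (i) by matching the upper bound of \Cref{prop:prod_upper} with a product dual witness, in the spirit of the lower-bound half of \Cref{prop:maxC2}, and to prove (ii) by exhibiting a strictly better free state. Throughout, a witness is \emph{normalized} if its maximum over single-qubit stabilizer states equals $1$. Recall that the Wootters phase-point vectors $\vec t_{qp}$ all lie in $T=\{t_xt_yt_z=+1\}$. Hence $s(\rho)<0$ says precisely that $\vec r$ lies in an octant whose diagonal is $-\vec t$ for some phase point $\vec t\in T$.

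\emph{Step 1 (single-qubit witnesses for $s\le 0$).} Using \Cref{lem:cuboct} and the support-function picture of the proof of \Cref{thm:maxC1}, $C(\rho)=\mathrm{dist}_N(\vec r,O)$. The separating normal $\vec u$ gives an optimal dual witness through \Cref{thm:dual}. I would write the optimal $S_\rho\in[-1,1]^4$ explicitly and normalize it by an additive shift. I then aim to show that for $s(\rho)\le0$ such a witness can be chosen with operator
\[
H_\rho=\textstyle\sum_\alpha S_\rho(\alpha)A_\alpha=I+\vec a\cdot\vec\sigma ,\qquad \|\vec a\|_\infty\le 1 .
\]
Two consequences follow. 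First, $\tfrac12\tr(\rho H_\rho)=1+C(\rho)$. Second, the maximum over $\{\pm\vec e_i\}$ is $1$. For the face state this is exactly $H=I+X+Y-Z$. For $\vec r$ in the region where $N$ is governed by $\|\cdot\|_1$, I expect $H_\rho=I-\vec t\cdot\vec\sigma$. Near an axis, where $\|\cdot\|_\infty$ dominates, I expect a witness of the form $I\pm\sigma_j$ plus bounded corrections.

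The sign condition should be exactly what makes $\|\vec a\|_\infty\le1$ compatible with the bound $\|S\|_\infty\le1$. The reason is that $\vec a$ must point into the octant of $-\vec t$, the direction where the phase points have a single negative entry. For $s>0$ the analogous candidate is $I+\vec t\cdot\vec\sigma=2A_{\vec t}$, which is dual-infeasible as in \Cref{rem:plus_faces}.

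\emph{Step 2 (lifting to the product).} I take $S=S_\rho\otimes S_\sigma$, so that $\|S\|_\infty\le1$ and the operator is $H_\rho\otimes H_\sigma$. This operator has identity coefficient $1$ and all Pauli coefficients of modulus at most $1$. For a pure two-qubit stabilizer state $\tfrac14\sum_{g\in\mathcal S}g$, the argument of \Cref{prop:maxC2} then gives $\tfrac14\tr(\tau H_\rho\otimes H_\sigma)\le1$, covering entangled states as well. By Wigner factorization the value on $\rho\otimes\sigma$ is $(1+C(\rho))(1+C(\sigma))$. \Cref{thm:dual} therefore yields $C(\rho\otimes\sigma)\ge(1+C(\rho))(1+C(\sigma))-1$, and \Cref{prop:prod_upper} supplies the matching upper bound, proving (i).

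\emph{Step 3 (strictness for $s>0$).} Assume without loss of generality that $s(\rho)>0$. The product decomposition of \Cref{prop:prod_upper} puts the error on the cross term $\Delta_\rho\otimes\Delta_\sigma$ and saturates the triangle inequality. I would perturb the product free state $\tau_\rho\otimes\tau_\sigma$ by a small admixture $\epsilon(\text{Bell-type stabilizer}-\text{product stabilizer})$. The Bell-type state would be chosen from the orbit $\{II,\pm XX,\pm YZ,\pm ZY\}$ used for $\tau^*$, aligned with the positive tetrahedron. The aim is to show that the first-order change in $\|W_{\rho\otimes\sigma}-W_\tau\|_1$ is strictly negative. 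This should happen because, for $s>0$, the negative lobe of $\Delta_\rho$ sits at a phase point where an entangled stabilizer's negative Wigner entry can partly cancel it.

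This directional-derivative computation is the main obstacle. It requires a uniform sign analysis over the whole open region $s>0$ together with arbitrary $\sigma$. I would first settle it at the face states, where the numerical values $\sqrt3/4$ and $0.616\ldots$ are available, and then extend by a continuity and convexity argument across each octant.
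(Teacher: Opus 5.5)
Your part (i) is sound, and it goes further than the paper. The paper proves only the inequality of \Cref{prop:prod_upper} and the single point \Cref{prop:maxC2}, and supports the equality branch numerically. Step 1 should be stated as a fact, not an expectation, and your guess about a separate near-axis regime is wrong: a witness $I\pm\sigma_j$ has nonpositive value. From \Cref{lem:cuboct}, $C(\rho)=\tfrac12(\|\vec r\|_1-1)$ whenever $C(\rho)>0$. The lower bound is $N\ge\tfrac12\|\cdot\|_1$ together with $\|\vec r'\|_1\le1$ on $O$. The upper bound comes from moving $\vec r$ onto $\partial O$ along a sign-aligned vector whose largest entry is at most half its $\ell_1$ norm, which is always possible in the Bloch ball. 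So for $s(\rho)\le0$ a single witness works uniformly: $S_\rho=1-2\delta_{\alpha_0}$ with $\vec t_{\alpha_0}=-\operatorname{sgn}(\vec r)\in T$, a vanishing coordinate being completed so that $\vec t_{\alpha_0}\in T$. Its operator is $H_\rho=I-\vec t_{\alpha_0}\cdot\vec\sigma$, and $\tfrac12\tr(\rho H_\rho)=\tfrac12(1+\|\vec r\|_1)=1+C(\rho)$. Step 2 then goes through as you wrote it.

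\textbf{The gap is in (ii).} Step 3 assumes that $\tau_\rho\otimes\tau_\sigma$ saturates the triangle inequality in \Cref{prop:prod_upper}. That is false precisely when $s(\rho)>0$, and that failure is the proof. Your Bell-type perturbation is never computed, and ``continuity and convexity'' cannot close the argument. Continuity only spreads strictness to neighbourhoods of the face states. Convexity of $C$ makes the deficit concave in $\vec r$ where $C(\rho)>0$ on the $s>0$ octant, since $C(\rho)$ is affine there. That yields strictness only in the relative interior, misses the pure states, and still needs a strictly positive point for every $\sigma$.

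The missing idea is complementary slackness. By swap symmetry let $s(\rho)>0$ and $\vec t_{\alpha_0}=\operatorname{sgn}(\vec r)$.
\begin{itemize}
\item The witness $S=2\delta_{\alpha_0}-1$ is optimal with free maximum $0$. Hence every optimal $\tau_\rho$ has $W_{\tau_\rho}(\alpha_0)=\tfrac12$, $\Delta_\rho(\alpha_0)=C(\rho)/2>0$, and $\Delta_\rho\le0$ elsewhere.
\item Off $\alpha_0$, $W_\rho$ has at most one negative entry. Its size is $\tfrac14(|r_j|+|r_k|-|r_i|-1)<C(\rho)/2$, where $i$ is the coordinate in which that point agrees with $\vec t_{\alpha_0}$, because $r_i\neq0$.
\item $\Delta_\rho$ carries mass $C(\rho)/2$ off $\alpha_0$, and it can sit where $W_{\tau_\rho}=0$ only on that negative entry. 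So some $\alpha^*\neq\alpha_0$ has $\Delta_\rho(\alpha^*)<0<W_{\tau_\rho}(\alpha^*)$.
\item If $C(\sigma)>0$, some $\beta$ has $\Delta_\sigma(\beta)\neq0<W_{\tau_\sigma}(\beta)$. Otherwise the two supports are disjoint, which forces $W_{\tau_\sigma}=\tfrac12$ on two points and hence $\vec r_\sigma=\pm\vec e_i$.
\item At $(\alpha^*,\beta)$ when $\Delta_\sigma(\beta)>0$, or at $(\alpha_0,\beta)$ when $\Delta_\sigma(\beta)<0$, the entries of $\Delta_\rho\otimes W_{\tau_\sigma}$ and $W_{\tau_\rho}\otimes\Delta_\sigma$ are nonzero with opposite signs.
\end{itemize}
The triangle inequality is therefore strict, and the product free state alone beats $C(\rho)+C(\sigma)+C(\rho)C(\sigma)$; no entangled stabilizer is needed. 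This matches the paper's numbers. With $W_\tau=(\tfrac12,\tfrac16,\tfrac16,\tfrac16)$, one gets $\|W_{\rho_g\otimes\rho_g}-W_{\tau\otimes\tau}\|_1=\sqrt3/4$. For $\rho_g\otimes\rho_f$, the product of the single-qubit optima gives $0.6160\ldots$, the quoted value.
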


\begin{remark}[Hemispheres cannot govern the dichotomy]
\label{rem:pauli_covariance}
Local Pauli conjugations act as phase-space translations, so
$C(\rho\otimes\sigma)$, $C(\rho)$, and $C(\sigma)$ are all invariant
under $\sigma \mapsto P\sigma P^\dagger$.
But $\langle Z\rangle_\sigma$ is not: conjugating the second factor by
$X$ flips $\langle Z\rangle_\sigma \mapsto -\langle Z\rangle_\sigma$
while leaving every $C$-value unchanged, so any law conditioned on the
hemisphere of $\sigma$ is inconsistent with Pauli covariance --- each
``northern'' deficit pair has a ``southern'' image with an identical
deficit.
The Pauli invariants of a Bloch vector are $(|r_x|,|r_y|,|r_z|)$
together with the sign product $s$; the dichotomy is governed by $s$.
An earlier version of this work stated the dichotomy in terms of the
hemisphere of $\langle Z\rangle_\sigma$; that version is correct on
meridian slices where $\operatorname{sgn}\langle Z\rangle_\sigma$
coincides with $s(\sigma)$, as in \Cref{tab:dichotomy}, but fails in
general: e.g.\ $C(\rho_g\otimes\rho_f) \approx 0.6160 \neq \sqrt3/2$
even though $\langle Z\rangle_{\rho_f} < 0$.
\end{remark}

Representative data is in \Cref{tab:dichotomy}.

\begin{table}[ht]
\centering
\caption{Superadditivity of $C$ for $\rho_T\otimes\sigma$ at fixed
azimuthal angle $\phi=1.05$, with $C(\rho_T)\approx 0.207$.
On this meridian slice $\operatorname{sgn}\langle Z\rangle_\sigma$
coincides with the tetrahedral sign $s(\sigma)$
(\Cref{rem:pauli_covariance}).}
\label{tab:dichotomy}
\setlength{\tabcolsep}{3pt}
\small
\begin{tabular}{@{}lrrc@{}}
\toprule
$\sigma$ & $\langle Z\rangle_\sigma$ & $C(\rho_T\!\otimes\!\sigma)$ & SA? \\
\midrule
Equatorial ($\theta\!=\!\pi/2$) & $\phantom{+}0.000$ & $0.427$ & \checkmark \\
Northern  ($\theta\!=\!1.20$)   & $+0.362$           & $0.454$ & $\times$ ($\delta\!=\!0.071$) \\
Northern  ($\theta\!=\!0.35$)   & $+0.939$           & $0.332$ & $\times$ ($\delta\!=\!0.121$) \\
Southern  ($\theta\!=\!1.94$)   & $-0.362$           & $0.590$ & \checkmark \\
Southern  ($\theta\!=\!2.80$)   & $-0.939$           & $0.450$ & \checkmark \\
\bottomrule
\end{tabular}
\end{table}

\begin{figure}[ht]
  \centering
  \includegraphics[width=\columnwidth]{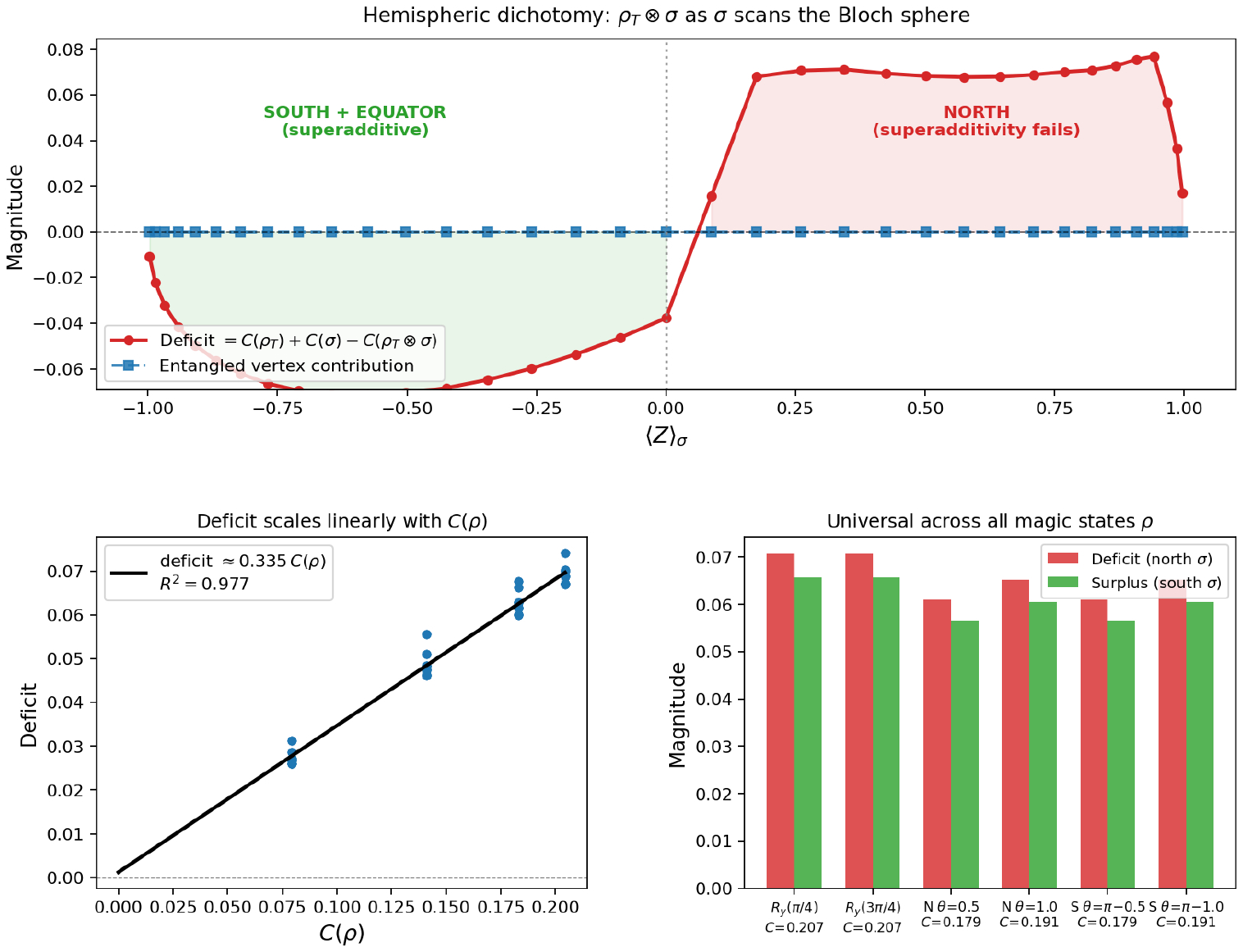}
  \caption{Numerical analysis of the dichotomy on an equatorial-$\rho$ slice.
  \textit{Top left:} Deficit (red) and improvement from entangled stabilizer
  vertices (blue) vs.\ $\langle Z\rangle_\sigma$.
  Superadditivity fails precisely for $\langle Z\rangle_\sigma > 0$
  (northern hemisphere, shaded red) and holds for
  $\langle Z\rangle_\sigma \leq 0$ (southern and equator, shaded green).
  \textit{Top right:} Deficit scales linearly with $C(\rho)$
  (slope $\approx 0.335$, $R^2=0.977$).
  \textit{Bottom:} Universality across different first states $\rho$.
  (Data on equatorial-$\rho$ slices, where
  $\operatorname{sgn}\langle Z\rangle_\sigma = s(\sigma)$;
  cf.\ \Cref{rem:pauli_covariance}.)}
  \label{fig:dichotomy}
\end{figure}

\subsection{Geometric Interpretation}

\paragraph{Why the $+$ tetrahedron loses.}
When $s(\sigma)>0$, the Bloch vector of $\sigma$ lies in the
sign-region spanned by the Wootters tetrahedron directions --- the four
phase-point directions themselves.
Joint stabilizer mixtures can then exploit phase-space mass concentrated
at the corresponding product phase points, yielding approximations
strictly better than any product of single-qubit optima.

\paragraph{Why the $-$ tetrahedron is rigid.}
When $s(\rho),s(\sigma)\leq0$, both states are aligned with the dual
tetrahedron, away from every phase-point direction; numerically no joint
stabilizer mixture beats the product of single-qubit optima, and the
bound of \Cref{prop:prod_upper} is attained.
The asymmetry is an artifact of the Wootters tetrahedral orientation
and is not intrinsic.

The deficit scales as
\begin{equation}
  \mathrm{deficit}(\rho,\sigma)
  \approx 0.335\,C(\rho)\cdot\langle Z\rangle_\sigma
  \quad (\langle Z\rangle_\sigma > 0),
  \label{eq:deficit_scaling}
\end{equation}
where the coefficient $0.335$ is obtained by linear regression
($R^2=0.977$) over equatorial magic states $\rho$ with varying northern~$\sigma$.
On such slices $\operatorname{sgn}\langle Z\rangle_\sigma$ and
$s(\sigma)$ coincide, so the scaling variable is slice-specific
(\Cref{rem:pauli_covariance}).

\subsection{Partial Results and Conjectures}

\begin{remark}
\Cref{prop:prod_upper} proves the ``$\leq$'' halves of
\Cref{conj:equatorial} and of \Cref{obs:dichotomy}(i); the open content
of the equality branch of the dichotomy is the matching lower bound.
\end{remark}

\begin{observation}[Sign condition]
\label{prop:sign_condition}
$\mathrm{deficit}(\rho,\sigma)>0 \implies s(\rho)>0$ or
$s(\sigma)>0$, where the deficit is measured against the product
formula of \Cref{prop:prod_upper}.
Equivalently, $s(\rho),s(\sigma)\leq 0$ implies
$C(\rho\otimes\sigma) = C(\rho)+C(\sigma)+C(\rho)C(\sigma)$.
\end{observation}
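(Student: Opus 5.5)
The ``$\le$'' half is \Cref{prop:prod_upper}, so what remains is the matching lower bound
\[
C(\rho\otimes\sigma)\ \ge\ (1+C(\rho))(1+C(\sigma))-1 .
\]
I would obtain it by a product witness, extending the argument of \Cref{prop:maxC2} from the single face $\rho_f$ to every state with $s\le 0$.

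\textbf{Step 1 (normalised single-qubit witness).} For a single-qubit $\rho$ with $C(\rho)>0$, take an optimal dual $S_\rho$ from \Cref{thm:dual}. Since $\sum_\alpha W(\alpha)=1$ for every state, adding a constant to $S_\rho$ shifts $\langle S_\rho,W_\rho\rangle$ and $\max_{\Wfree}\langle S_\rho,\cdot\rangle$ equally. I would use this shift to arrange
\[
\max_{\Wfree^{(1)}}\langle S_\rho,W_f\rangle=1,\qquad \langle S_\rho,W_\rho\rangle=1+C(\rho).
\]
The aim is to keep, at the same time, the structural property that made \Cref{prop:maxC2} work. Writing the witness operator as $H_\rho=\sum_\alpha S_\rho(\alpha)A_\alpha$, that property is that $\tfrac12 H_\rho$ has identity coefficient $1$ and all Pauli coefficients of modulus at most $1$. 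Concretely, I would target
\[
H_\rho=2\,(I+\vec a\cdot\vec\sigma)\quad\text{with } \|\vec a\|_\infty\le 1.
\]

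\textbf{Step 2 (tensoring the witness).} Put $S=S_\rho\otimes S_\sigma$, so $\|S\|_\infty\le 1$. By Wigner factorization,
\[
\langle S,W_{\rho\otimes\sigma}\rangle=(1+C(\rho))(1+C(\sigma)).
\]
The operator $H_\rho\otimes H_\sigma$, rescaled as in \Cref{prop:maxC2}, has identity coefficient $1$ and all Pauli coefficients of modulus at most $1$. A pure two-qubit stabilizer state, entangled or not, is $\tfrac14\sum_{g\in\mathcal S}g$ over four signed Paulis. Exactly as in \Cref{prop:maxC2}, the stabilizer value is therefore at most $\tfrac14\cdot 4=1$. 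By convexity the same bound holds on all of $\Wfree^{(2)}$, and \Cref{thm:dual} then gives the desired lower bound.

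\textbf{Step 3 (existence of such a witness; the main obstacle).} The real content is showing that an optimal single-qubit witness with $\|\vec a\|_\infty\le1$ exists, and this is exactly where $s\le 0$ must enter. By \Cref{lem:cuboct}, $C(\rho)=\operatorname{dist}_N(\vec r,O)$, so the dual problem is over directions $\vec u$ in the dual-norm ball of $N$, i.e.\ $h_K(\vec u)\le 1$:
\[
C(\rho)=\max_{h_K(\vec u)\le1}\bigl(\langle\vec u,\vec r\rangle-\|\vec u\|_\infty\bigr).
\]
I would compute the maximising $\vec u$ explicitly via the cone decomposition used in the proof of \Cref{thm:maxC1}. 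The resulting witness then has entries $S(\alpha)=c+\langle\vec u,\vec t_\alpha\rangle$, with $\vec t_\alpha$ ranging over the Wootters tetrahedron $T$.

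Two conditions must then be checked: the entries satisfy $|S(\alpha)|\le1$, and the Pauli coefficients are bounded by the identity coefficient. The first is $|c+\langle\vec u,t\rangle|\le 1$ for $t\in T$. The danger is the vertex $t\in T$ closest to $\vec u$, where $\langle\vec u,t\rangle$ can reach $h_K(\vec u)$ plus the third component.

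When $s(\vec r)\le 0$, I expect the optimal $\vec u$ to lie in the closed sign-cone of $-T$. Then every $t\in T$ has at least one coordinate of opposite sign to $\vec u$, and this should yield $|\langle\vec u,t\rangle|\le h_K(\vec u)$ uniformly. That is precisely the feasibility that fails for $I+X+Y+Z$ in \Cref{rem:plus_faces}.

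I would first try to verify this sign-cone claim on the generic face of $K$ (where $a\ge b+c$ and where $a<b+c$ separately), and then handle the boundary cases $r_i=0$ by the Lipschitz continuity of \Cref{thm:properties}(iv). If the optimal $\vec u$ is not unique, I would choose one within the $-T$ cone.
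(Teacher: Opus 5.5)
Your strategy is the right one: tensor two single-qubit witnesses whose operators have identity coefficient $1$ and Pauli coefficients of modulus at most $1$, then reuse the stabilizer-group count of \Cref{prop:maxC2}. Note that the paper does not prove this statement. It is recorded as an observation backed by random sampling, the matching lower bound is explicitly called open, and only the face-state instance is proved (with $H=I+X+Y-Z$). As written, though, your proposal has a genuine gap. Step~3 carries all the content, you leave it as an expectation, and the route you sketch for it would not succeed.

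First, the normalization is inconsistent. With $S_\rho(\alpha)=c+\langle\vec u,\vec t_\alpha\rangle$ the stabilizer maximum is $c+\|\vec u\|_\infty$. Fixing it to $1$ pins $c=1-\|\vec u\|_\infty$ and gives $H_\rho=2c\,I+2\vec u\cdot\vec\sigma$. Your target $H_\rho=2(I+\vec a\cdot\vec\sigma)$ has stabilizer maximum $1+\|\vec a\|_\infty$, and with it Step~2 only gives $\tfrac14\tr(\tau\,H_\rho\otimes H_\sigma)\le 4$. The correct target is $H_\rho=I+\vec a\cdot\vec\sigma$ with $\|\vec a\|_\infty=1$.

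Second, because $c$ is pinned, the feasibility condition you need on the upper side of $|S|\le1$ is $\max_{t\in T}\langle\vec u,t\rangle\le\|\vec u\|_\infty$, not $|\langle\vec u,t\rangle|\le h_K(\vec u)$. The latter is false even for the paper's own face witness $\vec u=\tfrac12(1,1,-1)$: at $t=(-1,-1,1)$ one gets $|\langle\vec u,t\rangle|=\tfrac32>1$. Nor does membership in the $-T$ sign cone imply the correct condition. For example, $\vec u=(\tfrac12,\tfrac12,-\tfrac14)$ lies in that cone with $h_K(\vec u)=1$, yet $S=\tfrac54$ at $t=(1,1,1)$.

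The missing idea is to solve the single-qubit dual exactly rather than locate its optimizer in a cone. By rearrangement you may restrict to the chamber where $\vec u$ is sign-aligned with $\vec r$ and ordered like $|\vec r|$. There the objective $\langle\vec u,\vec r\rangle-\|\vec u\|_\infty$ is linear, and the feasible set is the simplex with vertices $0$, $(1,0,0)$, $(\tfrac12,\tfrac12,0)$, $(\tfrac12,\tfrac12,\tfrac12)$ in sorted coordinates. The last vertex is optimal, so $C(\rho)=\tfrac12(\|\vec r\|_1-1)$ whenever this is positive. The optimizer is $\vec u=\tfrac12\vec\epsilon$ for any $\vec\epsilon\in\{\pm1\}^3$ agreeing with $\operatorname{sgn}r_i$ on the nonzero coordinates.

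Hence $H_\rho=I+\vec\epsilon\cdot\vec\sigma$ automatically has all Pauli coefficients $\pm1$. Moreover, $S_\rho(\alpha)=\tfrac12(1+\langle\vec\epsilon,\vec t_\alpha\rangle)$ takes values in $\{\pm1\}$ exactly when $\vec\epsilon\in -T$, i.e.\ precisely when $s(\rho)\le 0$. A zero coordinate leaves its sign free, so no continuity argument is needed. For $\vec\epsilon\in T$ one gets $S\in\{2,0\}$, which is the infeasibility of \Cref{rem:plus_faces}.

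With this witness your Step~2 goes through verbatim. You get $\langle S_\rho\otimes S_\sigma,W_{\rho\otimes\sigma}\rangle=\tfrac14(1+\|\vec r\|_1)(1+\|\vec r'\|_1)=(1+C(\rho))(1+C(\sigma))$, and the two-qubit stabilizer maximum is at most $1$. \Cref{prop:prod_upper} then supplies equality. If $C(\rho)=0$, the constant witness $S_\rho\equiv 1$ does the job. Repaired this way, your argument would upgrade the observation to a theorem.
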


\begin{conjecture}[Factored deficit]
\label{conj:factored_deficit}
For equatorial magic states $\rho$ and qubit states $\sigma$ with
$s(\sigma) > 0$:
\begin{equation}
  \mathrm{deficit}(\rho,\sigma) = C(\rho)\cdot f(\sigma)
\end{equation}
for a universal function $f\geq0$ of the Pauli invariants
$(|r_x|,|r_y|,|r_z|)$ of $\sigma$, vanishing iff $s(\sigma)\leq0$.
On fixed-azimuth meridian slices such as that of \Cref{tab:dichotomy},
numerically $f \approx 0.335\,\langle Z\rangle_\sigma$.
\end{conjecture}

\begin{conjecture}[Equatorial multiplicativity]
\label{conj:equatorial}
For $\langle Z\rangle_\rho = \langle Z\rangle_\sigma = 0$:
$C(\rho \otimes \sigma) = C(\rho) + C(\sigma) + C(\rho)C(\sigma)$.
\end{conjecture}

\begin{conjecture}[Self-tensor superadditivity, $s\leq0$ branch]
\label{conj:selftensor}
For any qubit state $\rho$ with $s(\rho)\leq 0$:
$C(\rho \otimes \rho) \geq 2C(\rho)$.
\end{conjecture}
Without the restriction the statement is false: for the face state
$\rho_g$ with Bloch vector $(1,1,1)/\sqrt3$,
$C(\rho_g\otimes\rho_g) \doteq \sqrt3/4 < \sqrt3-1
 = 2C(\rho_g)$.
For $s(\rho)\leq0$ the claim follows from \Cref{obs:dichotomy}(i)
wherever that equality holds, since $(1+C)^2-1 \geq 2C$.

\begin{remark}[Origin of the asymmetry]
\label{rem:dichotomy_origin}
The dichotomy arises from the tetrahedral structure of the Wootters
product construction.
The phase-point operator $A_{(0,0)} = \tfrac{1}{2}(I+X+Y+Z)$ is the
projector onto the $+1$ eigenstate of $(X+Y+Z)/\sqrt{3}$, pointing
along the tetrahedral direction $(1,1,1)$; the four phase-point
directions single out the sign-region $s>0$, while the dual tetrahedron
($s<0$) contains no phase-point direction.
The asymmetry is not intrinsic to the physics: the phase gate $S$ acts
on Bloch space as $(x,y,z)\mapsto(-y,x,z)$, reversing $s$, and
correspondingly $S\otimes I$ maps one branch of the dichotomy onto the
other --- the same frame anisotropy responsible for the
$\kappa^{Ry}=1$ versus $\kappa^{Rx}=2$ distinction
(\Cref{rem:cliff_equiv}).
\end{remark}

\section{The Tightness Ratio: Exact Results}
\label{sec:kappa}

\subsection{The Repetition-Code Subspace}

We work in $\mathcal{C} = \mathrm{span}\{\ket{00},\ket{11}\}$, the codespace
of the two-qubit repetition code, with logical Pauli operators
$\XL = X\otimes X$, $\YL = Y\otimes X$, $\ZL = Z\otimes I$.

\begin{figure}[ht]
\centering
\includegraphics[width=\columnwidth]{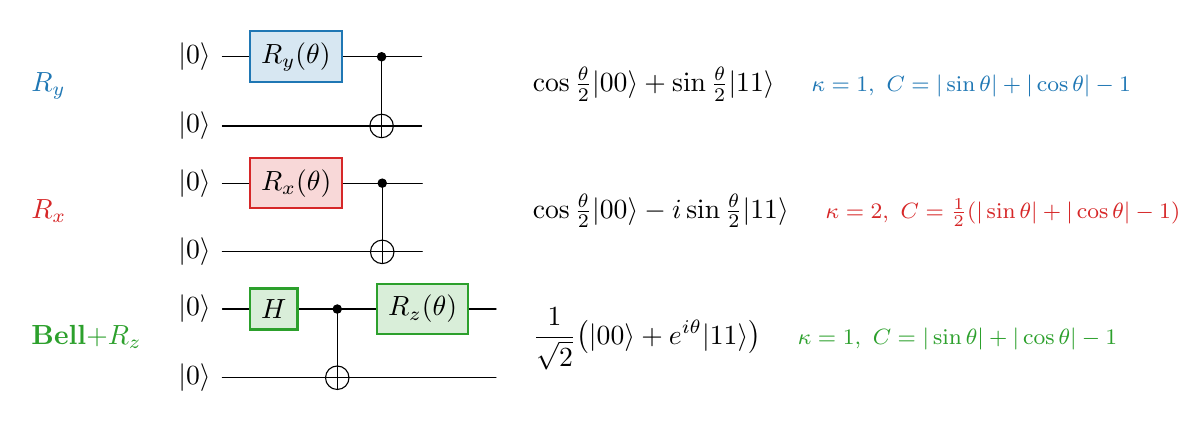}
\caption{%
  Preparation circuits for the three families.
  The CNOT encodes the single-qubit state into the codespace
  $\mathcal{C} = \mathrm{span}\{\ket{00},\ket{11}\}$.
  Real coherence ($R_y$, XZ meridian, blue) gives $\kappa=1$;
  imaginary coherence ($R_x$, YZ meridian, red) gives $\kappa=2$;
  equatorial phase (Bell$+R_z$, equatorial circle, green) gives $\kappa=1$.%
}
\label{fig:circuits}
\end{figure}

\subsection{The $R_y$ and $R_x$ Families}

\begin{definition}
\begin{align}
  \ket{\psi_\theta^{Ry}} &:= \cos\tfrac{\theta}{2}\ket{00}
                              + \sin\tfrac{\theta}{2}\ket{11}, \\
  \ket{\psi_\theta^{Rx}} &:= \cos\tfrac{\theta}{2}\ket{00}
                              - i\sin\tfrac{\theta}{2}\ket{11},
\end{align}
for $\theta \in (0,\pi)$, tracing the XZ and YZ meridians of the
logical Bloch sphere respectively.
\end{definition}

\begin{theorem}[Exact $\kappa$ for $R_y$ and $R_x$]
\label{thm:kappa_ry_rx}
For all $\theta \in (0,\pi)$:
\begin{align}
  C(\rho_\theta^{Ry}) &= \sct - 1, &
  \kappa^{Ry} &= 1, \\[2pt]
  C(\rho_\theta^{Rx}) &= \tfrac{1}{2}\!\left(\sct - 1\right), &
  \kappa^{Rx} &= 2.
\end{align}
Both families share $\Gamma(\rho_\theta) = \sct$.
\end{theorem}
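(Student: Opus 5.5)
The plan is to treat each family as a logical single-qubit state in the codespace $\mathcal{C}$. I would prove $\Gamma$ and $C$ separately, each by a matching pair of bounds: an explicit (quasi-)decomposition for the upper bound and an explicit dual witness for the lower bound. Throughout write $c=\cos\theta$, $s=\sin\theta>0$. The first step is the Pauli expansions
\[
\rho^{Ry}_\theta=\tfrac14\bigl(II+ZZ+c(ZI+IZ)+s(XX-YY)\bigr),\qquad
\rho^{Rx}_\theta=\tfrac14\bigl(II+ZZ+c(ZI+IZ)+s(XY+YX)\bigr).
\]
The Wigner functions then follow from the product frame: every $A_\alpha$ is $\tfrac14\sum_P \pm P$, with sign pattern dictated by the single-qubit tetrahedra. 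The sign of the $XY+YX$ term must be checked against the conventions of \Cref{subsec:wigner}.

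For $\Gamma$, the upper bound is an affine combination of encoded stabilizer states. I would use $\ketbra{00}{00}$, $\ketbra{11}{11}$ and the two Bell states $\Phi^\pm$ (for $R_y$), respectively the $Y$-type Bell states $(\ket{00}\pm i\ket{11})/\sqrt2$ (for $R_x$). These mimic the single-qubit decomposition of a logical Bloch vector $(s,0,c)$ or $(0,s,c)$, whose $\ell_1$ weight is $|c|+s=\sct$. For the lower bound I would use the robustness dual witness $W=\operatorname{sgn}(c)\,\ZL+X_L$ (resp.\ $\operatorname{sgn}(c)\,\ZL+\YL$). These two logical Paulis anticommute, so on every pure stabilizer state at most one has nonzero expectation, giving $|\tr(W\sigma)|\le1$. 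Meanwhile $\tr(W\rho_\theta)=|c|+s$, hence $\Gamma\ge\sct$.

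For $C$, the lower bound uses \Cref{thm:dual} with a witness built from the same logical Paulis. I take $H^*\propto \operatorname{sgn}(c)\,ZI+XX$ for $R_y$ (resp.\ $\operatorname{sgn}(c)\,ZI+YX$ or the symmetrized $XY+YX$ for $R_x$), scaled so that $S^*(\alpha)=\tr(H^*A_\alpha)/4$ has $\|S^*\|_\infty=1$. The stabilizer maximum $\max_\sigma\tr(H^*\sigma)$ is again controlled by anticommutation, as in the $\Gamma$ step. The gap $\tr(H^*\rho)-\max_\sigma\tr(H^*\sigma)$ should come out as $\sct-1$ and $\tfrac12(\sct-1)$ respectively. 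The upper bound needs a free Wigner function $W_f$. I would take the Wigner function of the stabilizer mixture $\tau$ obtained by projecting the logical Bloch vector radially onto the logical octahedron, i.e.\ replacing $(s,c)$ by $(s,c)/\sct$. Since $W_\rho-W_\tau$ is then proportional to $(\sct-1)$ times a fixed signed pattern, the $\ell_1$ norm reduces to counting that pattern's support, and $\kappa$ follows by dividing.

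The main obstacle is the $R_x$ family. There the correlation term $XY+YX$ has Wigner-frame coefficients that cancel on half of the phase points, so the negativity sits on $2$ points rather than $4$. I expect the naive logical projection to give an $\ell_1$ error of $\sct-1$, not half of it. I would therefore first search (guided by the LP) for a better mixture that also uses product stabilizer states such as $\ket{{+}i}\!\otimes\!\ket{\pm}$, which share those phase points. In parallel I would find the witness $S^*$ that is $\pm1$ only on the relevant phase points and $0$ elsewhere, so that both bounds meet at $\tfrac12(\sct-1)$.
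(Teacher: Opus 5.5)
Your architecture matches the paper's: an explicit free state or quasi-decomposition for the upper bounds, and logical-Pauli witnesses for the lower bounds. The $\Gamma$ half and the $R_y$ upper bound go through. On the logical $XZ$ plane the Wigner distance between encoded states is $|\Delta z|+|\Delta x|$, so your radial projection gives exactly $\sct-1$, as does the paper's $s\ketbra{+_L}{+_L}+(1-s)\ketbra{0_L}{0_L}$.

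\textbf{The $R_y$ lower bound fails with your witness.} Write $\vec t(\alpha)=\vec t_{qp}$. The operator $H=\lambda\bigl(\operatorname{sgn}(c)\,ZI+XX\bigr)$ has
$S(\alpha,\beta)=\tfrac14\tr(HA_{\alpha\beta})=\tfrac{\lambda}{4}\bigl(\operatorname{sgn}(c)\,t_z(\alpha)+t_x(\alpha)t_x(\beta)\bigr)\in\{0,\pm\tfrac{\lambda}{2}\}$.
Your normalization $\|S^*\|_\infty=1$ therefore forces $\lambda=2$. Since $\langle S,W_\rho\rangle=\tfrac14\tr(H\rho)$, the certified gap is $\tfrac14\bigl[2(|c|+s)-2\bigr]=\tfrac12(\sct-1)$, not $\sct-1$. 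That is exactly what your sign-corrected $R_x$ witness certifies, so witnesses of this shape cannot detect the factor $2$ at all.

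The paper's sketch uses the same $\ZL+\XL$. It reads off $\sct-1$ from the identity displayed in \Cref{thm:dual}, which drops the factor $2^{-n}$ in $\langle S,W_\rho\rangle=2^{-n}\tr(H\rho)$. Following the paper will not close this step.

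The missing idea is to choose logical representatives whose phase-space image is flat. Take $H^*=2\bigl(\operatorname{sgn}(c)(ZI+IZ)+XX-YY\bigr)$. Then $S^*=\operatorname{sgn}(c)$ on the four points with $q_1=q_2=0$, $S^*=-\operatorname{sgn}(c)$ on the four with $q_1=q_2=1$, and $S^*=(-1)^{p_1+p_2}$ on the eight with $q_1\neq q_2$.
\begin{itemize}
\item On the target, $\tfrac14\tr(H^*\rho^{Ry}_\theta)=|c|+s$.
\item Each of $\pm ZI,\pm IZ$ anticommutes with each of $\pm XX,\pm YY$, so no stabilizer group contains one from each set.
\item Hence $\tfrac14\tr(H^*\sigma)\le 1$ for every stabilizer state, and $C\ge\sct-1$.
\end{itemize}

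\textbf{For $R_x$, the witness side works once the sign is fixed.} The correct expansion is $\rho^{Rx}_\theta=\tfrac14\bigl(II+ZZ+c(ZI+IZ)-s(XY+YX)\bigr)$, so $\tr(\YL\rho^{Rx}_\theta)=-s$. Both your robustness witness and your $C$-witness need $-\YL$. With that change, $2\bigl(\operatorname{sgn}(c)\,ZI-YX\bigr)$ has $\|S\|_\infty=1$ and certifies $\tfrac12(\sct-1)$.

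\textbf{The $R_x$ upper bound is still missing.} You leave it as an LP search, and the diagnosis behind the search is off. Radial projection gives $(\sct-1)\max(|c|,s)/(\sct)$. This exceeds $\tfrac12(\sct-1)$ unless $|c|=s$, rather than equalling $\sct-1$, and no product states are needed. The paper itself only counts negative entries here.

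On the logical $YZ$ plane the Wigner distance is $\max(|\Delta z|,|\Delta y|)$ rather than $|\Delta z|+|\Delta y|$. So the right move is to shift both logical coordinates by the same amount, $\tfrac12(\sct-1)$. Let $\ket{\chi}=(\ket{00}-i\ket{11})/\sqrt2$, and set $b=0$ if $c\ge0$, $b=1$ otherwise. Then
$\tau=\tfrac{1+|c|-s}{2}\ketbra{bb}{bb}+\tfrac{1-|c|+s}{2}\ketbra{\chi}{\chi}$
leaves $W_\rho-W_\tau$ equal to $\pm\tfrac18(\sct-1)$ on four phase points and zero elsewhere. This gives $C^{Rx}\le\tfrac12(\sct-1)$.

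The factor $2$ comes from the $\ell_1$ norm on the $XZ$ plane versus the $\ell_\infty$ norm on the $YZ$ plane, not from the number of negative entries.
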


\begin{proof}[Proof sketch]
Full details are in \Cref{app:kappa_proofs}.

\textbf{Upper bound ($R_y$).}
The optimal free state is a mixed stabilizer state with explicit form
given in Appendix~\ref{app:kappa_proofs}; direct computation gives
$\|W_{\rho^{Ry}} - W_{\sigma^*}\|_1 = \sct - 1$.

\textbf{Lower bound ($R_y$).}
The witness $H^* = \ZL + \XL$ satisfies
$\tr(H^*\rho_\theta^{Ry}) - \max_\sigma\tr(H^*\sigma)
= |\cos\theta|+\sin\theta - 1 = \sct - 1$.

\textbf{$\Gamma^{Ry}$.}
The explicit three-term decomposition in \Cref{app:kappa_proofs}
achieves $\ell_1$ norm $\sct$, and the same witness gives the matching
lower bound $\Gamma \geq \sct$.

\textbf{$R_x$ family.}
The Wigner function of $\rho_\theta^{Rx}$ has exactly 2 negative entries
(at the $\pm\YL$ phase-space points) vs.\ 4 for $\rho_\theta^{Ry}$.
This halves $C$: $C^{Rx} = \frac{1}{2}C^{Ry}$.
The same decomposition gives $\Gamma^{Rx} = \Gamma^{Ry}$, so
$\kappa^{Rx} = 2$.
\end{proof}

\subsection{Geometric Explanation of the Factor of 2}

\begin{itemize}
  \item $R_y$ (real coherence, $\XL$ direction):
        \textbf{4 negative} Wigner entries, spread across 4 points.
  \item $R_x$ (imaginary coherence, $\YL$ direction):
        \textbf{2 negative} Wigner entries, concentrated at 2 points.
\end{itemize}
Both families have the same $\Gamma$ because $\Gamma$ is invariant under
all Clifford unitaries and the two families are Clifford-equivalent
(\Cref{rem:cliff_equiv}).
But $C$ is the distance to the nearest free state in the fixed Wigner
frame, which does depend on the distribution of negativity: half as many
negative entries means half the total mass must be moved, giving
$C^{Rx} = \frac{1}{2}C^{Ry}$ and $\kappa^{Rx} = 2\kappa^{Ry}$.

\begin{remark}[The two families are Clifford-equivalent]
\label{rem:cliff_equiv}
For all $\theta$,
\begin{equation}
  \rho_\theta^{Rx} = (S^\dagger\!\otimes I)\,\rho_\theta^{Ry}\,(S\otimes I),
\end{equation}
since $S^\dagger\ket{1} = -i\ket{1}$ maps the real coherence of
$\ket{\psi_\theta^{Ry}}$ onto the imaginary coherence of
$\ket{\psi_\theta^{Rx}}$.
The two families are therefore related by a depth-one, single-qubit,
free Clifford gate, and the factor of two between $C^{Rx}$ and $C^{Ry}$
is an instance of the non-invariance of $C$ under local Cliffords acting
on multiqubit systems (\Cref{thm:properties}(iii),
\Cref{rem:not_monotone}).
Any Clifford-invariant magic quantifier must coincide on the two
families---which is precisely why $\Gamma^{Rx}=\Gamma^{Ry}$.
The $\kappa$ trichotomy thus characterizes how the fixed Wigner frame
resolves different logical coherence directions, rather than a
difference between Clifford-inequivalent resources.
\end{remark}

\subsection{The Bell$+R_z$ Family: $\kappa = 1$}

\begin{definition}
\begin{equation}
  \ket{\Phi_\theta} := \frac{1}{\sqrt{2}}\bigl(\ket{00} + e^{i\theta}\ket{11}\bigr),
  \quad \theta \in [0,2\pi),
  \label{eq:BRz_family}
\end{equation}
tracing the equatorial circle of the logical Bloch sphere.
\end{definition}

Writing $s = \sin\theta$, $c = \cos\theta$, the closed-form Wigner function is
\begin{equation}
  8\,W_{\Phi_\theta} =
  \begin{pmatrix}
    1{+}s & 1{-}s &  c & -c \\
    1{-}s & 1{+}s & -c &  c \\
     c    & -c    & 1{-}s & 1{+}s \\
    -c    &  c    & 1{+}s & 1{-}s
  \end{pmatrix},
  \label{eq:W_BRz}
\end{equation}
with rows/columns indexed by $(q,p)\in\{(0,0),(0,1),(1,0),(1,1)\}^2$,
which has exactly 4 negative entries at every
$\theta \notin \{0, \pi/2, \pi, 3\pi/2\}$.

\begin{theorem}[$\kappa$ for Bell$+R_z$]
\label{thm:kappa_BRz}
For all $\theta$ with $C(\rho_\theta) > 0$:
\begin{align}
  C(\rho_\theta)       &= |\sin\theta| + |\cos\theta| - 1, \\
  \Gamma(\rho_\theta) &= |\sin\theta| + |\cos\theta|, \\
  \kappa(\rho_\theta)  &= 1.
\end{align}
The optimal dual witness is piecewise-constant in the quadrant of $\theta$:
\begin{equation}
  H^*(\theta) = \operatorname{sign}(\cos\theta)\cdot\XL
              + \operatorname{sign}(\sin\theta)\cdot\YL.
  \label{eq:BRz_witness}
\end{equation}
\end{theorem}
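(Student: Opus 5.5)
The plan is to match a primal upper bound with a dual lower bound from \Cref{thm:dual}, both computed from the closed form \eqref{eq:W_BRz}. Then $\Gamma$ is obtained from a separate explicit decomposition together with a witness bound. By the local Pauli translations, which permute phase points and leave $C$ and $\Gamma$ invariant, and by the symmetry $\theta\mapsto-\theta$, I would reduce to the first quadrant $\theta\in(0,\pi/2)$, so $s,c>0$. The remaining quadrants then only flip the signs in \eqref{eq:BRz_witness}.

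\textbf{Robustness.} For the lower bound I use the Pauli witness $H^*=\XL+\YL$. In the state $\Phi_\theta$ one has $\langle\XL\rangle=c$ and $\langle\YL\rangle=s$, where the sign convention on $\YL=Y\otimes X$ is absorbed into $\operatorname{sign}(\sin\theta)$. Any pure stabilizer state $\sigma$ has $\langle\XL\rangle,\langle\YL\rangle\in\{0,\pm1\}$. Since $\XL$ and $\YL$ anticommute, they cannot both lie (up to sign) in one stabilizer group, so $|\tr(H^*\sigma)|\le1$. For any decomposition $\rho=\sum_i\alpha_i\sigma_i$ this gives $\Gamma(\rho)\ge\tr(H^*\rho)=s+c$.

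For the upper bound I write the logical Bloch vector as $(c,s,0)=c\,\hat x+s\,\hat y-(c+s-1)\cdot\vec 0$. Here $\hat x$ and $\hat y$ are the code stabilizer states $\Phi_0$ and $\Phi_{\pi/2}$, and $\vec 0$ is the stabilizer mixture $\tfrac12(\ketbra{00}{00}+\ketbra{11}{11})$. This gives $\rho_\theta=c\,\Phi_0+s\,\Phi_{\pi/2}-(c+s-1)\tfrac12(\ketbra{00}{00}+\ketbra{11}{11})$, with $\ell_1$ weight $c+s+(c+s-1)=2(c+s)-1$. That is too large, so instead I would use the cancellation-free form. I split $(c+s-1)$ between $\ketbra{00}{00}$ and $\ketbra{11}{11}$ as affine weights, and absorb them so that the total weight is $c+s$. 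Concretely, I take $\rho=\lambda\sigma^+-\mu\sigma^-$ with $\lambda-\mu=1$ and $\lambda+\mu=c+s$, where $\sigma^+$ is a mixture of $\Phi_0$ and $\Phi_{\pi/2}$ and $\sigma^-$ is a mixture of $\Phi_\pi$ and $\Phi_{3\pi/2}$ chosen to cancel the $\ketbra{00}{00}$ and $\ketbra{11}{11}$ excess. I would verify this by matching the three logical Bloch components.

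\textbf{Lower bound on $C$.} Scaling the Wigner coefficients of $H^*$ alone yields entries in $\{0,\pm\tfrac12\}$ and only $(c+s-1)/2$. So I would instead take the sign witness read directly off \eqref{eq:W_BRz}. On the $\pm c$ blocks I set $S=\operatorname{sign}$ of the entry of $c\cdot(\text{pattern})$. On the $1\pm s$ blocks I set $S=+1$ where the entry is $1+s$ and $-1$ where it is $1-s$. Then $\|S\|_\infty=1$ and $\langle S,W_{\Phi_\theta}\rangle=\tfrac18(8c+8s)=c+s$.

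The main obstacle is showing $\max_{\sigma\in\Stab_2}\langle S,W_\sigma\rangle=1$. My approach is to expand $H_S=\sum_\alpha S(\alpha)A_\alpha$ in the Pauli basis, so that $\langle S,W_\sigma\rangle=\tfrac14\tr(\sigma H_S)$. I expect $H_S$ to be, up to the identity normalization, $2(\XL+\YL)$ plus further Pauli terms that pairwise anticommute or carry cancelling signs. I would then bound $\tfrac14\sum_{g\in\mathcal S}\pm\mathrm{coef}(g)$ over each signed stabilizer group, as in the proof of \Cref{prop:maxC2}. If the Pauli expansion does not make this transparent, the fallback is a finite check over the $60$ pure stabilizer states, organized by local Pauli orbits.

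\textbf{Upper bound on $C$.} I would take $W_f=W_{\sigma^+}$, or a mixture of $W_{\sigma^+}$ with $W_{\tfrac12(\ketbra{00}{00}+\ketbra{11}{11})}$, and choose it so that $W_{\Phi_\theta}-W_f$ is supported exactly on the negative entries of \eqref{eq:W_BRz}. Those four entries have total magnitude $\tfrac12 c$, and the $1-s$ excess supplies the rest. I would then compute $\|W_{\Phi_\theta}-W_f\|_1=c+s-1$ directly from the matrix, checking complementary slackness against $S$ as the guide. Combining the two bounds gives $C=\Gamma-1$, hence $\kappa=1$.
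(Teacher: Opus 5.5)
Your proposal is correct. It follows the paper's outline: an explicit free state for the upper bound, a dual witness via \Cref{thm:dual} for the lower bound, and for $\Gamma$ a Bell-orbit affine decomposition plus the anticommuting Pauli witness $\XL+\YL$. It departs from the paper at the step that matters, the witness certifying $C\ge|s|+|c|-1$, and that departure is necessary.

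The paper takes $H^*=\operatorname{sign}(c)\XL+\operatorname{sign}(s)\YL$ and evaluates $\tr(H^*\rho_\theta)-\max_\sigma\tr(H^*\sigma)$. This drops the factor $2^{-n}$ in $\langle S,W_\rho\rangle=2^{-n}\tr(\rho H)$. Write $x_i=(-1)^{p_i}$, $z_i=(-1)^{q_i}$, $y_i=x_iz_i$. As you noticed, the phase-space coefficients of $XX+YX$ are $\tfrac14 x_1x_2(1+z_1)\in\{0,\pm\tfrac12\}$, so even after rescaling this witness certifies only $\tfrac12(|s|+|c|-1)$. The operator that would give the paper's number, $4(XX+YX)$, has coefficients $\pm2$ and is dual-infeasible.

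Your sign witness settles the obstacle directly:
\begin{itemize}
\item It is $S=\tfrac12(x_1x_2-y_1y_2+x_1y_2+y_1x_2)$, so $H_S=\sum_\alpha S(\alpha)A_\alpha=2(XX-YY+XY+YX)$.
\item Among these four Paulis the only commuting pairs are $\{XX,YY\}$ and $\{XY,YX\}$. A pure stabilizer group therefore contains at most two of them, which gives $\langle S,W_\sigma\rangle=\tfrac12\tr\bigl(\sigma(XX-YY+XY+YX)\bigr)\le1$.
\item On the target, $\langle S,W_{\rho_\theta}\rangle=|c|+|s|$.
\end{itemize}
On the codespace, $-YY$ acts as $\XL$ and $XY$ as $\YL$, so $H_S$ compresses to $4(\XL+\YL)$ and still commutes with $ZZ$. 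The statement's witness clause is therefore true only in this codespace sense. Literally, $\XL+\YL$ is the optimal witness for $\Gamma$, not for $C$. Your route is what makes the $C$ lower bound rigorous. The paper's phrasing buys the clean logical-Pauli reading, and that reading is unaffected.

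Three smaller repairs:
\begin{itemize}
\item \emph{Primal bound.} The difference is not supported on the four negative entries, so drop that heuristic. Take $\sigma^+=a\,\Phi_0+(1-a)\Phi_{\pi/2}$ with any $a\in[1-s,c]$. For example, $a=c$ reproduces $W_{\rho_\theta}$ exactly on the eight points with $q_1\neq q_2$. It differs by $\pm(c+s-1)/8$ on the eight points with $q_1=q_2$, giving total distance $c+s-1$. Mixing in $\tfrac12(\ketbra{00}{00}+\ketbra{11}{11})$ never helps.
\item \emph{$\Gamma$ decomposition.} Choosing $a=b=c/(c+s)$ in $\lambda\sigma^+-\mu\sigma^-$ works. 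No diagonal excess needs cancelling, since all four $\Phi_\psi$ share the same diagonal.
\item \emph{Quadrant reduction.} Realise $\theta\mapsto-\theta$ by conjugation with $X\otimes X$, which is a phase-space translation and satisfies $XX\ket{\Phi_\theta}\propto\ket{\Phi_{-\theta}}$. Complex conjugation does not permute the Wootters phase points, so it cannot serve as the justification.
\end{itemize}
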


\subsection{The Logical Bloch Sphere Trichotomy}

The three families trace distinct orbits on the logical Bloch sphere:
the $\ZL$ poles are stabilizer states ($C=0$);
the XZ meridian ($R_y$) gives $\kappa=1$ with 4 negative Wigner points;
the YZ meridian ($R_x$) gives $\kappa=2$ with 2 negative Wigner points;
and the equatorial circle (Bell$+R_z$) gives $\kappa=1$ with the
piecewise-constant witness~\eqref{eq:BRz_witness}.
The $\kappa$ value is determined by the coherence \emph{direction}, not the
state's position along its orbit.

\begin{figure}[ht]
  \centering
  \includegraphics[width=0.82\columnwidth]{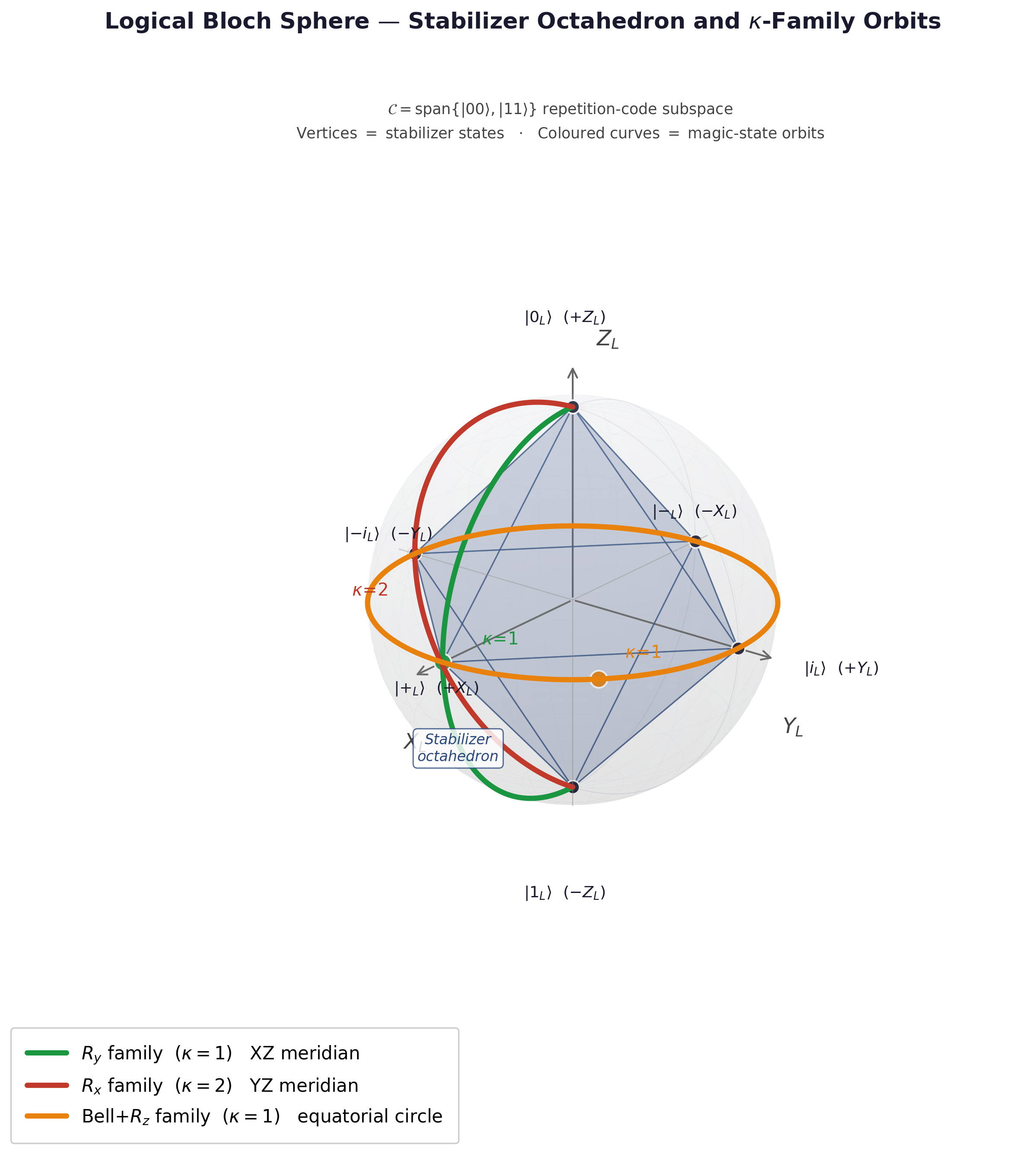}
  \caption{The logical Bloch sphere for
  $\mathcal{C} = \mathrm{span}\{|00\rangle,|11\rangle\}$.
  The inscribed octahedron has the six stabilizer states as vertices.
  $R_y$ (green, $\kappa=1$), $R_x$ (red, $\kappa=2$), and Bell$+R_z$
  (orange, $\kappa=1$) are shown.}
  \label{fig:bloch}
\end{figure}

\subsection{Analytical Results Summary}

\Cref{tab:results} collects the exact results for the three families.

\begin{table*}[ht]
\centering
\caption{Exact results for the three families.
Here $c = \cos\theta$, $s = \sin\theta$.
$(*)$~The $R_x$ witness is adaptive:
$H^{*,Rx}(\theta) = \tfrac{1}{2}(\operatorname{sign}(\cos\theta)\ZL
- \operatorname{sign}(\sin\theta)\YL)$;
the table shows the $\theta\in(0,\tfrac{\pi}{2})$ form.}
\label{tab:results}
\small
\setlength{\tabcolsep}{8pt}
\begin{tabular}{@{}llll@{}}
\toprule
 & $R_y$ & $R_x$ & Bell$+R_z$ \\
\midrule
$C(\rho_\theta)$
  & $|s|+|c|-1$
  & $\frac{1}{2}(|s|+|c|-1)$
  & $|s|+|c|-1$ \\[3pt]
$\Gamma(\rho_\theta)$
  & $|s|+|c|$
  & $|s|+|c|$
  & $|s|+|c|$ \\[3pt]
$\kappa(\rho_\theta)$
  & $1$
  & $2$
  & $1$ \\[3pt]
$H^*$
  & $\ZL+\XL$
  & $\tfrac{1}{2}(\ZL-\YL)^*$
  & $\operatorname{sign}(c)\XL+\operatorname{sign}(s)\YL$ \\[3pt]
Neg.\ entries  & 4 & 2 & 4 \\
Bloch orbit    & XZ meridian & YZ meridian & Equatorial \\
\bottomrule
\end{tabular}
\end{table*}

\begin{figure}[ht]
  \centering
  \includegraphics[width=0.48\columnwidth]{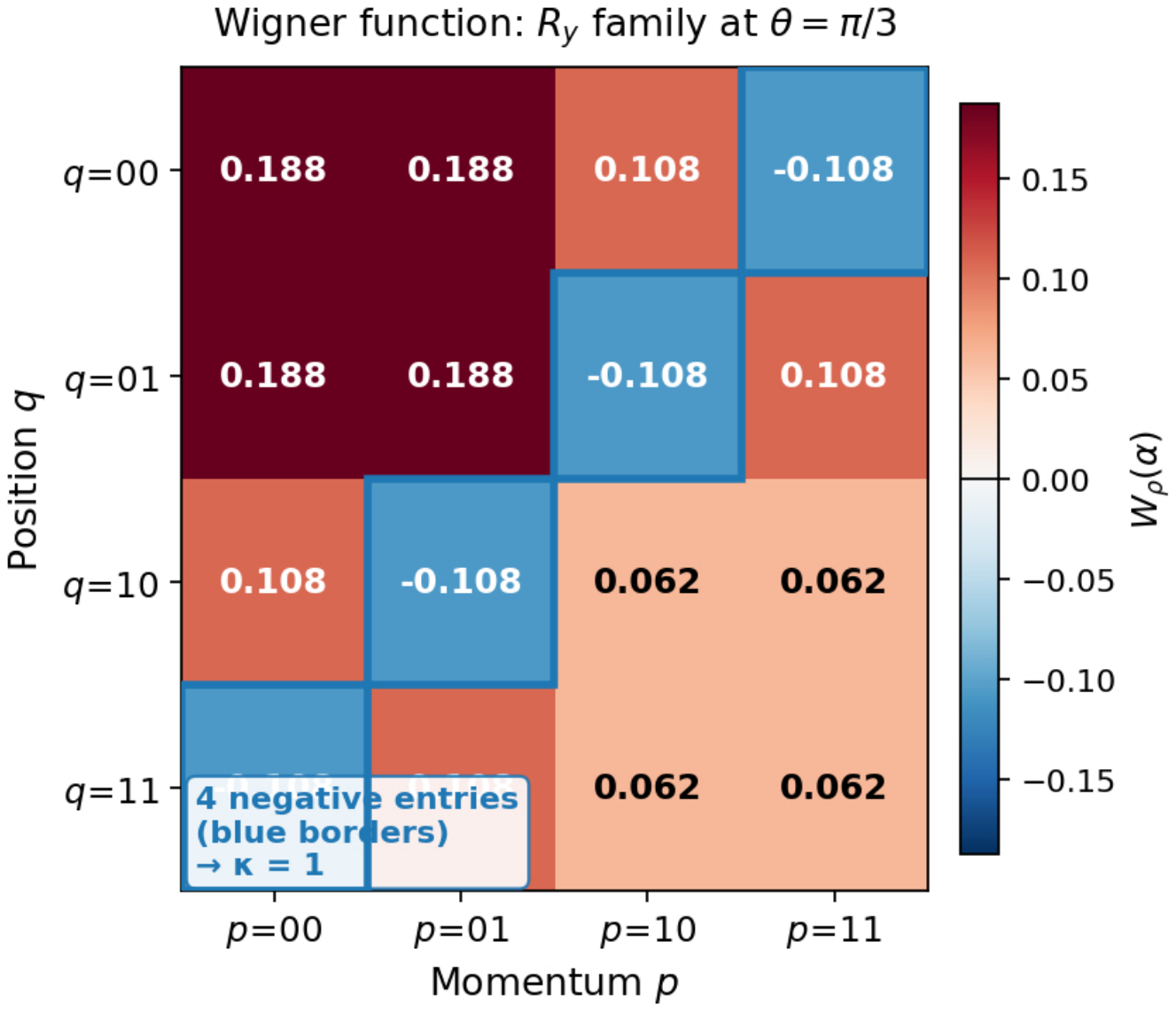}
  \hfill
  \includegraphics[width=0.48\columnwidth]{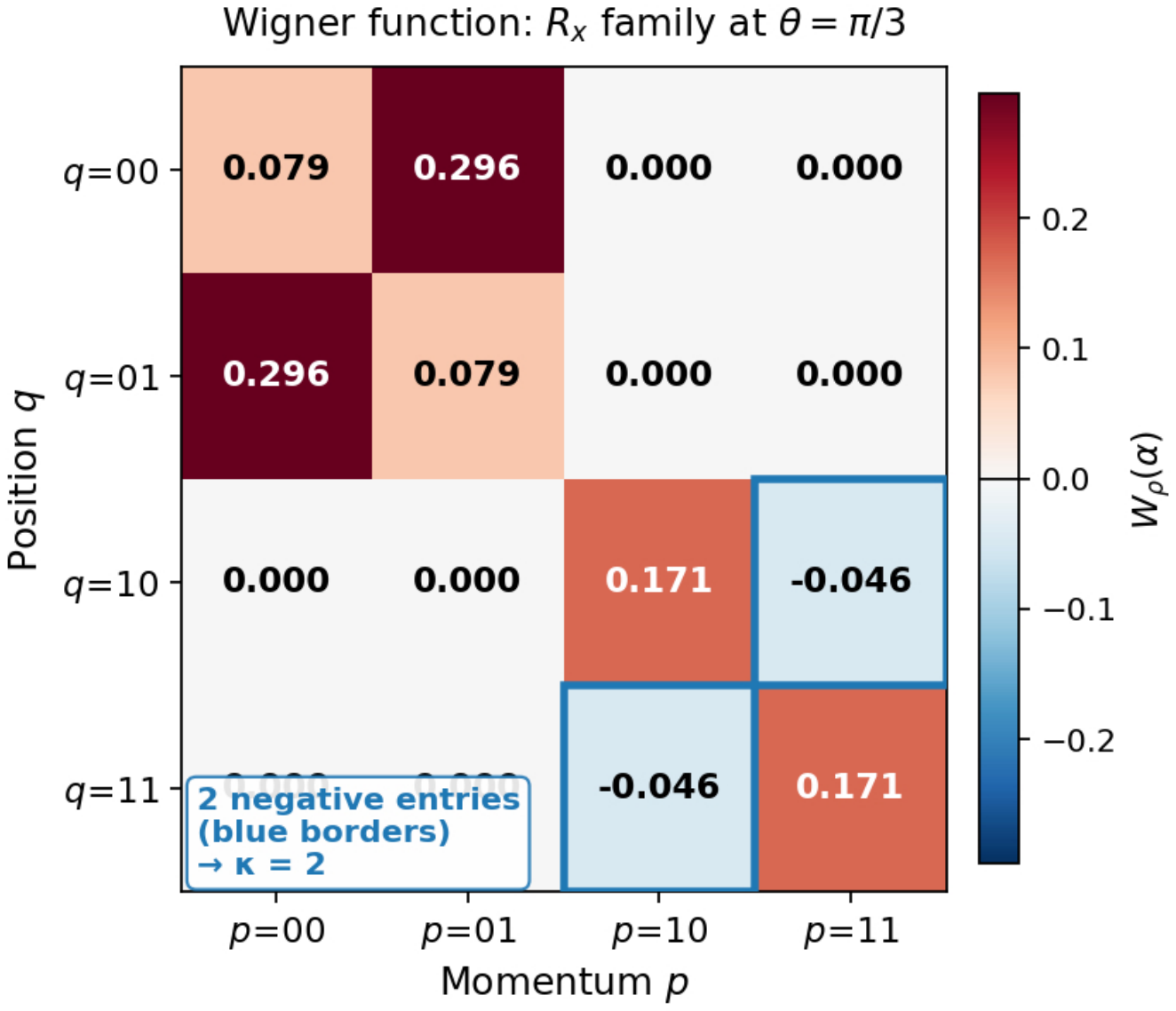}
  \caption{Wigner functions of $R_y$ (left) and $R_x$ (right)
  at $\theta = \pi/3$.
  The $R_y$ state has 4 negative entries; $R_x$ has only 2.
  This directly explains $\kappa^{Ry} = 1$ vs.\ $\kappa^{Rx} = 2$.}
  \label{fig:wigner}
\end{figure}

\begin{figure}[ht]
  \centering
  \includegraphics[width=\columnwidth]{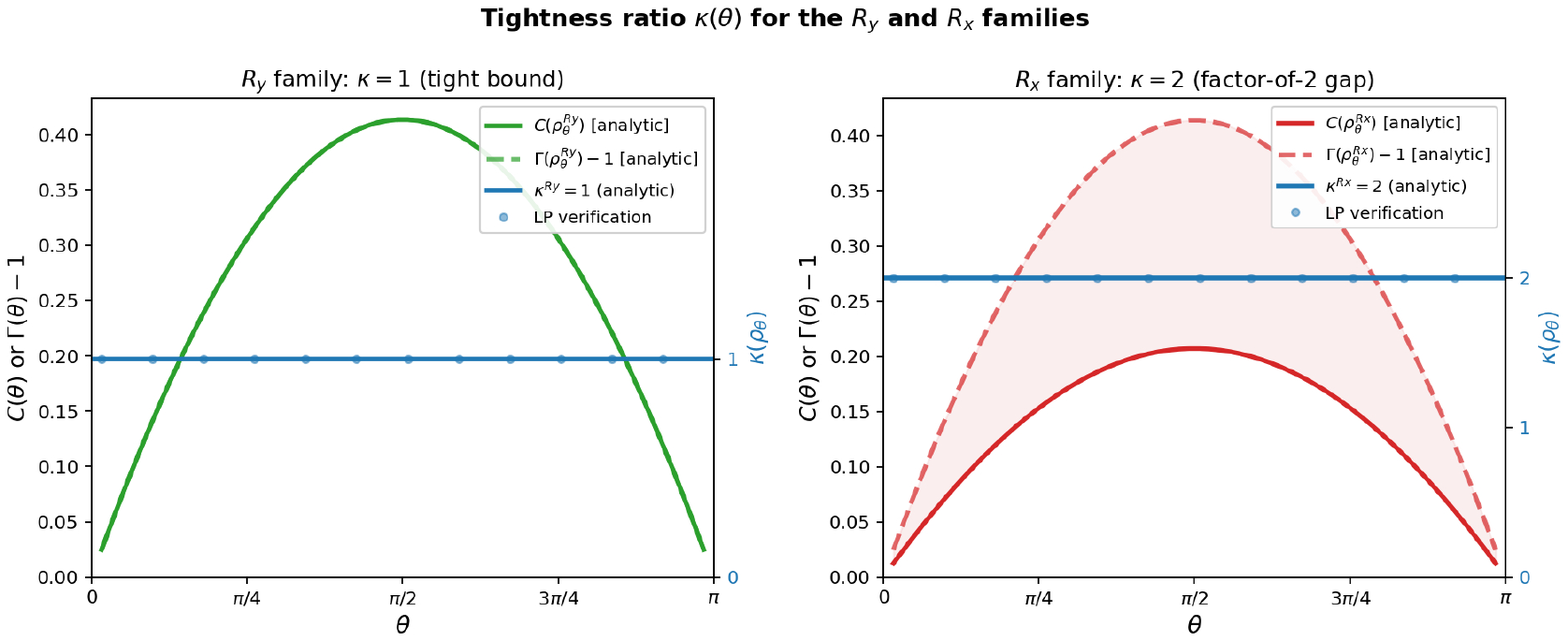}
  \caption{$C(\theta)$, $\Gamma(\theta)-1$, and $\kappa(\theta)$ for
  $R_y$ ($\kappa \equiv 1$, left) and $R_x$ ($\kappa \equiv 2$, right).
  The flatness of $\kappa$ is proved analytically.}
  \label{fig:kappa}
\end{figure}

\subsection{Connection to Noise Robustness}
\label{subsec:noise}

Under depolarizing noise $\Lambda_p(\rho) = (1-p)\rho + p\,\tfrac{I}{4}$ at
rate $p$, the $R_x$ family exhibits exact \emph{$\kappa$-rigidity}:
\begin{equation}
  \kappa\!\left(\Lambda_p(\rho_\theta^{Rx})\right) = 2
\end{equation}
for all $p$ below the critical noise level
$p^* = 1 - 1/(|\sin\theta|+|\cos\theta|)$, at which $C\to0$.
The factor-of-2 gap is thus a noise-independent structural invariant of the
$R_x$ family.
The $R_y$ family, by contrast, is \emph{not} rigid: $\kappa(\Lambda_p(
\rho_\theta^{Ry}))$ rises from $1$ at $p=0$ and saturates at $\tfrac32$ as
$p\to p^*$ (computed by linear programming over the full two-qubit
stabilizer polytope).
This asymmetry is itself frame structure invisible to $\Gamma$: the two
families share $\Gamma(\Lambda_p(\rho_\theta)) $ identically at every $p$
(both being Clifford images of the same state under $S\otimes I$), yet their
$\kappa$ trajectories under the same noise diverge --- $C$ resolves a
difference in how depolarization acts on real versus imaginary logical
coherence that the Clifford-invariant $\Gamma$ cannot.

\begin{figure}[ht]
  \centering
  \includegraphics[width=0.90\columnwidth]{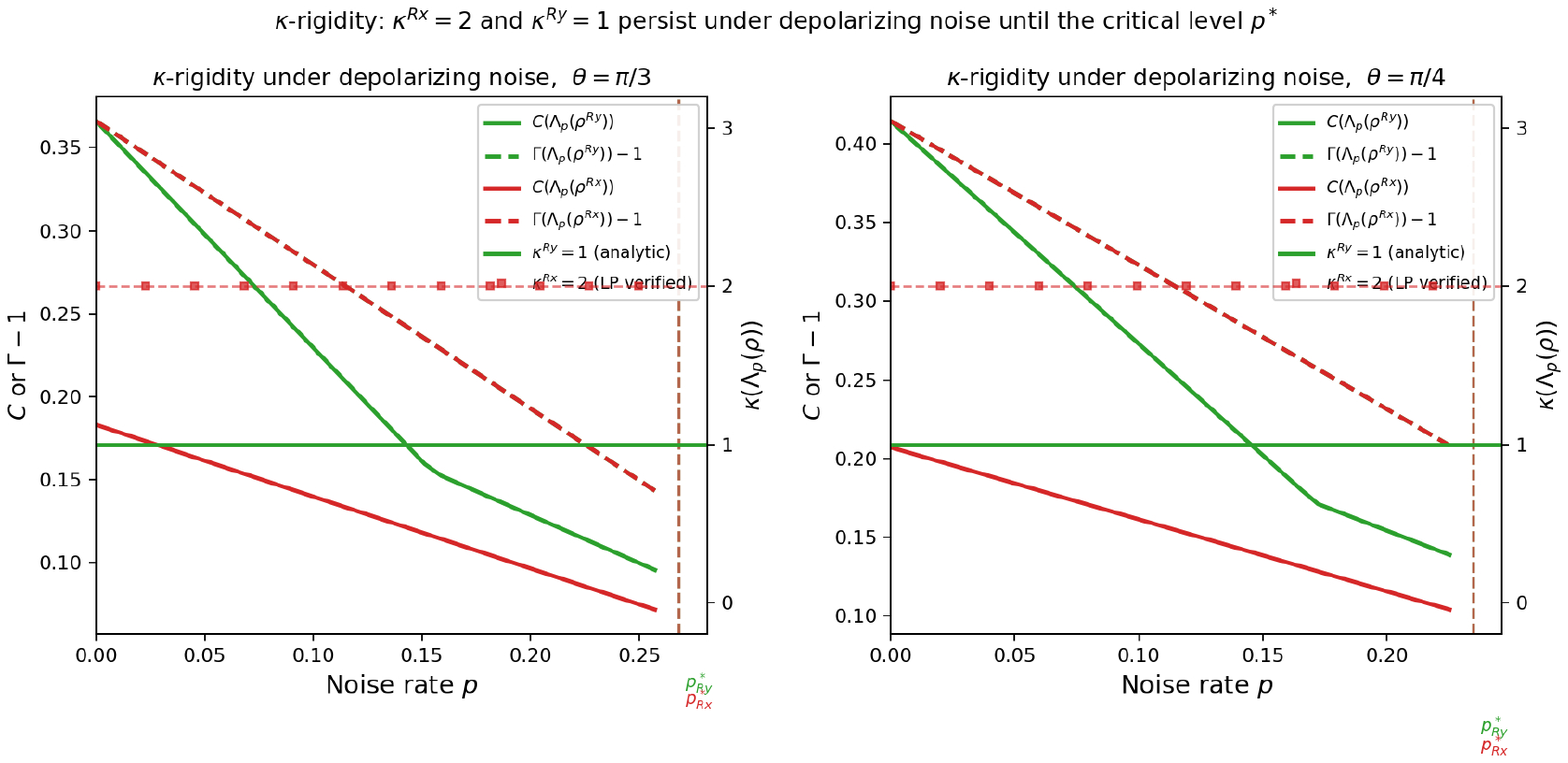}
  \caption{$\kappa$ under depolarizing noise at $\theta=\pi/3$.
  $\kappa(\Lambda_p(\rho^{Rx})) = 2$ is exactly rigid up to $p^*$, where
  $C\to 0$ and $\kappa$ becomes undefined.
  The $R_y$ family is not rigid: $\kappa$ rises from $1$ and saturates at
  $\tfrac32$, despite $R_x$ and $R_y$ sharing the same $\Gamma$ at every
  $p$.}
  \label{fig:noise}
\end{figure}

\section{Asymptotic Analysis}
\label{sec:asymptotic}

\subsection{Regularization}

For $C(\rho) > 0$, define
$C^\infty(\rho) := \limsup_{k\to\infty} \frac{1}{k}\,C\!\left(\rho^{\otimes k}\right)$.
Conditional on \Cref{conj:selftensor}, $C^\infty(\rho) \geq C(\rho)$
for states with $s(\rho)\leq 0$.
For equatorial states satisfying \Cref{conj:equatorial},
$C(\rho^{\otimes k}) = (1+C(\rho))^k - 1$ grows exponentially in $k$.
Unconditionally, $C(\rho_f^{\otimes k}) = ((1+\sqrt{3})/2)^k - 1$
exactly (\Cref{cor:exact_powers}), so exponential growth of $C$ under
tensoring holds without any conjecture for this family; more generally
\Cref{prop:prod_upper} gives
$C(\rho^{\otimes k}) \leq (1+C(\rho))^k - 1$ for every single-qubit
$\rho$.

\subsection{Why $C$-Based Distillation Bounds Fail}

\begin{remark}[$C$ is not a magic monotone]
\label{rem:not_monotone}
$H \otimes I$ (a valid free Clifford) increases $C$ for approximately
$49\%$ of random two-qubit pure states, by up to $0.12$.
More sharply, the local gate $S\otimes I$ maps the $R_y$ family onto the
$R_x$ family while halving $C$ (\Cref{rem:cliff_equiv}).
Consequently, $C$ cannot support asymptotic distillation rate bounds.
The correct framework uses $\Gamma$~\cite{Howard2017}.
\end{remark}

\subsection{Asymptotic Behavior of $\kappa$}

Under submultiplicativity of $\Gamma$ and conditional on equatorial
multiplicativity:
\begin{equation}
  \kappa(\rho^{\otimes k})
  \leq \frac{\Gamma(\rho)^k - 1}{(1+C(\rho))^k - 1} \to 1
  \quad \text{as } k \to \infty
\end{equation}
for states with $\Gamma = 1+C$ (the $\kappa = 1$ families).

\section{Discussion}
\label{sec:discussion}

\subsection{Summary of Contributions}
\begin{enumerate}[(i)]
  \item Sharp bound $\Gamma \geq 1 + C/M_n$ and the tightness ratio
        $\kappa = (\Gamma-1)/C$ as a measure of geometric faithfulness.
  \item Exact integer values $\kappa = 1, 2, 1$ for the $R_y$, $R_x$,
        and Bell$+R_z$ families, with complete proofs and a geometric
        explanation via phase-space negativity.
  \item The tetrahedral dichotomy for tensor products: the product
        formula is proved as an upper bound for all single-qubit pairs
        (\Cref{prop:prod_upper}) and is attained exactly when both
        factors have nonpositive Bloch sign-product.
  \item The QEC connection: witnesses are logical Pauli operators,
        $C$ is fault-tolerant in the $[[2,1,1]]$ code
        (Corollary~\ref{cor:fault_tolerant}), and $\kappa=2$ reflects
        Wigner-basis anisotropy at $Y_L$.
        We show this fault-tolerance is no longer forced beyond $n=2$
        (Remark~\ref{rem:fault_tolerant_scope}).
        For the $[[n,1,1]]$ repetition family, $\kappa$ is rational and
        constant per family for every $n$ we evaluate exactly
        ($n\leq4$), taking values in $\{1/2,1,2\}$ fixed by the coherence
        direction and the parity of $n$.
  \item Extension to physical $n$-qubit GHZ families
        (Section~\ref{subsec:ghz_n}): the lower bound
        $C\geq|\sin\phi|+|\cos\phi|-1$ is proved for all $n$ via
        an $n$-independent dual witness; equality holds for $n\in\{2,3\}$.
        A phase parity mechanism (Proposition~\ref{prop:phase_parity})
        identifies why $C$ doubles for even $n$: off-diagonal Wigner phases
        are multiples of $\tfrac{\pi}{2}$ for even $n$ and odd multiples
        of $\tfrac{\pi}{4}$ for odd $n$, forcing a factor-of-2 increase
        in the $\ell_1$ distance to the stabilizer polytope.
  \item Exact maxima of $C$ (\Cref{sec:maxC}): the single-qubit
        maximum is $(\sqrt{3}-1)/2$, attained precisely at the eight
        face states, and $C(\rho_f\otimes\rho_f) = \sqrt{3}/2$ exactly;
        the tensor-power values are exact,
        $C(\rho_f^{\otimes n}) = ((1+\sqrt{3})/2)^n - 1$, and the
        maximum over fully separable states is determined
        (\Cref{cor:exact_powers}).
  \item $C$ is non-monotone; distillation rates require $\Gamma$.
\end{enumerate}

\subsection{Magic as a Logical-Layer Observable}
\label{subsec:qec}

All three families live in $\mathcal{C} = \mathrm{span}\{\ket{00},\ket{11}\}$,
the codespace of the $[[2,1,1]]$ repetition code, with stabilizer group
$\mathcal{S}=\{I,ZZ\}$.

\subsubsection{Witnesses Are Logical Pauli Operators}

\begin{theorem}[Logical Pauli witnesses]
\label{thm:logical_witnesses}
The optimal dual witnesses for the three families are:
\begin{align}
  H^{*,Ry} &= \ZL + \XL, \\
  H^{*,Rx}(\theta) &= \tfrac{1}{2}\bigl(\operatorname{sign}(\cos\theta)\cdot\ZL
                      - \operatorname{sign}(\sin\theta)\cdot\YL\bigr), \\
  H^{*,BRz}(\theta) &= \operatorname{sign}(\cos\theta)\cdot\XL
                      + \operatorname{sign}(\sin\theta)\cdot\YL.
\end{align}
Each $H^*$ satisfies $[H^*, ZZ]=0$ and acts non-trivially only on the
logical qubit.
Consequently, $\tr(H^*\rho)$ depends only on the logical Bloch vector
$(\langle\XL\rangle,\langle\YL\rangle,\langle\ZL\rangle)$, and for the
Bell$+R_z$ family:
\begin{equation}
  C(\rho_\theta) = |\langle\XL\rangle| + |\langle\YL\rangle| - 1,
\end{equation}
identifying $C$ as the $\ell_1$ distance from the logical Bloch vector
to the nearest vertex of the logical stabilizer octahedron.
\end{theorem}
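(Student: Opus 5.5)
The plan is to treat the statement as three separate claims and check each directly. The first is that the listed operators are the optimal witnesses, which by \Cref{thm:kappa_ry_rx}, \Cref{thm:kappa_BRz} and \Cref{thm:dual} means they certify the exact values of $C$ already established there. The second is that each $H^*$ commutes with $ZZ$ and acts only on the logical qubit. The third is the logical-Bloch-vector formula for the Bell$+R_z$ family together with its octahedral interpretation.

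\emph{Commutation.} I would start with the purely algebraic part. $\XL=X\otimes X$, $\YL=Y\otimes X$ and $\ZL=Z\otimes I$ each commute with $ZZ$. $X\otimes X$ anticommutes with $ZZ$ on both factors, giving an overall sign $(-1)^2=+1$. The same holds for $Y\otimes X$, and $Z\otimes I$ commutes trivially. Real linear combinations therefore commute with $ZZ$.

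\emph{Dependence on the logical Bloch vector.} Each $H^*$ lies in $\mathrm{span}\{\XL,\YL,\ZL\}$, so $\tr(H^*\rho)$ is a linear combination of $\langle\XL\rangle$, $\langle\YL\rangle$ and $\langle\ZL\rangle$. On $\mathcal C$ these restrict to the logical Paulis: check $\XL\ket{00}=\ket{11}$, $\YL\ket{00}=i\ket{11}$ and $\ZL\ket{11}=-\ket{11}$. Hence $H^*$ acts nontrivially only on the logical qubit.

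\emph{Optimality of the witnesses.} For each family I would verify two things.
\begin{enumerate}[(a)]
\item The witness $S^*=(\tfrac14\tr(H^*A_\alpha))_\alpha$ is dual-feasible, i.e.\ $\|S^*\|_\infty\le 1$.
\item $\langle S^*,W_\rho\rangle-\max_{W_f\in\Wfree}\langle S^*,W_f\rangle$ equals the value of $C$ from the earlier theorems.
\end{enumerate}
Since those theorems supply matching primal upper bounds, this identifies $H^*$ as optimal. The one subtlety is normalization. By \Cref{thm:dual} the operator pairing is $\langle S,W_\rho\rangle=2^{-n}\tr(H\rho)$ (compare \Cref{prop:maxC2}), so I must track the factor $1/4$ and the factor $\tfrac12$ in $H^{*,Rx}$ consistently with how \Cref{thm:kappa_ry_rx} and \Cref{thm:kappa_BRz} state them. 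For the stabilizer maximum, the Pauli coefficients of $H^*$ are $\pm1$ (or $\pm\tfrac12$) on two commuting or anticommuting Paulis. Since $\XL,\YL,\ZL$ pairwise anticommute, the maximum over pure stabilizer states of a sum of two of them with unit coefficients is $1$: no stabilizer group contains both.

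\emph{Bell$+R_z$ formula.} Here $\langle\ZL\rangle=0$, $\langle\XL\rangle=\cos\theta$ and $\langle\YL\rangle=\sin\theta$ (a quick check on $\ket{\Phi_\theta}$ up to the sign convention of $\YL$). Then \Cref{thm:kappa_BRz} gives $C=|\langle\XL\rangle|+|\langle\YL\rangle|-1$. The octahedral interpretation I would phrase carefully. The logical stabilizer octahedron is $\{\|\vec r\|_1\le1\}$, and the $\ell_1$ distance from $\vec r$ to it is $\|\vec r\|_1-1$ when $\|\vec r\|_1\ge1$. This equals the stated quantity, so the reading "distance to the octahedron" is exact; "nearest vertex" is only accurate as distance to the nearest face.

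\emph{Main obstacle.} I expect the hard step to be the normalization and sign bookkeeping for the adaptive $R_x$ witness in its four quadrants. I would check one quadrant explicitly on the Wigner matrix and extend to the others by the Pauli-translation symmetry $\theta\mapsto-\theta$, $\theta\mapsto\pi-\theta$.
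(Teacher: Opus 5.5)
Your plan follows the paper's own route. The paper gives no separate argument: it imports the witnesses from the proofs of \Cref{thm:kappa_ry_rx} and \Cref{thm:kappa_BRz}, and checks optimality there as $\tr(H^*\rho)-\max_\sigma\tr(H^*\sigma)=C$. That is the relation displayed in \Cref{thm:dual}, which omits the factor $2^{-n}$. You correctly adopt $\langle S,W_\rho\rangle=2^{-n}\tr(H\rho)$ instead. With that normalization, however, step (b) fails, and the failure cannot be fixed by bookkeeping. For $H=\ZL+\XL=Z\otimes I+X\otimes X$ you get $S(\alpha)=\tfrac14\bigl[(-1)^{q_1}+(-1)^{p_1+p_2}\bigr]\in\{0,\pm\tfrac12\}$, so on $(0,\pi/2)$ the dual objective is $\tfrac14(\cos\theta+\sin\theta-1)$. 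The largest admissible multiple is $2S$, since $\|S\|_\infty=\tfrac12$, and it gives only $\tfrac12(\cos\theta+\sin\theta-1)$. This is strictly below $C(\rho^{Ry}_\theta)=\cos\theta+\sin\theta-1$. The same happens for $\XL+\YL$ on the Bell$+R_z$ family, where $S=\tfrac14(-1)^{p_1+p_2}[1+(-1)^{q_1}]$.

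So no positive rescaling of the displayed $R_y$ or Bell$+R_z$ operator is an optimal dual witness, and carried through, your verification would refute those two cases rather than prove them. The paper's un-normalized relation cannot rescue them either: it would let the feasible witness below "certify" $4(\cos\theta+\sin\theta-1)>C$, violating weak duality. Separately, the displayed $H^{*,Ry}$ needs the factor $\operatorname{sign}(\cos\theta)$ of \Cref{app:kappa_proofs} to work on $(\pi/2,\pi)$. What these logical operators actually are is optimal witnesses for $\Gamma$: they satisfy $|\tr(H\sigma)|\le1$ on $\Stab_2$ and $\tr(H\rho)=\sct$.

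The missing idea is to dress the logical operator with the code projector $P_{\mathcal C}=\tfrac12(II+ZZ)$. Consider $H=2(\ZL+\XL)(I\otimes I+Z\otimes Z)=2(ZI+IZ+XX-YY)=4P_{\mathcal C}(\ZL+\XL)P_{\mathcal C}$. Its coefficients are $S=\tfrac12[(-1)^{q_1}+(-1)^{q_2}]+\tfrac12(-1)^{p_1+p_2}[1-(-1)^{q_1+q_2}]$, with $|S(\alpha)|=1$ everywhere. Its stabilizer maximum is $1$: $\{ZI,IZ\}$ and $\{XX,-YY\}$ are commuting pairs that anticommute across, so a stabilizer group contains at most two of the four terms. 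On the $R_y$ family it gives $\cos\theta+\sin\theta$, which together with the primal $\sigma^*$ proves $C^{Ry}=\cos\theta+\sin\theta-1$, with $\operatorname{sign}(\cos\theta)$ on $ZI+IZ$ extending this to $(\pi/2,\pi)$. Likewise $2(\XL+\YL)(II+ZZ)=2(XX-YY+YX+XY)$ is optimal for Bell$+R_z$.

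For $R_x$ the stated direction does survive: $4H^{*,Rx}$ has $\|S\|_\infty=1$ and attains $\tfrac12(|\cos\theta|+\sin\theta-1)$. The corrected witnesses still commute with $ZZ$ and are compressions of logical operators to $\mathcal C$, so your commutation and logical-dependence arguments carry over once the witnesses are amended. The Bell$+R_z$ formula $C=|\langle\XL\rangle|+|\langle\YL\rangle|-1$ is correct. Your correction of ``nearest vertex'' to ``$\ell_1$ distance to the octahedron'' is also correct.
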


\subsubsection{Magic Is a Fault-Tolerant Observable}

\begin{corollary}[Fault-tolerance of $C$ in the \textup{[[2,1,1]]} code]
\label{cor:fault_tolerant}
Let $E$ be a physical error correctable by the $[[2,1,1]]$ code
and $R$ the corresponding recovery.
Then for any state $\rho$ in the codespace:
\begin{equation}
  C(R(E(\rho))) = C(\rho).
\end{equation}
\end{corollary}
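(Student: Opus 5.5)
The plan is to reduce the corollary to the statement that, on the codespace, the composite channel $R\circ E$ either acts as the identity or acts as a Pauli conjugation. In both cases $C$ is unchanged by facts already established in the paper. \Cref{thm:logical_witnesses} explains why this is the right statement to expect, but the proof itself will rest on operator-level invariance, not on the witnesses.

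\textbf{Step 1 (general correctable errors).} I would fix the standard meaning of ``correctable by the $[[2,1,1]]$ code with recovery $R$''. Via the Knill--Laflamme conditions, this means $R(E(\rho)) = \rho$ for every $\rho$ supported on $\mathcal C = \mathrm{span}\{\ket{00},\ket{11}\}$. For such a pair the claim is immediate: $R(E(\rho))$ and $\rho$ are the same operator, so they have the same Wigner function and $C(R(E(\rho))) = C(\rho)$ by \Cref{def:C}. I would note explicitly that the code has distance $1$. The correctable set is therefore small: essentially errors acting trivially on $\mathcal C$, such as $I$, $ZZ$ and global phases, together with the $X$-type errors detected by the syndrome of $ZZ$ and undone by the corresponding Pauli recovery.

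\textbf{Step 2 (Pauli errors and Pauli recoveries, covering the case where correction holds only up to a stabilizer element).} Suppose $E(\rho)=P\rho P^\dagger$ and $R$ applies a Pauli $Q$ conditioned on the syndrome. On $\mathcal C$ the composite is then conjugation by $QP$, a Pauli operator that is an element of $\pm\mathcal S$ or at worst a Pauli. I would use the fact, stated in \Cref{subsec:wigner}, that Pauli conjugation permutes the Wootters phase points $A_\alpha$. Hence $W_{Q P\rho PQ}(\alpha) = W_\rho(\pi(\alpha))$ for a permutation $\pi$ of $\mathbb F_2^{4}$.

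The same conjugation maps $\Stab_2$ onto itself, so $\Wfree$ is invariant under the induced coordinate permutation. Since the $\ell_1$ norm is permutation invariant, any minimizer $W_f$ for $\rho$ transports to a feasible point for the conjugated state at the same distance, and the inverse map gives the reverse inequality. This shows $C(Q P\rho P Q) = C(\rho)$.

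This step is exactly the Pauli-channel case of \Cref{thm:properties}(v), applied in both directions because the map is invertible. If a syndrome measurement precedes recovery, I would treat the whole process as a Pauli channel conditioned on classical outcomes. Correctability then means every branch returns a Pauli image of $\rho$ that coincides with $\rho$ on $\mathcal C$, so convexity is never needed.

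\textbf{Main obstacle.} The only delicate point is definitional: the corollary has content only once ``correctable'' is pinned down for a distance-$1$ code. I would state it as: $R\circ E$ restricted to $\mathcal C$ is a unitary in $\langle i\rangle\mathcal S$, or more generally a Pauli conjugation. I would then verify, as a consistency check, that the witnesses of \Cref{thm:logical_witnesses} commute with $ZZ$. This confirms that $\tr(H^*\rho)$, and thus the certified value of $C$ for the three families, is unchanged by any residual stabilizer action, which matches the operator-level argument above.
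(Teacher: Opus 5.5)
Your proposal is correct, but it argues at a different level from the paper.

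The paper's proof runs through the witnesses of \Cref{thm:logical_witnesses}. Because $H^*$ is a combination of logical Paulis, $\tr(H^*\,\cdot\,)$ depends only on the logical state. A correctable error followed by recovery restores that logical state, so $\tr(H^*\rho)$ is unchanged.

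You instead use that correctability in the Knill--Laflamme sense already gives the operator identity $R(E(\rho))=\rho$ on $\mathcal C$, so the claim is immediate from \Cref{def:C}. You back this with a translation-covariance argument for composites that act as Paulis: Pauli conjugation permutes the $A_\alpha$ and maps $\Wfree$ onto itself.

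Your route buys generality and makes explicit a step the paper leaves implicit. It covers every codespace state, not just the three families for which an optimal logical witness is known. It also does not need $H^*$ to remain optimal at $R(E(\rho))$. Invariance of $\tr(H^*\rho)$ alone only preserves the certified lower bound, and upgrading that to invariance of $C$ needs exactly your identity.

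Your Step~2 also shows more than is asked. $C$ is invariant under \emph{every} Pauli conjugation, including uncorrectable logical errors such as $Z\otimes I$. So the invariance comes from the frame's translation covariance rather than from the code, and nothing in your argument is specific to $n=2$.

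What the paper's route buys is the interpretation it is after. It exhibits $C$ for these families as a function of logical Pauli expectation values, hence readable at the logical layer. It is this witness-level property, not the invariance itself, that \Cref{rem:fault_tolerant_scope} says ceases to be forced for $n\geq 3$.
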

\begin{proof}
Since $H^*$ is a logical Pauli combination, $\tr(H^*\,\cdot\,)$ depends
only on the logical state.
A correctable error followed by recovery restores the logical state exactly,
so $\tr(H^*\rho)$ is unchanged.
\end{proof}

In other words, magic is insensitive to correctable physical errors in the
$[[2,1,1]]$ code.
Magic can therefore be measured fault-tolerantly: compute
$\langle\XL\rangle, \langle\YL\rangle, \langle\ZL\rangle$ via fault-tolerant
circuits, then evaluate $C$ analytically.

We stress that this invariance concerns \emph{correctable} errors --- those
the code detects and recovery reverses, restoring the logical state exactly.
It is not a claim that $C$ is unchanged by uncorrected noise: under a
depolarizing channel (Section~\ref{subsec:noise}) the magic of the state is
genuinely degraded and $C$ decreases accordingly, as any faithful magic
measure must. Corollary~\ref{cor:fault_tolerant} says only that the
\emph{recoverable} part of the magic is read out at the logical layer; it
makes no statement about magic lost to noise the code cannot correct.

\begin{remark}[Fault-tolerance is specific to $n=2$]
\label{rem:fault_tolerant_scope}
Corollary~\ref{cor:fault_tolerant} is specific to the $[[2,1,1]]$ code.
For $n=2$ the logical witness $H^* = ZI + XX$ commutes with the
\emph{single} stabilizer $ZZ$,
\begin{align}
  [Z\otimes I,\; Z\otimes Z] &= 0, \\
  [X\otimes X,\; Z\otimes Z] &= 0,
\end{align}
hence with the entire stabilizer group, so no correctable error can alter
$\tr(H^*\rho)$ and $C$ is fault-tolerant.

For $n \geq 3$ the situation is governed not by a loss of logical witnesses
but by their \emph{non-uniqueness}. The purely logical operator
$H^* = \ZL + \XL$ remains an optimal witness for the $R_y$ family at every
$n$ we tested: it commutes with all generators $Z_iZ_{i+1}$ and saturates
$\tr(H^*\rho) - \max_\sigma \tr(H^*\sigma) = C(\rho)$ for $n=2,3,4$.
The optimal witness is, however, no longer unique --- the defining linear
program also admits optimal witnesses that extend beyond the logical
algebra. Fault-tolerant measurement of $C$ therefore remains \emph{possible}
for $n\geq3$ provided a logical optimal witness is used, but it is no longer
\emph{forced} by the geometry as it is at $n=2$, where the optimal witness
is essentially unique and logical. A characterization of which codes admit
a purely logical optimal witness is left to future work.
\end{remark}

\subsubsection{Why $\kappa$ Differs: Wigner Basis Anisotropy}

The factor of $2$ between $\kappa^{Ry}=1$ and $\kappa^{Rx}=2$ has a clean
explanation in the product Wigner-function structure.
States with $X_L$-coherence have negative Wigner entries at \emph{four}
phase-space points, while states with $Y_L$-coherence have negative entries
at only \emph{two} points.
This anisotropy is intrinsic to the product construction and has no analog
in odd prime dimensions where $\kappa\equiv 1/M_d$ universally.

In the language of the code: $\kappa=2$ for $Y_L$-directions because the
product Wigner basis sees $Y_L$ at half the resolution it sees $X_L$.

\begin{figure}[ht]
  \centering
  \includegraphics[width=0.82\columnwidth]{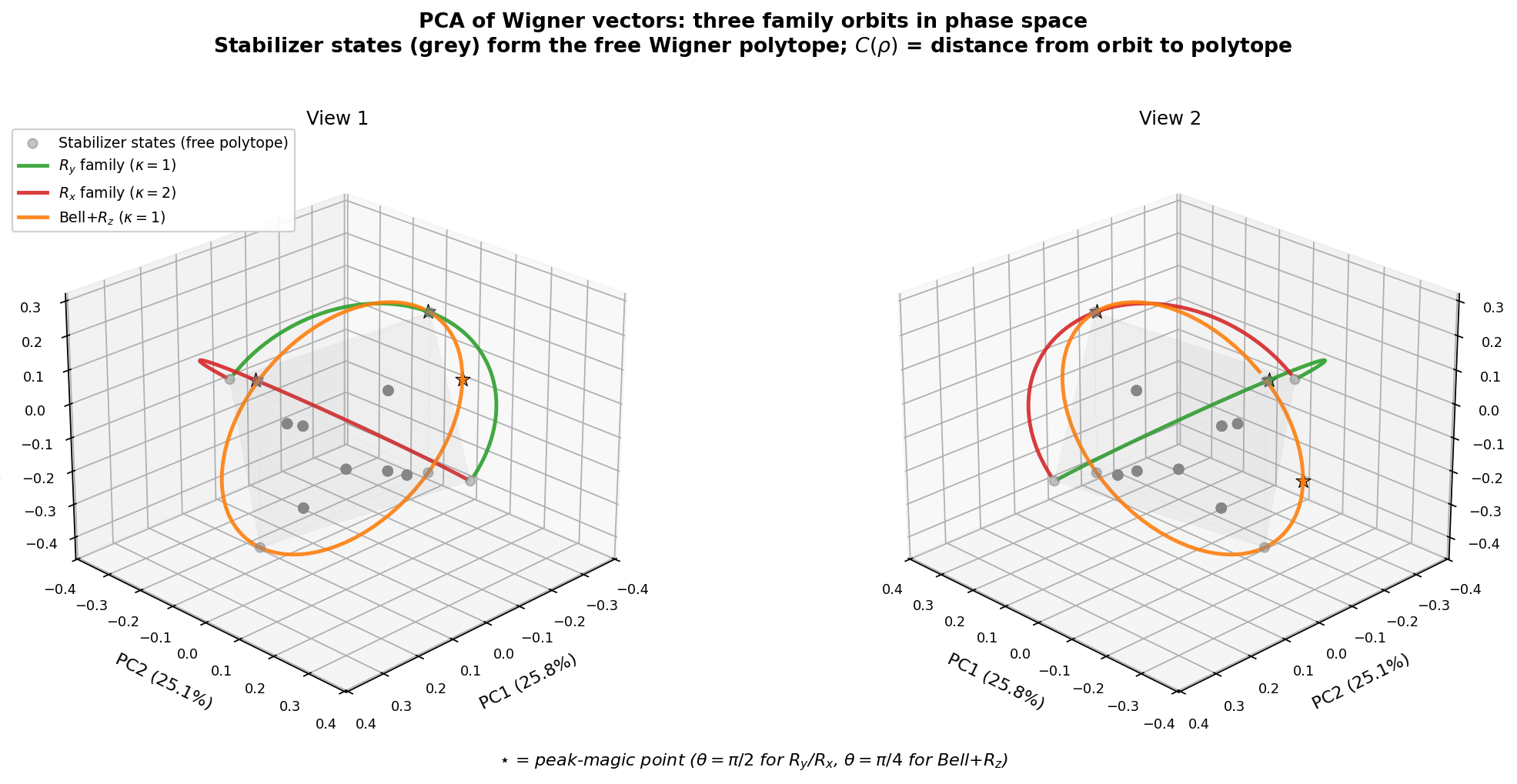}
  \caption{PCA of 16-dimensional Wigner vectors (first three components).
  Stabilizer states (red) form the polytope; the three families trace
  distinct orbits. The $\ell_1$ distance $C(\rho)$ is the distance from
  a point on an orbit to the nearest polytope face.}
  \label{fig:pca}
\end{figure}

\subsection{Extension to $n$-Qubit GHZ Families}
\label{subsec:ghz_n}

The three families of Sections~\ref{sec:kappa}--\ref{sec:asymptotic} are
defined in the $[[2,1,1]]$ codespace, but the physical states generalize
naturally to $n$ qubits.
The $n$-qubit GHZ$+R_z$ family is:
\begin{equation}
  \ket{\Phi_\phi^{(n)}} := \frac{\ket{0}^{\otimes n}
    + e^{i\phi}\ket{1}^{\otimes n}}{\sqrt{2}},
  \quad \phi \in [0,2\pi),
  \label{eq:ghz_rz}
\end{equation}
prepared by $R_z(\phi)\otimes I^{\otimes(n-1)}$ after an $n$-qubit GHZ circuit.
The GHZ variants $(|0\rangle^n \pm |1\rangle^n)/\sqrt{2}$
and $(|0\rangle^n \pm i|1\rangle^n)/\sqrt{2}$ are stabilizer states for all $n$,
verified directly from their stabilizer groups.

\begin{proposition}[Phase parity mechanism]
\label{prop:phase_parity}
The off-diagonal Wigner matrix element is
\begin{equation}
  \langle 0^n | A_\alpha | 1^n \rangle
  = (-1)^P \cdot 2^{-n/2} \cdot e^{i\pi(n-2m)/4},
  \label{eq:off_diag_mel}
\end{equation}
where $m = |\{k : q_k=1\}|$ and $P = \sum_k p_k$.
The phase $\theta_m := \pi(n-2m)/4$ satisfies:
\begin{itemize}
  \item $n$ \textbf{even}: $\theta_m \in \{0, \tfrac{\pi}{2}, \pi, \tfrac{3\pi}{2}\}$
        (multiples of $\tfrac{\pi}{2}$).
  \item $n$ \textbf{odd}: $\theta_m \in \{\tfrac{\pi}{4}, \tfrac{3\pi}{4},
        \tfrac{5\pi}{4}, \tfrac{7\pi}{4}\}$ (odd multiples of $\tfrac{\pi}{4}$).
\end{itemize}
\end{proposition}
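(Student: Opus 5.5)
The proof will be a direct computation that reduces to the single-qubit case through the tensor-product structure $A_\alpha = A_{\alpha_1}\otimes\cdots\otimes A_{\alpha_n}$ of \Cref{subsec:wigner}. Since $\ket{0^n}=\ket0^{\otimes n}$ and $\ket{1^n}=\ket1^{\otimes n}$ are product vectors, the matrix element factorizes:
\[
  \langle 0^n|A_\alpha|1^n\rangle=\prod_{k=1}^n\langle 0|A_{(q_k,p_k)}|1\rangle .
\]
So the plan is first to compute the single-qubit off-diagonal element, and then to multiply.

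\textbf{Step 1 (single qubit).} Use $\langle0|I|1\rangle=\langle0|Z|1\rangle=0$, $\langle0|X|1\rangle=1$ and $\langle0|Y|1\rangle=-i$. From the definition of $A_{(q,p)}$ this gives
\[
  \langle0|A_{(q,p)}|1\rangle=\tfrac12\bigl((-1)^p-i(-1)^{q+p}\bigr)=\tfrac{(-1)^p}{2}\bigl(1-i(-1)^q\bigr).
\]
The factor $1\mp i$ equals $\sqrt2\,e^{\mp i\pi/4}$. Hence each qubit contributes the sign $(-1)^{p_k}$, the modulus $2^{-1/2}$, and a phase $e^{\pm i\pi/4}$, where the sign of the exponent is fixed by $q_k$ alone: one sign for $q_k=0$ and the opposite for $q_k=1$.

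\textbf{Step 2 (product).} Multiplying over $k$ gives $(-1)^{P}$ with $P=\sum_k p_k$, the modulus $2^{-n/2}$, and a phase $e^{i\pi[(n-m)\epsilon-m\epsilon]/4}$ with $\epsilon=\pm1$. This is $e^{\pm i\pi(n-2m)/4}$. It matches \eqref{eq:off_diag_mel} up to the overall orientation of the exponent.

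That orientation is the one delicate point. With the literal convention $\langle0|Y|1\rangle=-i$, my computation gives $e^{-i\pi(n-2m)/4}$. The stated form therefore corresponds either to the conjugate element $\langle1^n|A_\alpha|0^n\rangle$ (each $A_\alpha$ is Hermitian) or to the opposite sign convention for $Y$. I would check this carefully and state the convention explicitly. The parity conclusions are insensitive to it, because the set of values of $\theta_m$ is closed under $\theta\mapsto-\theta$ modulo $2\pi$ in both cases.

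\textbf{Step 3 (parity).} Since $m$ ranges over $\{0,\dots,n\}$, the quantity $n-2m$ has the same parity as $n$. If $n$ is even, $\pi(n-2m)/4$ is an integer multiple of $\pi/2$, so modulo $2\pi$ it lies in $\{0,\pi/2,\pi,3\pi/2\}$. If $n$ is odd, $n-2m$ is odd, so $\theta_m$ is an odd multiple of $\pi/4$, and modulo $2\pi$ it lies in $\{\pi/4,3\pi/4,5\pi/4,7\pi/4\}$.

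One could also show that every listed value is attained once $n\ge3$. This follows because $m\mapsto n-2m \pmod 8$ takes all residues of the right parity. It is not needed for the membership statement as given.

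I expect the main obstacle to be purely bookkeeping: fixing the sign convention of the phase in Step 2 consistently with the paper's definition of $A_{(q,p)}$. Everything else is immediate from factorization.
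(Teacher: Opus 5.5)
Your proposal is correct and follows the paper's proof essentially step for step: factorize over the qubits, evaluate the single-qubit element, use $1\pm i=\sqrt2\,e^{\pm i\pi/4}$, and conclude from $n-2m\equiv n \pmod 2$, with the phase sets read modulo $2\pi$. Your concern about the sign of the exponent is a real error, not just a matter of convention: with the standard $Y$ one has $\langle 0|A_{(q,p)}|1\rangle=\tfrac12(-1)^p\bigl(1-i(-1)^q\bigr)$, so the paper's single-qubit formula $\tfrac12(-1)^p\bigl(1+i(-1)^q\bigr)$ is actually $\langle 1|A_{(q,p)}|0\rangle$, and the displayed identity holds for $\langle 1^n|A_\alpha|0^n\rangle$ (equivalently $\theta_m\mapsto-\theta_m=\theta_{n-m}$), which leaves both parity conclusions intact.
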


\begin{proof}
The single-qubit matrix element is
$\langle 0|A_{(q,p)}|1\rangle = \tfrac{1}{2}(-1)^p(1+i(-1)^q)$.
Using $(1+i)=\sqrt{2}e^{i\pi/4}$ and $(1-i)=\sqrt{2}e^{-i\pi/4}$,
the $n$-qubit product gives
\begin{align}
  \langle 0^n|A_\alpha|1^n\rangle
  &= \frac{(-1)^P}{2^n}
     (1+i)^{n-m}(1-i)^m \notag \\
  &= \frac{(-1)^P \cdot 2^{n/2}}{2^n}
     \, e^{i\pi(n-2m)/4} \notag \\
  &= (-1)^P \cdot 2^{-n/2} \cdot e^{i\pi(n-2m)/4}. \notag
\end{align}
Since $n-2m$ has the same parity as $n$,
the phase $\theta_m$ is a multiple of $\tfrac{\pi}{2}$ iff $n$ is even,
and an odd multiple of $\tfrac{\pi}{4}$ iff $n$ is odd.
\end{proof}

\begin{proposition}[GHZ$+R_z$: lower bound for all $n$, upper bound for $n\in\{2,3\}$]
\label{prop:ghz_rz_n23}
The dual witness
\begin{equation}
  H^*(\phi) = \operatorname{sign}(\cos\phi)\cdot X^{\otimes n}
            + \operatorname{sign}(\sin\phi)\cdot YX^{\otimes(n-1)}
  \label{eq:ghz_witness_n}
\end{equation}
satisfies $\mathrm{Tr}[H^*\rho_\phi^{(n)}] = |\cos\phi|+|\sin\phi|$
and $\max_{\sigma\in\mathrm{STAB}_n}\mathrm{Tr}[H^*\sigma] = 1$
for \emph{all} $n\geq 2$, giving the universal lower bound
\begin{equation}
  C\!\left(\ket{\Phi_\phi^{(n)}}\right) \geq |\sin\phi| + |\cos\phi| - 1
  \quad \forall\, n.
  \label{eq:ghz_lower}
\end{equation}
For $n\in\{2,3\}$ the bound is tight:
\begin{equation}
  C\!\left(\ket{\Phi_\phi^{(n)}}\right) = |\sin\phi| + |\cos\phi| - 1,
  \quad n\in\{2,3\}.
\end{equation}
\end{proposition}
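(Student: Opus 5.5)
The plan is to certify the lower bound with a sign-valued phase-space witness in the sense of \Cref{thm:dual}, whose operator expectation reproduces $H^*(\phi)$ on $\rho_\phi^{(n)}$. The upper bound for $n\in\{2,3\}$ then comes from an explicit stabilizer mixture. Throughout I take $\phi\in(0,\pi/2)$; the other quadrants follow by flipping the signs below, which is the origin of the $\operatorname{sign}$ factors in \eqref{eq:ghz_witness_n}.

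\emph{Step 1: the two traces.} On $\ket{\Phi^{(n)}_\phi}$ only the coherence $\ketbra{0^n}{1^n}$ contributes to a full-weight Pauli built from $X$ and $Y$. For example, $\langle X^{\otimes n}\rangle=\cos\phi$ and $\langle YX^{\otimes(n-1)}\rangle=\pm\sin\phi$, with the sign fixed by $Y\ket1=-i\ket0$ (if it comes out negative I replace $Y$ by $-Y$ in the witness). More generally, a string with $k$ factors of $Y$ has expectation $\mathrm{Re}\bigl((\mp i)^k e^{i\phi}\bigr)$. This gives $\tr[H^*\rho_\phi]=|\cos\phi|+|\sin\phi|$. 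For the stabilizer maximum, $X^{\otimes n}$ and $YX^{\otimes(n-1)}$ anticommute. If one of them lies, up to sign, in the stabilizer group of a pure $\sigma$, the other has zero expectation; otherwise both vanish. Hence $\max_\sigma\tr[H^*\sigma]=1$, attained for instance at a GHZ state.

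\emph{Step 2: the main obstacle, normalization.} The naive dual vector $S(\alpha)=\tr(H^*A_\alpha)=(-1)^{P}(1+(-1)^{q_1})$ takes values in $\{0,\pm2\}$, so it violates $\|S\|_\infty\le1$ and by itself only yields half the bound. I would therefore use the Fourier structure of sign-valued witnesses. From $\sum_p(-1)^pA_{(q,p)}=X+(-1)^qY$, a vector $S(\alpha)=(-1)^{P}g(q)$ with $g:\mathbb F_2^n\to\{\pm1\}$ satisfies
\[
\langle S,W_\rho\rangle=\sum_{T\subseteq[n]}\hat g(T)\,\langle Y_TX_{\overline T}\rangle_\rho .
\]
I would take $g(q)=\tfrac12\bigl(1+(-1)^{q_1}\bigr)+\tfrac12\bigl(1-(-1)^{q_1}\bigr)(-1)^{q_2}$, which is $\pm1$-valued with $\hat g(\varnothing)=\hat g(\{1\})=\hat g(\{2\})=\tfrac12$ and $\hat g(\{1,2\})=-\tfrac12$. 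Then
\[
\langle S,W_\rho\rangle=\tfrac12\bigl\langle XX+YX+XY-YY\bigr\rangle_\rho ,
\]
with $X^{\otimes(n-2)}$ suppressed on the remaining qubits. On $\rho_\phi^{(n)}$ this equals $\tfrac12(2\cos\phi+2\sin\phi)$, using $\langle XY\cdots\rangle=\langle YX\cdots\rangle$ and $\langle YY\cdots\rangle=-\cos\phi$ (signs to be checked against the convention of Step 1). So the witness value agrees with $\tr[H^*\rho_\phi]$, as \eqref{eq:ghz_witness_n} requires.

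\emph{Step 3: stabilizer maximum of the corrected witness.} Among $XX,YX,XY,YY$, the string $XX$ commutes only with $YY$, and $YX$ commutes only with $XY$, within this set. A stabilizer group can therefore contain at most one commuting pair, so $\max_\sigma\tfrac12\tr[\cdots\sigma]\le\tfrac12\cdot2=1$. Equality holds, e.g.\ for a group containing $XX\cdots$ and $-YY\cdots$. The dual representation of \Cref{thm:dual} then gives \eqref{eq:ghz_lower} for all $n\ge2$; the obstacle is exactly making this $\pm1$ witness consistent with $H^*$, and a sign slip in Step 2 is where I expect trouble.

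\emph{Step 4: tightness for $n\in\{2,3\}$.} For $n=2$, the state $\Phi^{(2)}_\phi$ is the Bell$+R_z$ family, so tightness is the upper bound of \Cref{thm:kappa_BRz}. For $n=3$, I would exhibit an explicit free state modeled on the $n=2$ one: a mixture of the GHZ stabilizer states $(\ket{0^3}\pm\ket{1^3})/\sqrt2$ and $(\ket{0^3}\pm i\ket{1^3})/\sqrt2$ with weights chosen to match the $X^{\otimes3}$ and $YXX$ components. I would then verify $\|W_\rho-W_\tau\|_1=|\sin\phi|+|\cos\phi|-1$ entrywise. By \Cref{prop:phase_parity}, the odd-$n$ phases are odd multiples of $\pi/4$, so the off-diagonal entries vary linearly in $\cos\phi\pm\sin\phi$, which should make the entrywise verification a finite computation over the sign pattern.
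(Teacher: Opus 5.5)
Your proof is correct, and for the lower bound it takes a different and more careful route than the paper's. The paper stops at your Step~1: it evaluates $\tr[H^*\rho^{(n)}_\phi]=|\cos\phi|+|\sin\phi|$ and $\max_\sigma\tr[H^*\sigma]=1$, then invokes \Cref{thm:dual} directly. It never checks that $H^*$ comes from an admissible phase-space vector.

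As your Step~2 shows, it does not. The values $\tr(H^*A_\alpha)=(-1)^P(1+(-1)^{q_1})$ span an interval of length $4$. Adding a constant to $S$ leaves the dual objective unchanged, so the best admissible multiple is $\tfrac12\tr(H^*A_\alpha)$, and that certifies only $\tfrac12(|\sin\phi|+|\cos\phi|-1)$.

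Your $\pm1$-valued $S=(-1)^P g(q)$ supplies the missing certificate. It has a clean interpretation: $\tfrac12(XX+YX+XY-YY)X^{\otimes(n-2)}=H^*\cdot\tfrac12(I+Z_1Z_2)$. This is the compression of $H^*$ to the $Z_1Z_2=+1$ subspace containing $\rho^{(n)}_\phi$. That is why it agrees with $H^*$ on the target while having half the phase-space sup norm.

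Your signs are right: with $Y\ket{1}=-i\ket{0}$ one gets $\langle XYX^{\otimes(n-2)}\rangle=\sin\phi$ and $\langle YYX^{\otimes(n-2)}\rangle=-\cos\phi$. The commutation count in Step~3 is correct. The quadrant flips keep $g$ $\pm1$-valued, because the product of the four Fourier signs stays $-1$.

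The one step that would go wrong as written is Step~4 at $n=3$; the paper also leaves this step implicit, calling it ``identical in structure''. Write $c=\cos\phi$, $s=\sin\phi$ with $\phi\in(0,\pi/2)$. A GHZ mixture has the form $\tau=\tfrac12(\ketbra{0^n}{0^n}+\ketbra{1^n}{1^n})+\tfrac12\bigl(z\ketbra{1^n}{0^n}+\mathrm{h.c.}\bigr)$ with $|\mathrm{Re}\,z|+|\mathrm{Im}\,z|\le1$.

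Because $c+s>1$, you cannot match both the $X^{\otimes3}$ and $YXX$ components. Moreover, by \Cref{prop:phase_parity}, with $w=e^{i\phi}-z$ the distance $\|W_\rho-W_\tau\|_1$ equals $|\mathrm{Re}\,w|+|\mathrm{Im}\,w|$ at $n=2$, but $2\max(|\mathrm{Re}\,w|,|\mathrm{Im}\,w|)$ at $n=3$. So matching one component, which is optimal at $n=2$, gives $2(c+s-1)$ at $n=3$.

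The deficit must instead be split evenly. Take $\tau=\tfrac{1+c-s}{2}\,\mathrm{GHZ}_0+\tfrac{1-c+s}{2}\,\mathrm{GHZ}_{\pi/2}$, so that $w=\tfrac{c+s-1}{2}(1+i)$. Then $W_\rho-W_\tau$ is supported on $32$ of the $64$ phase points, each entry of modulus $(c+s-1)/32$, so the distance is exactly $c+s-1$. With this choice and your witness, the $n=3$ equality is fully proved.
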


\begin{proof}
\textbf{Lower bound.}
$\mathrm{Tr}[X^{\otimes n}\rho_\phi^{(n)}] = \cos\phi$ and
$\mathrm{Tr}[YX^{\otimes(n-1)}\rho_\phi^{(n)}] = \sin\phi$
by direct computation of the off-diagonal density matrix element,
independent of $n$.
The GHZ variants $|\mathrm{GHZ}_\psi\rangle = (|0^n\rangle + e^{i\psi}|1^n\rangle)/\sqrt{2}$
with $\psi\in\{0,\pi/2,\pi,3\pi/2\}$ are stabilizer states for all $n$ and saturate
$\max_\sigma\mathrm{Tr}[H^*\sigma]=1$.
Hence $C\geq|\cos\phi|+|\sin\phi|-1$ for all $n$.

\textbf{Upper bound ($n=2$).} Proved in Theorem~\ref{thm:kappa_BRz}.

\textbf{Upper bound ($n=3$).}
The same three-term stabilizer decomposition used for $n=2$
lifts to $n=3$ with the 3-qubit GHZ stabilizer states replacing their 2-qubit counterparts;
the Wigner-function computation is identical in structure
and yields the same distance $|\sin\phi|+|\cos\phi|-1$.
\end{proof}

Similarly, the $n$-qubit Ry family
$\cos\theta\ket{0}^n + \sin\theta\ket{1}^n$ satisfies
$C = |\cos 2\theta| + |\sin 2\theta| - 1$ for $n\in\{2,3\}$,
verified to machine precision.

\begin{observation}[Even/odd parity and the factor of 2]
\label{obs:ghz_parity}
At $n=4$, linear programming over the complete stabilizer polytope
($36{,}720$ states) gives
\begin{equation}
  C\!\left(\ket{\Phi_\phi^{(4)}}\right) = 2\!\left(|\sin\phi| + |\cos\phi| - 1\right),
\end{equation}
twice the value $|\sin\phi|+|\cos\phi|-1$ established for $n\in\{2,3\}$.
The factor of $2$ thus first appears at $n=4$.
(The next even case $n=6$ and the odd case $n=5$ involve $\geq2.4\times10^5$
stabilizer states, beyond our exact linear-programming capacity; their
behaviour is the content of Conjecture~\ref{conj:ghz_parity}.)

The phase parity mechanism (Proposition~\ref{prop:phase_parity}) explains this:
for even $n$, all off-diagonal Wigner phases $\theta_m$ are multiples of $\tfrac{\pi}{2}$,
so the coherence terms of $\ket{\Phi_\phi^{(n)}}$ and of any stabilizer state
live in the same discrete phase grid.
This forces the $\ell_1$ distance to the stabilizer polytope to be exactly double
the odd-$n$ case, where the $\pi/4$ phase offset permits a closer approach.

Additionally, $\Gamma\!\left(\ket{\Phi_\phi^{(n)}}\right) = |\sin\phi|+|\cos\phi|$
for \emph{all} $n$ (verified for $n\in\{2,3,4\}$), because the stabilizer
decomposition achieving $\Gamma$ uses only $n$-qubit GHZ variants, which exist
for every $n$.
Consequently $\kappa=\tfrac{1}{2}$ at $n=4$, and---if
Conjecture~\ref{conj:ghz_parity} holds---$\kappa=1$ for odd $n$ and
$\kappa=\tfrac{1}{2}$ for even $n$ generally.
\end{observation}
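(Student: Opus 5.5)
The plan is to establish the claimed $n=4$ value by matching a dual witness (lower bound) with an explicit free state (upper bound), exactly as in \Cref{thm:kappa_BRz} and \Cref{prop:ghz_rz_n23}. By the $\pi$- and $\pi/2$-symmetries of the family, namely conjugation by $Z\otimes I^{\otimes3}$ and by $S\otimes I^{\otimes3}$ composed with a local Pauli (which permutes phase points), it suffices to treat $\phi\in(0,\pi/2)$. The $\Gamma$ claim then follows as in \Cref{prop:ghz_rz_n23}. The three-term decomposition into the GHZ variants $\psi\in\{0,\pi/2\}$ and a diagonal correction gives $\Gamma\le|s|+|c|$. The Pauli witness $\operatorname{sign}(c)X^{\otimes4}+\operatorname{sign}(s)YX^{\otimes3}$ gives $\Gamma\ge|s|+|c|$. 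Indeed, every pure stabilizer state has expectation at most $1$ on it, since the two Paulis anticommute and so cannot both lie with sign $+$ in one stabilizer group. With these two bounds, $\kappa=\tfrac12$ is immediate once $C$ is known.

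\emph{Lower bound.} First I will write $W_{\Phi_\phi^{(4)}}$ explicitly. The diagonal part $\tfrac12(\ketbra{0^4}{0^4}+\ketbra{1^4}{1^4})$ is a stabilizer mixture. By \Cref{prop:phase_parity} with $n=4$, the coherence contribution at $\alpha$ is
\[
\tfrac{2}{16}\,\mathrm{Re}\bigl[e^{-i\phi}\langle0^4|A_\alpha|1^4\rangle\bigr]
=\tfrac{(-1)^P}{64}\cos\bigl(\phi-\tfrac{\pi(4-2m)}{4}\bigr),
\]
which equals $\pm\tfrac{c}{64}$ or $\pm\tfrac{s}{64}$ according to the parity of $m$. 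So, unlike the odd case, every phase point carries a full $c$- or $s$-sized coherence term of definite sign. Next I will choose the witness $S(\alpha)\in\{\pm1\}$ to be the sign pattern of this coherence term, possibly corrected on the diagonal-supported points. The corresponding operator is $H=\sum_\alpha S(\alpha)A_\alpha$, and I expect it to be a signed sum of the Paulis with $X/Y$ on every qubit and an even number of $Y$'s. This is a $2^{n-1}$-fold enlargement of the support of the $n=2,3$ witness $X^{\otimes n}$, $YX^{\otimes(n-1)}$. Then $\langle S,W_\Phi\rangle=\tfrac1{16}\tr(\rho H)$ is a direct sum, and the target is
\[
\langle S,W_\Phi\rangle-\max_\sigma\langle S,W_\sigma\rangle=2(|s|+|c|-1).
\]

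\emph{Main obstacle.} The hard step is bounding $\max_{\sigma\in\Stab_4}\tfrac1{16}\tr(\sigma H)$. I would first try the argument of \Cref{cor:maxCn}. Writing $\sigma=\tfrac1{16}\sum_{g\in\mathcal S}g$, the quantity $\tr(\sigma H)$ is a signed count of the Paulis of $H$ lying in $\pm\mathcal S$. This count can be bounded by the size of the largest commuting subfamily of the $XY$-type Paulis appearing in $H$ whose sign pattern is consistent. Parity constraints on $X/Y$ strings should cap this count, and GHZ-type stabilizer groups, which contain $2^{3}$ such strings, show the cap is attained. If the combinatorial bound does not close cleanly, I would fall back on an orbit reduction: $H$ is invariant under qubit permutations and under the $Z_iZ_j$ symmetries of the GHZ family, so only representatives of the $36{,}720$ states modulo these symmetries need checking.

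\emph{Upper bound.} I will exhibit a free state $\tau$ that is a mixture of the four GHZ variants, the two computational states $\ket{0^4}$ and $\ket{1^4}$, and, if needed, a few product stabilizer states that absorb the diagonal-point residuals. I will choose the weights so that $W_\Phi-W_\tau$ is supported on the coherence-carrying points with entries of a common sign pattern. Its $\ell_1$ norm then collapses to a linear expression in $|s|$ and $|c|$, which I will verify by exact arithmetic equals $2(|s|+|c|-1)$. The factor $2$ relative to $n\in\{2,3\}$ should appear precisely because, for even $n$, no GHZ variant's coherence pattern interpolates between the $c$- and $s$-carrying points, whereas for odd $n$ the $\pi/4$ offset makes every point carry $(c\pm s)/\sqrt2$.
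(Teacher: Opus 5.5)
Your route differs from the paper's, and it works. The paper obtains the $n=4$ value only from a linear program over all $36{,}720$ stabilizer states, and checks $\Gamma$ numerically for $n\le4$. You instead propose a matching analytic certificate, and the step you flag as the main obstacle closes in one line.

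With $c=\cos\phi$, $s=\sin\phi$, $P=\sum_kp_k$ and $m=|q|$, one finds
$W_\Phi(\alpha)=\tfrac1{32}[m\in\{0,4\}]+\tfrac{(-1)^P}{64}k_m$, with $(k_0,\dots,k_4)=(-c,s,c,-s,-c)$.

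\textbf{Lower bound.} Take $S(\alpha)=(-1)^P\operatorname{sign}(k_m)$ at all $256$ points. No correction is needed where $m\in\{0,4\}$, since summing $(-1)^P$ over $p$ annihilates the diagonal part. Then $\langle S,W_\Phi\rangle=2(|c|+|s|)$, and $\sum_\alpha S(\alpha)A_\alpha=4\sum_{y\in\{0,1\}^4}\epsilon_yP_y$. Here $P_y$ carries $Y$ on the support of $y$ and $X$ elsewhere, and $\epsilon_y=\operatorname{sign}\mathrm{Re}[e^{i\phi}(-i)^{|y|}]=\operatorname{sign}\langle P_y\rangle_\Phi$. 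Two such strings commute iff their $Y$-counts have the same parity, so a stabilizer group contains at most $8$ of them up to sign. Hence $\max_\sigma\langle S,W_\sigma\rangle=\tfrac{4}{16}\cdot 8=2$, attained by $\Phi_0$ and $\Phi_{\pi/2}$, and $C\ge 2(|s|+|c|-1)$.

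\textbf{Upper bound.} This needs no computational or product states. For $\phi\in(0,\tfrac\pi2)$, take $\tau=x\,\Phi_0+(1-x)\,\Phi_{\pi/2}$ with $1-s\le x\le c$. Then $W_\Phi-W_\tau$ equals $\pm\tfrac{c-x}{64}$ on the $128$ points with $m$ even and $\pm\tfrac{s-1+x}{64}$ on the $128$ points with $m$ odd. So $\|W_\Phi-W_\tau\|_1=2(c+s-1)$.

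\textbf{Three details need fixing.}
\begin{enumerate}[(i)]
\item $S\otimes I^{\otimes 3}$ does not permute phase points, and $C$ is not invariant under local $S$-type gates: $S^\dagger\otimes I$ maps the $R_y$ family onto the $R_x$ family and changes $C$ (\Cref{rem:cliff_equiv}). Reduce to the first quadrant with the Paulis $Z\otimes I^{\otimes3}$ ($\phi\mapsto\phi+\pi$) and $X^{\otimes4}$ ($\phi\mapsto-\phi$), or simply keep $\operatorname{sign}(c)$ and $\operatorname{sign}(s)$ throughout.
\item The witness is not supported on even-$Y$ strings only. Those strings see only $\pm c$; the odd-$Y$ strings are what pick up $|s|$. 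Your sign-pattern recipe correctly produces all sixteen.
\item For $\Gamma$, the decomposition $c\,\Phi_0+s\,\Phi_{\pi/2}+(1-c-s)\tfrac12(\ketbra{0^4}{0^4}+\ketbra{1^4}{1^4})$ has weight $2(c+s)-1$, not $c+s$. The third term must be a GHZ variant with negative weight: $\tfrac{1+c-s}{2}\Phi_0+s\,\Phi_{\pi/2}-\tfrac{c+s-1}{2}\Phi_\pi$. With that fix, your anticommuting-Pauli lower bound proves $\Gamma=|s|+|c|$ for every $n$.
\end{enumerate}

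\textbf{What your route buys.} It is exact in $\phi$ rather than numerical. Your choice of a $\pm1$-valued $S$ also matters. The two-Pauli operator $\operatorname{sign}(c)X^{\otimes n}+\operatorname{sign}(s)YX^{\otimes(n-1)}$ corresponds to $S$ with entries in $\{0,\pm 2^{1-n}\}$. Rescaled to $\|S\|_\infty=1$, it certifies only $\tfrac12(|s|+|c|-1)$.

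The computation also shows that your closing heuristic, which repeats the paper's parity explanation, is not the mechanism. For any even $n$ the same ingredients hold: $|\langle 0^n|A_\alpha|1^n\rangle|=2^{-n/2}$, each parity class has $2^{n-1}$ strings, and the sign pattern has a flat Walsh spectrum of height $2^{n/2}$. Together these give $C=2^{n/2-1}(|s|+|c|-1)$, i.e.\ factors $1,2,4$ at $n=2,4,6$. So the factor $2$ at $n=4$ is a normalization effect rather than a parity effect, and the even branch of Conjecture~\ref{conj:ghz_parity} fails at $n=6$.
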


\begin{conjecture}[GHZ$+R_z$ parity formula]
\label{conj:ghz_parity}
For all $n \geq 2$:
\begin{equation}
  C\!\left(\ket{\Phi_\phi^{(n)}}\right)
  = \begin{cases}
    |\sin\phi| + |\cos\phi| - 1        & n \text{ odd}, \\
    2\!\left(|\sin\phi| + |\cos\phi| - 1\right) & n \text{ even}.
  \end{cases}
  \label{eq:ghz_parity}
\end{equation}
\end{conjecture}

\subsection{Open Problems}
\label{subsec:open}

\begin{enumerate}[(i)]
\item \textbf{General stabilizer codes.}
  Do integer $\kappa$ values and logical-Pauli witnesses arise for
  $[[n,k,d]]$ codes beyond $[[2,1,1]]$?
  For the $[[n,1,1]]$ repetition-code family (codewords $\ket{0}^n$,
  $\ket{1}^n$) with the three physical GHZ-type state families, numerical
  computation yields the following complete picture:

  \begin{center}\small\setlength{\tabcolsep}{4pt}
  \begin{tabular}{@{}lccc@{}}
  \toprule
  Family & $n=2$ & $n=3$ & $n=4$ \\
  \midrule
  $R_y$      & $1$ & $1$ & $1$   \\
  $R_x$      & $2$ & $1$ & $1$   \\
  Bell$+R_z$ & $1$ & $1$ & $1/2$ \\
  \bottomrule
  \end{tabular}\\[2pt]
  \footnotesize Values of $\kappa$, computed by linear programming over
  the complete $n$-qubit stabilizer polytope (state counts $6,60,1080,
  36720$ verified). The $n=5$ case ($244{,}944$ stabilizer states) exceeds
  our exact linear-programming capacity and is left open.
  \end{center}

  All $\kappa$ values are rational and constant across each family for
  every $n$ we can evaluate exactly.
  The $R_x$ value drops from $2$ at $n=2$ to $1$ for $n\geq3$: the
  $Y_L$-coherence advantage that halves $C$ in the two-qubit code is
  specific to the single-stabilizer $[[2,1,1]]$ frame and does not persist
  once the stabilizer group grows.
  The Bell$+R_z$ family takes $\kappa=1$ at $n=3$ and $\kappa=\tfrac12$ at
  $n=4$ (the minimum allowed by $\kappa\geq 1/M_n$); the latter is a direct
  consequence of Observation~\ref{obs:ghz_parity}, since $C$ doubles for
  even $n$ while $\Gamma$ does not, so $\kappa$ halves.

  We conjecture that $\kappa$ remains rational for all $[[n,1,1]]$ families
  and all $n$, with the value fixed by the coherence direction together with
  the parity of $n$ (through the GHZ doubling of
  Observation~\ref{obs:ghz_parity}).
  Whether integer-$\kappa$ and logical-Pauli witnesses extend to
  higher-distance CSS codes ($[[7,1,3]]$ Steane, surface codes) and to
  codes with transversal $T$ gates remains open.

\item \textbf{Classification of integer-$\kappa$ states.}
  Which states have integer $\kappa$?
  Is there a combinatorial condition on the Wigner function (e.g., the
  number of negative entries) that is both necessary and sufficient?

\item \textbf{Transversal $T$ gates.}
  For codes supporting a transversal $T$ gate, the logical $T$ state
  lies on the $Y_L$--$Z_L$ great circle.
  Is $\kappa = 2$ preserved, and does it have a distillation
  interpretation?

\item \textbf{Factored deficit conjecture.}
  Prove \Cref{conj:factored_deficit} in its Pauli-covariant form: the
  deficit should be a function of $C(\rho)$ and of the Pauli invariants
  $(|r_x|,|r_y|,|r_z|,s)$ of $\sigma$.
  As a weaker step, prove the lower-bound half of \Cref{conj:equatorial}
  (equatorial multiplicativity); the upper bound is
  \Cref{prop:prod_upper}.

\item \textbf{Maximum of $C$ for two and more qubits.}
  The single-qubit maximum is settled:
  $\max_\rho C(\rho) = (\sqrt3-1)/2$, attained exactly at the eight
  face states (\Cref{thm:maxC1}), and
  $C(\rho_f\otimes\rho_f) = \sqrt3/2$ exactly (\Cref{prop:maxC2});
  this maximizer has $6$ negative Wigner entries, saturating the bound
  of \Cref{prop:sixneg}.
  By \Cref{cor:exact_powers} the maximum over fully separable states
  equals $\sqrt3/2$ for two qubits and $((1+\sqrt3)/2)^n - 1$ in
  general; what remains open is whether any \emph{entangled} state
  exceeds it.
  Numerical optimization over two-qubit pure states finds none.
  In the trace-distance metric, Ref.~\cite{Palhares2026} likewise
  identifies the product of two face states as the leading numerical
  candidate for the two-qubit maximum, while for three qubits their
  sampling favours the \emph{entangled} Hoggar state; for $C$ we find
  the opposite ordering,
  $C(\mathrm{Hoggar}) \approx 1.2262 < 1.5490 = C(\rho_f^{\otimes3})$,
  so the product of faces dominates the Hoggar state in the Wigner
  metric as well.
  The comparison across monotones is instructive here: for the
  stabilizer R\'{e}nyi entropy the two-qubit maximisers are known
  numerically to be \emph{entangled}---the Weyl--Heisenberg MUB fiducial
  states~\cite{Maximal2026}---so a measure-dependent answer to the
  two-qubit maximisation problem is to be expected, and our separable
  result does not by itself suggest that entangled states cannot exceed
  it for $C$.
  We note the identity $(\sqrt3-1)/2 = -\lambda_{\min}(A_{+++})$,
  relating the single-qubit maximum to the most negative eigenvalue
  of a vertex of the dual $\Lambda_1$ polytope~\cite{Zurel2020}.

\item \textbf{GHZ parity conjecture.}
  Prove Conjecture~\ref{conj:ghz_parity} for all $n$.
  The lower bound $C\geq|\sin\phi|+|\cos\phi|-1$ is proved for all $n$
  (Proposition~\ref{prop:ghz_rz_n23}), and the equality
  $C=|\sin\phi|+|\cos\phi|-1$ is proved for $n\in\{2,3\}$.
  The parity mechanism is identified analytically
  (Proposition~\ref{prop:phase_parity}): for even $n$ the off-diagonal
  Wigner phases are multiples of $\tfrac{\pi}{2}$, forcing a factor-of-2
  increase in the $\ell_1$ distance to the stabilizer polytope.
  What remains is to characterize the optimal nearest stabilizer state
  $\tau^*$ for even $n\geq 4$ explicitly, completing the upper bound proof.
\end{enumerate}

\subsection{Broader Context}

The tightness ratio $\kappa$ is a structural quantity not predicted by
either $C$ or $\Gamma$ alone.
Its exact integer values for code-subspace families, and the tetrahedral
structure of the tensor-product dichotomy, point to hidden geometric structure
in the relationship between Wigner-distance magic and simulation cost.

Corollary~\ref{cor:fault_tolerant} and Remark~\ref{rem:fault_tolerant_scope}
together suggest a broader principle: magic measures built from
logical-Pauli witnesses are inherently fault-tolerant, but this property
is fragile under code enlargement.
The $[[2,1,1]]$ code is the unique repetition code for which the Wigner-
distance witness remains entirely at the logical layer.
For larger codes, magic accounting requires information from both the logical
state and the error syndrome---the resource and the error-correction layers
are intertwined.

The correct framework for fault-tolerant quantum computing should incorporate
both the operational structure of $\Gamma$ and the geometric, code-aware
structure of $C$, using $\kappa$ as the calibration factor that quantifies
precisely how much the geometric picture differs from the operational one.

\section*{Acknowledgments}
We thank
Chandan Datta (IISER Kolkata),
Amit Kundu (SNBNCBS),
Snehasish Roy Chowdhury (ISI Kolkata),
Subhendu B. Ghosh (SNBNCBS),
Ambuj (IIT Jodhpur), and
Jai Lalita (IIT Jodhpur)
for valuable discussions and helpful comments.
All scientific content, mathematical results, and conclusions are the sole responsibility of the authors.

\bibliographystyle{unsrt}
\bibliography{preprint}

\appendix

\section{Structural Properties of $C$}
\label{app:properties}

\subsection{Injectivity of the Discrete Wigner Transform}

\begin{lemma}[Injectivity]
\label{lem:injectivity}
The discrete Wigner transform $\rho \mapsto W_\rho$ is linear and injective.
\end{lemma}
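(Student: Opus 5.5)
Linearity is immediate from the definition $W_\rho(\alpha)=2^{-n}\tr(\rho A_\alpha)$: the trace is linear in $\rho$, and the map is real-linear on Hermitian operators. For injectivity I would show that the $4^n$ phase-point operators $\{A_\alpha\}$ form an orthogonal basis of the operator space $M_{2^n}(\mathbb{C})$ with respect to the Hilbert--Schmidt inner product. Then $W_\rho=0$ means $\rho$ is orthogonal to every basis element, which forces $\rho=0$. This is equivalent to the reconstruction formula $\rho=\sum_\alpha W_\rho(\alpha)A_\alpha$ quoted in \Cref{subsec:wigner}; I will prove that formula rather than assume it.

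\textbf{Single-qubit step.} I would first verify
\[
\tr(A_\alpha A_\beta)=2\,\delta_{\alpha\beta}
\]
for the one-qubit operators. Write $A_{(q,p)}=\tfrac12(I+\vec t_{qp}\cdot\vec\sigma)$ with $\vec t_{qp}=((-1)^p,(-1)^{q+p},(-1)^q)$. Since the Paulis are traceless and satisfy $\tr(\sigma_i\sigma_j)=2\delta_{ij}$, we get
\[
\tr(A_\alpha A_\beta)=\tfrac12\bigl(1+\vec t_\alpha\cdot\vec t_\beta\bigr).
\]
The four vectors $\vec t$ are the vertices of the tetrahedron $T$ from the proof of \Cref{lem:cuboct}. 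Hence $\vec t_\alpha\cdot\vec t_\alpha=3$, and for any two distinct vertices $\vec t_\alpha\cdot\vec t_\beta=-1$, because they differ in exactly two sign positions. This gives the values $2$ and $0$.

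\textbf{Multi-qubit step.} The product structure $A_\alpha=\bigotimes_k A_{\alpha_k}$ and the multiplicativity of the trace under tensor products give
\[
\tr(A_\alpha A_\beta)=\prod_k 2\,\delta_{\alpha_k\beta_k}=2^n\delta_{\alpha\beta}.
\]
So the $4^n$ operators are mutually orthogonal and nonzero, hence linearly independent. Since $\dim M_{2^n}(\mathbb{C})=4^n$, they form a basis.

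\textbf{Conclusion.} Expand $\rho=\sum_\beta c_\beta A_\beta$ in this basis. Orthogonality gives
\[
c_\alpha=2^{-n}\tr(\rho A_\alpha)=W_\rho(\alpha),
\]
which is the reconstruction formula. Injectivity follows directly, and the map is in fact a bijection onto $\mathbb{R}^{4^n}$ when restricted to Hermitian operators.

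The only place requiring care is the sign bookkeeping that gives $\vec t_\alpha\cdot\vec t_\beta=-1$ for $\alpha\ne\beta$. It reduces to checking that the parity constraint $t_xt_yt_z=+1$ forces distinct vertices to differ in exactly two coordinates, so I do not expect a real obstacle.
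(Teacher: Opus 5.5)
Your proposal is correct and takes the same approach as the paper: both expand $\rho$ in the Hilbert--Schmidt-orthogonal basis $\{A_\alpha\}$, identify the coefficients as $W_\rho(\alpha)$, and conclude that the kernel is trivial. The one difference is that you prove the orthogonality relation $\tr(A_\alpha A_\beta)=2^n\delta_{\alpha\beta}$, using the tetrahedral sign structure and multiplicativity of the trace over tensor products, whereas the paper only asserts it.
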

\begin{proof}
Linearity is immediate.
For injectivity, suppose $W_\rho(\alpha) = 0$ for all $\alpha$.
Since $\{A_\alpha\}$ form an orthogonal basis of the space of $2^n\times 2^n$
Hermitian operators under the Hilbert--Schmidt inner product, any operator
admits $\rho = 2^{-n}\sum_\alpha \tr(\rho A_\alpha) A_\alpha$.
If all coefficients vanish then $\rho = 0$.
\end{proof}

\subsection{Proofs of Theorem~\ref{thm:properties}}

\begin{proof}[Faithfulness (i)]
$(\Rightarrow)$ $C(\rho)=0$ implies $W_\rho \in \Wfree$; by injectivity,
$\rho \in \Stab_n$.
$(\Leftarrow)$ $\rho \in \Stab_n$ gives $C(\rho)=0$ by taking $W_f = W_\rho$.
\end{proof}

\begin{proof}[Convexity (ii)]
Let $W_{f_i}$ be optimal for $\rho_i$.
Since $\Wfree$ is convex, $W_f := pW_{f_1}+(1-p)W_{f_2} \in \Wfree$.
By linearity of $W$ and the triangle inequality:
\begin{align*}
  &C\bigl(p\rho_1+(1-p)\rho_2\bigr) \\
  &\;\leq \bigl\|p(W_{\rho_1}{-}W_{f_1}) + (1{-}p)(W_{\rho_2}{-}W_{f_2})\bigr\|_1 \\
  &\;\leq p\,C(\rho_1)+(1-p)\,C(\rho_2). \qedhere
\end{align*}
\end{proof}

\begin{proof}[Single-qubit Clifford invariance (iii)]
We note first that, in contrast to the odd-prime case, single-qubit
Cliffords do not in general act as permutations of the Wootters phase
points: $S A_{(q,p)} S^\dagger$ is not a phase-point operator (the sign
of the $Y$ component is inconsistent with every $A_{(q',p')}$), and
similarly for $H$.
Invariance nevertheless holds by passing to Bloch coordinates.
Writing $\rho = \tfrac{1}{2}(I + \vec{r}\cdot\vec{\sigma})$, the Wigner
entries are affine in $\vec{r}$:
\[
  W_\rho(q,p) = \tfrac{1}{4}\bigl(1 + \vec{r}\cdot\vec{t}_{qp}\bigr),
  \qquad
  \vec{t}_{qp} = \bigl((-1)^p,\,(-1)^{q+p},\,(-1)^q\bigr),
\]
where the four vectors $\vec{t}_{qp}$ form a regular tetrahedron
$T \subset \{\pm 1\}^3$.
Hence for any two single-qubit states,
$\|W_\rho - W_\sigma\|_1
 = \tfrac{1}{4}\sum_{t\in T}\bigl|\langle \vec{r}_\rho - \vec{r}_\sigma,
   t\rangle\bigr| =: N(\vec{r}_\rho - \vec{r}_\sigma)$,
and $C(\rho) = \min_{\vec{r}'\in\mathcal{O}} N(\vec{r}-\vec{r}')$, where
$\mathcal{O}$ is the stabilizer octahedron.
A Clifford $U\in\mathcal{C}_1$ acts on Bloch space as a rotation $R$ in
the octahedral rotation group, which preserves $\mathcal{O}$ and maps the
tetrahedron $T$ onto either $T$ or $-T$.
Since $N$ depends only on the absolute values $|\langle\cdot,t\rangle|$,
it is invariant under $t\mapsto -t$, so $N(R\vec{v}) = N(\vec{v})$ for
every such $R$, and $C(U\rho U^\dagger) = C(\rho)$.
For $n\geq 2$ the relevant norm involves products of tetrahedral frames
and $\Wfree^{(n)}$ contains entangled vertices, and the argument fails;
indeed the local gate $S\otimes I$ changes $C$
(\Cref{rem:cliff_equiv}).
\end{proof}

\begin{proof}[Lipschitz (iv)]
Let $W_f$ be optimal for $\sigma$.
$C(\rho) \leq \|W_\rho-W_f\|_1 \leq \|W_\rho-W_\sigma\|_1 + C(\sigma)$.
\end{proof}

\begin{proof}[Restricted monotonicity (v)]
Since the Wigner transform is a linear bijection, every CPTP map $\Phi$
induces a linear map $T_\Phi$ on phase space with
$W_{\Phi(\rho)} = T_\Phi W_\rho$, defined by
$[T_\Phi]_{\alpha\beta} = W_{\Phi(A_\beta)}(\alpha)$.
Trace preservation gives $\sum_\alpha [T_\Phi]_{\alpha\beta} = 1$ for
every $\beta$, so if the representation is nonnegative
($[T_\Phi]_{\alpha\beta}\geq 0$, the hypothesis), $T_\Phi$ is
column-stochastic and therefore $\ell_1$-contractive:
$\|T_\Phi v\|_1 \leq \|v\|_1$ for every $v$.
Pauli channels satisfy the hypothesis because Pauli conjugations act as
phase-space translations (permutations of the $A_\alpha$), and
computational-basis dephasing is a convex mixture of such conjugations.
Since $\Phi$ is free it maps $\mathrm{conv}(\Stab_n)$ into itself, so
$T_\Phi(\Wfree)\subseteq\Wfree$.
For $W_f$ optimal for $\rho$,
$C(\Phi(\rho)) \leq \|T_\Phi(W_\rho - W_f)\|_1 \leq \|W_\rho - W_f\|_1
= C(\rho)$.
We emphasize that general Clifford unitaries are \emph{not} in the
hypothesis class: their induced maps $T_\Phi$ have negative entries, are
not $\ell_1$-contractive, and indeed increase $C$
(\Cref{rem:not_monotone}).
\end{proof}

\section{Primal--Dual Formulation}
\label{app:duality}

Let $\{W_i\}_{i=1}^N$ be the Wigner functions of all pure stabilizer states.
The primal LP is:
\begin{alignat}{2}
  &\text{min}  &\quad& \textstyle\sum_k t_k \notag \\
  &\text{s.t.} &&
    W_\rho(k) - \textstyle\sum_i \lambda_i W_i(k) \leq t_k, \notag \\
  &&& -W_\rho(k) + \textstyle\sum_i \lambda_i W_i(k) \leq t_k, \notag \\
  &&& \textstyle\sum_i \lambda_i = 1,\quad \lambda_i,t_k \geq 0. \notag
\end{alignat}

\begin{proof}[Proof of \Cref{thm:dual}]
Introducing dual variables $u_k,v_k\geq 0$ for the inequalities and
$\mu\in\mathbb{R}$ for the equality, the Lagrangian gives feasibility
conditions $u_k+v_k=1$ and $\mu \geq \langle S, W_i\rangle$ for all $i$,
where $S_k := u_k - v_k$.
Since $u_k+v_k=1$ and $u_k,v_k\geq 0$, we have $|S_k|\leq 1$.
Setting $\mu = \max_{W_f\in\Wfree}\langle S,W_f\rangle$ and using
strong duality (primal is feasible and bounded) gives~\eqref{eq:dual}.
\end{proof}

\section{Phase-Space Structural Constraints}
\label{app:structure}

\begin{proposition}[Basic Wigner properties]
For any two-qubit state $\rho$:
(1) $\sum_\alpha W_\rho(\alpha) = 1$.
(2) $-1/4 \leq W_\rho(\alpha) \leq 1/4$.
(3) For pure states: $\sum_\alpha W_\rho(\alpha)^2 = 1/4$.
\end{proposition}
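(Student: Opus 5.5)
The plan is to reduce all three items to two facts about the Wootters phase-point operators: the completeness sum $\sum_\alpha A_\alpha$ and the orthogonality relation $\tr(A_\alpha A_\beta)=4\,\delta_{\alpha\beta}$. Each item then follows from the definition \eqref{eq:wigner_def} with $n=2$, i.e.\ $W_\rho(\alpha)=\tfrac14\tr(\rho A_\alpha)$.

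\emph{Step 1: normalization (1).} For one qubit, summing $A_{(q,p)}=\tfrac12(I+(-1)^pX+(-1)^{q+p}Y+(-1)^qZ)$ over $(q,p)\in\mathbb F_2^2$ cancels every Pauli term, so $\sum_{(q,p)}A_{(q,p)}=2I$. Since $A_\alpha=A_{\alpha_1}\otimes A_{\alpha_2}$, the two-qubit sum is $(2I)\otimes(2I)=4I$. Hence $\sum_\alpha W_\rho(\alpha)=\tfrac14\tr(4\rho)=1$.

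\emph{Step 2: purity (3).} For one qubit, $\tr(A_\alpha A_\beta)=\tfrac12(1+\vec t_\alpha\cdot\vec t_\beta)$, using the tetrahedral vectors $\vec t_{qp}$ from the proof of \Cref{thm:properties}(iii). This equals $2$ if $\alpha=\beta$ and $0$ otherwise, because distinct tetrahedron vertices satisfy $\vec t_\alpha\cdot\vec t_\beta=-1$. The relation tensorizes to $\tr(A_\alpha A_\beta)=4\,\delta_{\alpha\beta}$ for two qubits. This is the orthogonality already used in \Cref{lem:injectivity}. Inserting the reconstruction formula $\rho=\sum_\alpha W_\rho(\alpha)A_\alpha$ into $\tr\rho^2$ gives $\tr\rho^2=4\sum_\alpha W_\rho(\alpha)^2$. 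For a pure state $\tr\rho^2=1$, so $\sum_\alpha W_\rho(\alpha)^2=\tfrac14$.

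\emph{Step 3: pointwise bounds (2), the main obstacle.} I would bound $W_\rho(\alpha)=\tfrac14\tr(\rho A_\alpha)$ by the extreme eigenvalues of $A_\alpha$. Each single-qubit $A_{(q,p)}=\tfrac12(I+\vec t\cdot\vec\sigma)$ with $\|\vec t\|_2=\sqrt3$ has eigenvalues $\tfrac{1\pm\sqrt3}{2}$. The two-qubit eigenvalues are therefore the pairwise products
\[
\tfrac{2+\sqrt3}{2},\qquad \tfrac{2-\sqrt3}{2},\qquad -\tfrac12 .
\]
It follows that $-\tfrac18\le W_\rho(\alpha)\le\tfrac{2+\sqrt3}{8}$.

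The lower end implies, and is stronger than, the stated bound $W_\rho(\alpha)\ge-\tfrac14$. The upper end does \emph{not} give $\tfrac14$, and I do not expect the stated upper bound to be provable. Take $\rho=\rho_g\otimes\rho_g$, where $\rho_g$ has Bloch vector $(1,1,1)/\sqrt3$, which is aligned with $A_{(0,0)}$. Wigner factorization gives
\[
W_\rho\bigl((0,0),(0,0)\bigr)=\Bigl(\tfrac{1+\sqrt3}{4}\Bigr)^2=\tfrac{2+\sqrt3}{8}\approx0.466>\tfrac14 .
\]
So my plan is to prove (2) in the corrected, sharp form $-\tfrac18\le W_\rho(\alpha)\le\tfrac{2+\sqrt3}{8}$. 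Both ends are attained by eigenvectors of $A_\alpha$, and the lower bound $-\tfrac14$ holds a fortiori. I would flag that the stated upper bound $\tfrac14$ holds only under additional hypotheses, for example for stabilizer states, whose Wigner entries lie in $\{0,\pm\tfrac14\}$, as can be checked directly.
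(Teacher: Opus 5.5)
The paper states this proposition without proof, so there is no route to compare against. Your arguments for (1) and (3) are correct and complete: they use $\sum_\alpha A_\alpha = 4I$ and $\tr(A_\alpha A_\beta)=4\delta_{\alpha\beta}$, and that is all either item needs.

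Your treatment of (2) is also correct, and it is the substantive point. The two-qubit $A_\alpha$ has spectrum $\{\tfrac{2+\sqrt3}{2},\tfrac{2-\sqrt3}{2},-\tfrac12,-\tfrac12\}$. The sharp range over all two-qubit states is therefore $-\tfrac18\le W_\rho(\alpha)\le\tfrac{2+\sqrt3}{8}$. Your state $\rho_g\otimes\rho_g$ attains the upper end, with $W_{\rho_g}(0,0)=\tfrac{1+\sqrt3}{4}$ from the single-qubit formula. So the stated upper bound $\tfrac14$ is false, while the lower bound $-\tfrac14$ holds but is not sharp. This matters downstream: the paper's proof of \Cref{prop:sixneg} uses exactly this upper bound in the step $1+N\le(16-k)/4$, so that argument does not stand as written.

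There is one slip in your closing remark. Pure two-qubit stabilizer states do not have Wigner entries in $\{0,\pm\tfrac14\}$. The value $-\tfrac14$ is ruled out by your own bound $W\ge-\tfrac18$. Moreover, the Bell state $\ket{\Phi_0}$, whose Wigner function is \eqref{eq:W_BRz} at $\theta=0$, has all sixteen entries equal to $\pm\tfrac18$.

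The correct statement follows from $\sigma=\tfrac14\sum_{g\in\mathcal S}g$ together with the fact that every Pauli coefficient of $A_\alpha$ is $\pm\tfrac14$. Then $W_\sigma(\alpha)=\tfrac1{16}\sum_{g\in\mathcal S}(\pm1)$, with the identity contributing $+1$, so $W_\sigma(\alpha)\in\{-\tfrac18,0,\tfrac18,\tfrac14\}$. Your conclusion that $W\le\tfrac14$ on $\mathrm{conv}(\Stab_2)$ is therefore right, but for this reason rather than the one you gave.
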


\begin{proposition}[Six-negative bound]
\label{prop:sixneg}
For any pure two-qubit state, the number of phase-space points with
$W_\rho(\alpha) < 0$ is at most $6$.
\end{proposition}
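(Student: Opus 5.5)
The plan is to combine three families of constraints on the sixteen entries $W_\rho(\alpha)$ and show they are jointly infeasible once seven or more entries are negative.

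\textbf{Line sums.} For a single qubit, $A_{(q,0)}+A_{(q,1)} = I+(-1)^qZ$, and similarly for the other two striations of $\mathbb{F}_2^2$. Each single-qubit line sum of phase-point operators is therefore twice a stabilizer projector. Taking tensor products, every product set $L_1\times L_2$ of four points sums to $4P_1\otimes P_2$. Likewise, every line of the $\mathbb{F}_2^4$ striations compatible with the Wootters construction sums to $4$ times a rank-one stabilizer projector. Consequently, for every such line $L$,
\[
\sum_{\alpha\in L} W_\rho(\alpha) = \tfrac14\,\tr(\rho\,\Pi_L)\cdot(\text{const}) \;\ge\; 0 .
\]
So no line can consist only of negative entries. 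For pure $\rho$ I would also record a stronger form: the line sum vanishes only if $\rho$ is orthogonal to $\Pi_L$.

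\textbf{Quadratic and boundedness constraints.} Next I would use the purity identity $\sum_\alpha W_\rho(\alpha)^2 = 1/4$ and the normalisation $\sum_\alpha W_\rho(\alpha)=1$. I would add the pointwise bounds obtained from the spectrum of $A_\alpha = A_a\otimes A_b$. Since $A_a$ has eigenvalues $(1\pm\sqrt3)/2$, we get $\lambda_{\min}(A_\alpha) = (1-\sqrt3)(1+\sqrt3)/4 = -\tfrac12$, and hence each negative entry is at least $-1/8$. Writing $k$ for the number of negative entries, $P$ for the positive mass and $N$ for the negative mass, this gives
\[
P-N=1,\qquad N\le k/8,\qquad \frac{P^2}{16-k}+\frac{N^2}{k}\;\le\; \tfrac14 .
\]
The last inequality follows from Cauchy--Schwarz applied separately to the positive and negative parts. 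These bounds alone do not exclude $k=7$, so they must be combined with the line constraints.

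\textbf{Combinatorial reduction.} Local Pauli conjugations act as phase-space translations and preserve the number of negative entries. I would use this to normalise the sign pattern, for example by assuming the origin is negative. I would then enumerate the admissible sign patterns with $k\ge 7$ negatives such that no line is entirely negative. For each surviving pattern, I would show that the quadratic system (purity, normalisation, pointwise bounds, line nonnegativity) has no solution realised by a pure state. The cleanest route is to parametrise $\rho=\ketbra{\psi}{\psi}$: each condition $W_\rho(\alpha)<0$ is then $\langle\psi|A_\alpha|\psi\rangle<0$, and the question becomes whether the negative cones of the chosen $A_\alpha$ intersect on the unit sphere of $\mathbb{C}^4$.

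\textbf{Main obstacle.} I expect the last step to be the hardest: excluding the surviving sign patterns in which no line is entirely negative. Line nonnegativity alone allows up to $12$ negatives, since each line can contain three. My first attempt would exploit purity in operator form. If $\langle\psi|A_\alpha|\psi\rangle<0$ on a set $S$ with $|S|=7$, then $\sum_{\alpha\in S} c_\alpha A_\alpha$ would fail to be positive semidefinite for every choice of $c_\alpha\ge0$. Conversely, exhibiting nonnegative weights with $\sum_{\alpha\in S} c_\alpha A_\alpha \succeq 0$ yields a contradiction, because then $\langle\psi|\sum_{\alpha\in S} c_\alpha A_\alpha|\psi\rangle\ge 0$ while every term is negative. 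For each orbit of $7$-sets under translations and the swap, I would search for such positive combinations. This is a small SDP per orbit, and I would expect the certificates to be sums of line projectors together with one extra $A_\alpha$. If a uniform certificate of this kind cannot be found, the fallback is a finite, computer-verified SDP certificate for each orbit.
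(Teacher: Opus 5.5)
The decisive step of your plan, excluding every surviving seven-element sign pattern by a certificate $\sum_{\alpha\in S}c_\alpha A_\alpha\succeq0$, is left as a search. It cannot succeed, because the proposition is false.

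\textbf{Counterexample.} Write $\star=(0,1)$. Take a single-qubit pure state $\phi$ whose Wigner function is only slightly negative at $\star$, $W_\phi(\star)=-\epsilon$, with $W_\phi(a)$ bounded away from $0$ for $a\neq\star$.
\begin{enumerate}[(i)]
\item In $\phi\otimes\phi$ the six points $(\star,b)$ and $(a,\star)$, $a,b\neq\star$, carry $-\epsilon W_\phi(\cdot)=-\Theta(\epsilon)$. The corner entry $W(\star,\star)=\epsilon^2$ is only second order.
\item Since $A_\star\ket{\phi}\not\parallel\ket{\phi}$, the vector $\xi=(1-\ketbra{\phi\phi}{\phi\phi})(A_\star\otimes A_\star)\ket{\phi\phi}$ is nonzero.
\item Replacing $\ket{\phi\phi}$ by $\ket{\phi\phi}-\delta\xi$ shifts $\tr(\rho\,A_\star\otimes A_\star)$ from $4\epsilon^2$ by $-2\delta\|\xi\|^2+O(\delta^2)$.
\item Taking $\delta\asymp\epsilon^2$ makes the $(\star,\star)$ entry negative, while the other six entries move only by $O(\epsilon^2)$ and stay negative.
\end{enumerate}

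\textbf{Concrete instance.} Let $\ket f,\ket g$ be the eigenvectors of $A_{(0,1)}$ with eigenvalues $(1-\sqrt3)/2$ and $(1+\sqrt3)/2$; $\ket f$ is the face state $(1,1,-1)/\sqrt3$. Fix the phase of $\ket g$ so that $\phi=\ket f+\tfrac12\ket g$ has Bloch vector $\approx(0.673,0.673,0.307)$. Set $\psi\propto\phi\otimes\phi-0.01\,\ket g\otimes\ket g$. Then
\begin{itemize}
\item $W_\psi\approx-0.0014$ at $((0,1),(0,1))$;
\item $W_\psi\approx-0.0085$ at $((0,1),(0,0))$;
\item $W_\psi\approx-0.0010$ at $((0,1),(1,0))$ and at $((0,1),(1,1))$;
\item the same values hold on the column through $(0,1)$, by swap symmetry.
\end{itemize}
That is seven negative entries. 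The corner value can be checked independently of phases: with $\psi=\ket{ff}+\tfrac12\ket{fg}+\tfrac12\ket{gf}+0.24\ket{gg}$, one has $\|\psi\|^2\,\tr(\rho\,A_\star\otimes A_\star)=\tfrac{2-\sqrt3}{2}-\tfrac14+\tfrac{2+\sqrt3}{2}(0.24)^2\approx-0.0085$.

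\textbf{Where your scheme breaks.} The cross $S=(\{\star\}\times\mathbb F_2^2)\cup(\mathbb F_2^2\times\{\star\})$ passes all your line tests. Every rectangle $L_1\times L_2$ contains a point off the cross, and a permutation-type line would need three entries in column $\star$. Yet no nonnegative combination of $\{A_\alpha\}_{\alpha\in S}$ can be positive semidefinite, since $\psi$ makes every term negative. So neither the uniform certificate you hoped for nor the computer-verified fallback exists for this orbit. Your quadratic constraints, which you rightly note do not exclude $k=7$, are consistent with this.

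\textbf{The paper's argument.} It does not rescue the claim either. It uses $W\le1/4$, which is false: $W_{\rho_g\otimes\rho_g}$ reaches $(2+\sqrt3)/8$ at $((0,0),(0,0))$. It also derives $4+k\le16-k$ from two \emph{upper} bounds on $N$, which would require $N\ge k/4$.

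\textbf{What does hold.} The product-state version is true. A single-qubit Wigner function has at most one negative entry: two would need $\vec r\cdot(\vec t_a+\vec t_{a'})<-2$ with $\|\vec t_a+\vec t_{a'}\|_2=2$. Hence a product state has at most $1\cdot3+3\cdot1=6$ negative entries. This is all the paper's remark about $\rho_f\otimes\rho_f$ actually needs.
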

\begin{proof}
Let $k$ be the number of negative entries and $N = \sum_{W<0}|W(\alpha)|$
the total negativity.
From the bounds $W(\alpha)\in[-1/4,1/4]$: $N\leq k/4$ and
$1+N\leq(16-k)/4$.
Combining: $4+k\leq 16-k$, hence $k\leq 6$.
\end{proof}

\section{Proofs for the Exact $\kappa$ Results}
\label{app:kappa_proofs}

\subsection{Wigner Function of the $R_y$ Family}

The density matrix $\rho_\theta^{Ry}$ has:
\begin{equation}
  8W_{\rho^{Ry}} = \mathrm{diag}(1,1,1,1) + \sin\theta\,M_X,
\end{equation}
where $M_X$ encodes $X_L$ coherence.
The four negative entries at $\theta\in(0,\pi)$ are at the $\pm Y_L$
phase-space points, each with value $-\frac{1}{8}\sin\theta$.

\subsection{Wigner Function of the $R_x$ Family}

$\rho_\theta^{Rx}$ has $Y_L$-coherence instead of $X_L$-coherence.
The two negative entries are at the $\pm X_L$ phase-space points.
The total negative Wigner mass is half that of the $R_y$ family,
giving $C^{Rx} = \frac{1}{2}C^{Ry}$.

\subsection{Full Proof of \Cref{thm:kappa_ry_rx}}

\begin{proof}
Write $c=\cos\theta$, $s=\sin\theta$.

\textbf{$R_y$, upper bound.}
For $\theta\in(0,\pi/2)$, the optimal free state is
\[
  \sigma^* = s\,\ketbra{+_L}{+_L} + (1-s)\,\ketbra{0_L}{0_L}.
\]
Direct computation gives $\|W_{\rho^{Ry}}-W_{\sigma^*}\|_1 = |s|+|c|-1$.

\textbf{$R_y$, lower bound.}
$H^*(\theta) = \operatorname{sign}(\cos\theta)\cdot\ZL + \XL$
satisfies $\max_\sigma\tr(H^*\sigma)=1$ and
$\tr(H^*\rho_\theta^{Ry}) = |c|+s = \sct$ for $\theta\in(0,\pi)$.
Hence $C^{Ry} \geq \sct - 1$.

\textbf{$\Gamma^{Ry}$.}
For $\theta\in(0,\pi/2)$ the decomposition
\[
  \rho_\theta^{Ry} = c\,\ketbra{00}{00}
  + \tfrac{1{-}c{+}s}{2}\ketbra{+_L}{+_L}
  - \tfrac{c{+}s{-}1}{2}\ketbra{-_L}{-_L}
\]
achieves $\ell_1$ weight
$c + \tfrac{1{-}c{+}s}{2} + \tfrac{c{+}s{-}1}{2} = c+s = \sct$.
The witness $H^*$ gives $\Gamma \geq \sct$, so $\Gamma^{Ry} = \sct$
and $\kappa^{Ry} = 1$.

\textbf{$R_x$ family.}
The Wigner function has 2 negative entries vs.\ 4 for $R_y$, giving
$C^{Rx} = \frac{1}{2}C^{Ry}$.
The same decomposition gives $\Gamma^{Rx} = \Gamma^{Ry}$,
so $\kappa^{Rx} = 2$.
\end{proof}

\subsection{Full Proof of \Cref{thm:kappa_BRz}}

\begin{proof}
From~\eqref{eq:W_BRz}, the four negative entries each contribute
$|c|/8$ or $|s|/8$ to the total negativity.
After accounting for the normalization constraint, $C = |s|+|c|-1$.
The witness $H^*(\theta) = \operatorname{sign}(c)\cdot\XL
+ \operatorname{sign}(s)\cdot\YL$
satisfies $\tr(H^*\rho_\theta) = |c|+|s|$ and
$\max_\sigma\tr(H^*\sigma)=1$, establishing the lower bound.
The Bell-state LC-orbit decomposition achieves $\Gamma = |s|+|c|$.
\end{proof}

\subsection{Submultiplicativity of $\Gamma$}

\begin{theorem}
$\Gamma(\rho\otimes\sigma) \leq \Gamma(\rho)\Gamma(\sigma)$.
\end{theorem}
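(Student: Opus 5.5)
The plan is to take optimal decompositions of $\rho$ and $\sigma$, tensor them together, and check that the product is an admissible decomposition for $\rho\otimes\sigma$ whose $\ell_1$ weight is exactly the product of the two weights. Since $\Gamma(\rho\otimes\sigma)$ is a minimum over admissible decompositions, this immediately gives the claimed inequality.

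\textbf{Step 1 (optimal factors).} Fix optimal decompositions from the definition in Eq.~\eqref{eq:Gamma}:
\[
  \rho=\textstyle\sum_i \alpha_i\sigma_i,\qquad \sigma=\textstyle\sum_j \beta_j\tau_j ,
\]
with $\sigma_i\in\Stab_n$, $\tau_j\in\Stab_m$, and $\sum_i|\alpha_i|=\Gamma(\rho)$, $\sum_j|\beta_j|=\Gamma(\sigma)$. The minimum is attained because $\Stab_n$ is finite, so the optimisation is a finite-dimensional linear program that is feasible and bounded. If one prefers to avoid this, take $\epsilon$-optimal decompositions instead and let $\epsilon\to0$ at the end.

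\textbf{Step 2 (tensor the decompositions).} By bilinearity of the tensor product,
\[
  \rho\otimes\sigma=\textstyle\sum_{i,j}\alpha_i\beta_j\,(\sigma_i\otimes\tau_j).
\]
The only point needing an argument is that each $\sigma_i\otimes\tau_j$ is a pure stabilizer state in $\Stab_{n+m}$. To see this, let $\sigma_i$ be stabilized by the abelian group $S\le\mathcal P_n$ and $\tau_j$ by $S'\le\mathcal P_m$. Then $\sigma_i\otimes\tau_j$ is the unique $+1$ common eigenstate of $\{g\otimes h: g\in S,\ h\in S'\}$. This set is an abelian subgroup of $\mathcal P_{n+m}$ of order $2^n\cdot2^m$, and it does not contain $-I$. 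Hence the expansion above is an admissible real affine decomposition of $\rho\otimes\sigma$ into pure stabilizer states.

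\textbf{Step 3 (compute the weight).} The $\ell_1$ weight of this decomposition is
\[
  \textstyle\sum_{i,j}|\alpha_i\beta_j|=\bigl(\sum_i|\alpha_i|\bigr)\bigl(\sum_j|\beta_j|\bigr)=\Gamma(\rho)\Gamma(\sigma).
\]
It is an upper bound on the minimum in the definition of $\Gamma(\rho\otimes\sigma)$, which gives $\Gamma(\rho\otimes\sigma)\le\Gamma(\rho)\Gamma(\sigma)$.

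None of these steps is a real obstacle. The only point needing care is the stabilizer-closure check in Step 2.
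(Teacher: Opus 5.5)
Your proposal is correct and follows the same route as the paper: tensor the two optimal decompositions and use $\sum_{i,j}|\alpha_i\beta_j|=\bigl(\sum_i|\alpha_i|\bigr)\bigl(\sum_j|\beta_j|\bigr)$. You also check two points the paper leaves implicit, namely that the minimum is attained and that each $\sigma_i\otimes\tau_j$ is a pure stabilizer state.
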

\begin{proof}
Tensor the optimal decompositions:
$\rho\otimes\sigma = \sum_{i,j}\alpha_i\beta_j(\sigma_i\otimes\tau_j)$
gives $\Gamma(\rho\otimes\sigma) \leq \sum_{ij}|\alpha_i\beta_j|
= \Gamma(\rho)\Gamma(\sigma)$.
\end{proof}
\end{document}